\DeclareSymbolFontAlphabet{\mathbb}{AMSb}
\DeclareSymbolFontAlphabet{\mathbbl}{bbold}
\DeclareSymbolFont{fouriernc}{FML}{fncmi}{m}{it}%
\DeclareMathAccent{\fvec}{0}{fouriernc}{"7E}
\newtheorem{theorem}{Theorem}
\newtheorem{definition}{Definition}
\newtheorem{proposition}{Proposition}
\newtheorem{example}{Example}
\newtheorem{assumption}{Assumption}
\newtheorem*{assumption*}{Assumption}
\newtheorem{remark}{Remark}
\newcommand{\R}{\ensuremath{\mathbb{R}}}
\newcommand{\C}{\ensuremath{\mathbb{C}}}
\renewcommand{\S}{\ensuremath{\mathbb{S}}}
\newcommand{\meas}{\ensuremath{\overset{\mathsmaller{M}}{=}}}
\newcommand{\vh}{\ensuremath{\fvec{h}}}
\newcommand{\vj}{\ensuremath{\fvec{\jmath}}}
\newcommand{\vk}{\ensuremath{\fvec{k}}}
\newcommand{\va}{\ensuremath{\fvec{a}}}
\renewcommand{\v}[1]{\ensuremath{\fvec{#1}}}
\DeclarePairedDelimiter\bra{\langle}{\rvert}
\DeclarePairedDelimiter\ket{\lvert}{\rangle}
\DeclarePairedDelimiterX\braket[2]{\langle}{\rangle}{#1 \delimsize\vert #2}
\DeclarePairedDelimiterX\dotp[2]{\langle}{\rangle}{#1, #2}
\DeclareRobustCommand{\optbox}[2][gray!20]{%
  \begin{tcolorbox}[   
      breakable,
      parbox=false,
      left=0pt,
      right=0pt,
      top=0pt,
      bottom=0pt,
      colback=#1,
      colframe=#1,
      width=\dimexpr\textwidth\relax,
      enlarge left by=0mm,
      boxsep=5pt,
      arc=0pt,outer arc=0pt,
    ]
    #2
  \end{tcolorbox}
}
\begin{document}
\thispagestyle{empty}
\begin{center}

{\LARGE An Introduction to Quantum Computing, Without the Physics}
\par \bigskip
     {\sc Giacomo Nannicini} \\
     IBM T.J.~Watson, Yorktown Heights, NY \\
\url{nannicini@us.ibm.com}
\par \bigskip
\par Last updated: \today.
\end{center}
\par \bigskip

\begin{abstract}
  This paper is a gentle but rigorous introduction to quantum
  computing intended for discrete mathematicians. Starting from a
  small set of assumptions on the behavior of quantum computing
  devices, we analyze their main characteristics, stressing the
  differences with classical computers, and finally describe two
  well-known algorithms (Simon's algorithm and Grover's algorithm)
  using the formalism developed in previous sections. This paper does
  not touch on the physics of the devices, and therefore does not
  require any notion of quantum mechanics. Numerical examples on an
  implementation of Grover's algorithm using open-source software are
  provided.
\end{abstract}

\section{Introduction}
\label{sec:intro}
Quantum computing is a relatively new area of computing that has the
potential to greatly speed up the solution of certain
problems. However, quantum computers work in a fundamentally different
way than classical computers. This introduction aims to explain the
basic principles underpinning quantum computing. It assumes the reader
is at ease with linear algebra, and with basic concepts in classical
computing such as Turing machines, and algorithm complexity.

The literature contains many textbooks on quantum computing: a
comprehensive reference is \cite{nielsen02quantum}, whereas more
modern textbooks that aim to be more accessible to non-physicists are
\cite{mermin07quantum,rieffel07quantum}. However, those books are
time-consuming reads. There are not many short introductions that are
truly accessible to non-physicists: \cite{rieffel00introduction} is
noteworthy, as it actually uses very little physics.

The approach used in this tutorial is, as far as we are aware,
different from the literature in the sense that it abstracts {\em
  entirely} away from quantum physics: we study a quantum computing
device starting from a small set of assumptions and rigorously derive
the remaining properties, focusing on the concepts that are necessary
to discuss quantum algorithms. The assumptions are verified in the
real world because of the laws of quantum mechanics, but it is not
necessary to understand why they hold: as long as we are willing to
take a small leap of faith and believe that these assumptions are
true, the rest will follow. The exposition in this tutorial is more
formal than in other surveys in the literature, but in some sense more
mathematically precise: it defines the necessary concepts in a
rigorous ways, rather than relying on examples or intuition, and
provides formal proofs. For this reason, this material is especially
suitable for students and researchers in various branches of applied
mathematics, who will be familiar with the (as much as possible)
deductive structure of this tutorial.

It is important to emphasize that the notation used in this tutorial
is often non standard: our choices are meant to facilitate understanding
for people who are just learning the basics of the field, therefore we
are mainly concerned with clarity, rather than eliminating
redundancy. However, in a short paragraph at the end of the tutorial
we highlight some of the major differences between our notation and
what is typically found in the literature.

\subsection{Overview}
The tutorial is structured as follows.
\begin{itemize}
\item In the rest of this section we discuss notation and linear algebra preliminaries.
\item In Section \ref{sec:qubits} we define the state of a quantum
  computer.
\item In Section \ref{sec:operations} we discuss the operations that
  can be applied by a quantum computer.
\item In Section \ref{sec:simon} we analyze Simon's algorithm, which
  gives an example of a fundamental principle in quantum algorithms
  known as {\em destructive interference}.
\item In Section \ref{sec:grover} we analyze Grover's algorithm,
  showcasing {\em amplitude amplification}, another fundamental
  principle in quantum algorithms.
\item Section \ref{sec:code} shows how to implement Grover's
  algorithm using Qiskit, an open-source Python library for quantum
  computation.
\item Finally, Section \ref{sec:furtherreading} contains notes for
  further reading.
\end{itemize}
The material in this tutorial is developed to support a graduate-level
module in quantum computing. In our experience with blackboard-style
delivery in the classroom, the material can be split into four modules
of 90-120 minutes each, covering Sections
\ref{sec:intro}-\ref{sec:qubits}, Section \ref{sec:operations},
Section \ref{sec:simon}, and Section \ref{sec:grover} respectively;
plus, if desired, a hands-on class on numerics using Section
\ref{sec:code}, which usually requires 90-120 minutes as well. In the
classroom, we suggest introducing Definitions
\ref{def:binaryrep}-\ref{def:bullet} only when they are used, and of
course many of the details can be skipped, adjusting the flow as
necessary: we highlight with light gray background parts of the
material that can be skipped, or briefly summarized, without
significantly impairing understanding of subsequent parts.

\subsection{Model of computation}
\label{sec:model}
The quantum computing device is, in abstract terms, similar to a
classical computing device: it has a state, and the state of the
device evolves by applying certain operations. The model of
computation that we consider is the quantum circuit model, which works
as follows:
\begin{enumerate}
\item The quantum computer has a {\em state} that is contained in a
  quantum register and is initialized in a predefined way.
\item The state evolves by applying {\em operations} specified in
  advance in the form of an algorithm.
\item At the end of the computation, some information on the state of
  the quantum register is obtained by means of a special operation,
  called a {\em measurement}.
\end{enumerate}
All terms in italics will be the subject of the assumptions mentioned
earlier, upon which our exposition will build. Note that this type of
computing device is similar to a Turing machine, except for the
presence of a tape. It is possible to assume the presence of a tape
and be more formal in defining a device that is the quantum equivalent
of a Turing machine, but there is no need to do so for the purposes of
this work; fundamental results regarding universal quantum computers
(i.e., the quantum equivalent of a universal Turing machine) are
presented in
\cite{deutsch85quantum,yao1993quantum,bernstein1997quantum}.

We will use the quantum circuit model throughout this tutorial,
including in the numerical example of Section \ref{sec:code}. This
model of computation closely matches that of certain quantum hardware
technologies that are used by some of the major players in the field
\cite{castelvecchi2017leap}, although we should note that the hardware
is affected by noise and therefore it does not provide an exact
implementation of the theoretical model. To understand the effect of
noise, we can give the following simple, but overall quite accurate,
intuitive explanation. According to the model of computation, the
state evolves by applying operations, and some information on the
state can be extracted via a measurement; due to noise, the state may
not evolve in the desired way (e.g., applying a certain operation on
the state $s_1$ should yield the state $s_2$, but we obtain a
different state $s_3$ instead), or the information extracted by a
measurement may not be what it is supposed to be (e.g., a measurement
should produce the output $0$ with probability $p_1$, but it produces
$0$ with a different probability $p_2$ instead).

Since this tutorial aims to be ``physics-free'', we will not dicuss
the specifics of existing quantum hardware that follows the circuit
model anymore. However, we should note that a different model for
quantum computing exists, and it is the so-called adiabatic model. We
do not discuss the adiabatic model for two reasons: first, the
adiabatic and the circuit model are equivalent
\cite{aharonov2008adiabatic}, therefore we are free to choose whatever
model is more convenient; second, the circuit model is more natural
for computer scientists, and is the one used in most textbooks on
quantum computing.

\subsection{Basic definitions and notation}
\label{sec:notation}
A discussion on quantum computers requires working with the decimal
and the binary representation of integers, some bit operations, and
familiarity with the properties of the tensor product. We describe
here the necessary concepts and the notation, so that the reader can
come back to this section at any time to clarify symbols.

\begin{definition}
  \label{def:tensor}
  Given two vector spaces $V$ and $W$ over a field $K$ with bases
  $e_1,\dots,e_m$ and $f_1,\dots,f_n$ respectively, the {\em tensor
    product} $V \otimes W$ is another vector space over $K$ of
  dimension $mn$. The tensor product space is equipped with a bilinear
  operation $\otimes : V \times W \to V \otimes W$. The vector space
  $V \otimes W$ has basis $e_i \otimes f_j \; \forall i=1,\dots,m,
  j=1,\dots,n$.
\end{definition}
If the origin vector spaces are complex Euclidean spaces of the
form $\C^n$, and we choose the standard basis (consisting of the
orthonormal vectors that have a $1$ in a single position and 0
elsewhere) in the origin vector spaces, then the tensor product is
none other than the Kronecker product, which is itself a
generalization of the outer product. This is formalized next.
\begin{definition}
  Given $A \in \C^{m \times n}, B \in \C^{p \times q}$, the {\em
    Kronecker product} $A \otimes B$ is the matrix $D \in \C^{mp \times
    nq}$ defined as:
  \begin{equation*}
    D := A \otimes B = \begin{pmatrix} a_{11} B & \dots & a_{1n} B\\
      a_{21} B & \dots & a_{2n} B \\
      \vdots & & \vdots \\
      a_{m1} B & \dots & a_{mn} B
      \end{pmatrix}.
  \end{equation*}
  If we choose the standard basis over the vector spaces $\C^{m \times
    n}$ and $\C^{p \times q}$, then the bilinear operation $\otimes$ of
  the tensor product $\C^{m \times n} \otimes \C^{p \times q}$ is
  simply the Kronecker product.
\end{definition}
In this tutorial we always work with complex Euclidean spaces of the
form $\C^n$, using the standard basis. With a slight but common abuse
of notation, we will therefore use tensor product to refer to the
Kronecker and outer products.

\begin{example}
  \label{ex:tensorproduct}
  We provide an example of the tensor product for normalized vectors,
  which will link this concept to probability distributions and will
  hopefully provide a better understanding of some of the future
  material. Consider two independent discrete random variables $X$ and
  $Y$ that describe the probability of extracting numbers from two
  urns. The first urn contains the numbers $0$ and $1$, the second urn
  contains the numbers $00, 01, 10, 11$. Assume that the extraction
  mechanism is biased and therefore the outcomes do not have equal
  probability. The outcome probabilities are given below, and for
  convenience we define two vectors containing them:
  \begin{align*}
    x = \begin{pmatrix} \Pr(X = 0) \\ \Pr(X = 1) \end{pmatrix} =
    \begin{pmatrix} 0.25 \\ 0.75 \end{pmatrix} \qquad
    y = \begin{pmatrix} \Pr(Y = 00) \\ \Pr(Y = 01) \\ \Pr(Y = 10) \\ \Pr(Y = 11) \end{pmatrix} = \begin{pmatrix} 0.2 \\ 0.2 \\ 0.2 \\ 0.4 \end{pmatrix}.
  \end{align*}
  Notice that because each vector contains probabilities for all
  possibile respective outcomes, the vectors are normalized so that
  their entries sum up to 1. Then, the joint probabilities for
  simultaneously extracting numbers from the two urns are given by the
  tensor product $x \otimes y$:
  \begin{align*}
    x \otimes y = 
    \begin{pmatrix} 0.25 \\ 0.75 \end{pmatrix} \otimes \begin{pmatrix} 0.2 \\ 0.2 \\ 0.2 \\ 0.4 \end{pmatrix}  =
    \begin{pmatrix} 0.05 \\ 0.05 \\ 0.05 \\ 0.1 \\ 0.15 \\ 0.15 \\ 0.15 \\ 0.3 \end{pmatrix} =
    \begin{pmatrix} \Pr(X = 0) \Pr(Y = 00) \\ \Pr(X = 0) \Pr(Y = 01) \\ \Pr(X = 0) \Pr(Y = 10) \\ \Pr(X = 0) \Pr(Y = 11) \\ \Pr(X = 1) \Pr(Y = 00) \\ \Pr(X = 1) \Pr(Y = 01) \\ \Pr(X = 1) \Pr(Y = 10) \\ \Pr(X = 1) \Pr(Y = 11) \end{pmatrix} =
    \begin{pmatrix} \Pr(X = 0, Y = 00) \\ \Pr(X = 0, Y = 01) \\ \Pr(X = 0, Y = 10) \\ \Pr(X = 0, Y = 11) \\ \Pr(X = 1, Y = 00) \\ \Pr(X = 1, Y = 01) \\ \Pr(X = 1, Y = 10) \\ \Pr(X = 1, Y = 11) \end{pmatrix},
  \end{align*}
  where the last equality is due to the fact that $X$ and $Y$ are
  independent. The vector $x \otimes y$ is also normalized, which is
  easy to verify algebraically.
\end{example}

The next proposition states some properties of the tensor product that
will be useful in the rest of this tutorial.
\begin{proposition}
  \label{prop:tensor}
  Let $A, B : \C^{m \times m}, C, D \in \C^{n \times n}$ be linear
  transformations on $V$ and $W$ respectively, $u, v \in \C^m, w, x \in
  \C^n$, and $a, b \in \C$. The tensor product satisfies the following
  properties:
  \begin{enumerate}[(i)]
  \item $(A \otimes C)(B \otimes D) = AB \otimes CD$.
  \item $(A \otimes C)(u \otimes w) = Au \otimes Cw$.
  \item $(u + v)\otimes w = u\otimes w + v\otimes w$.
  \item $u\otimes (w + x) = u\otimes w + u\otimes x$.
  \item $(au) \otimes (bw) = ab (u \otimes w)$.
  \item $(A \otimes C)^* = A^* \otimes C^*$.
  \end{enumerate}
\end{proposition}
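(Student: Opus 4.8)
The plan is to prove all six properties by a direct entrywise computation, exploiting the block structure in the definition of the Kronecker product. The first step is to fix a convenient indexing convention: identify the rows and columns of an $mn \times mn$ matrix with pairs $(i,j)$, $i \in \{1,\dots,m\}$, $j \in \{1,\dots,n\}$, ordered lexicographically (the ``outer'' index coming from the first factor and the ``inner'' index from the second). With this convention the block definition of $A \otimes C$ collapses to the single scalar formula $(A \otimes C)_{(i,j),(k,l)} = a_{ik}\, c_{jl}$, and a vector $u \otimes w \in \C^{mn}$ satisfies $(u \otimes w)_{(i,j)} = u_i\, w_j$. Once this dictionary is in place, properties (iii), (iv), (v) are one-liners: the $(i,j)$-th entry of $(u+v)\otimes w$ is $(u_i + v_i)w_j = u_i w_j + v_i w_j$, which is the $(i,j)$-th entry of $u\otimes w + v\otimes w$, and (iv), (v) are the same computation using bilinearity of multiplication in $\C$ (and commutativity of the scalars in (v)).

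For (i) I would expand the matrix product of $A \otimes C$ and $B \otimes D$ over the intermediate index, which under our convention ranges over pairs $(p,q)$:
\begin{equation*}
  \big((A \otimes C)(B \otimes D)\big)_{(i,j),(k,l)} = \sum_{p=1}^{m} \sum_{q=1}^{n} a_{ip}\, c_{jq}\, b_{pk}\, d_{ql}.
\end{equation*}
The key step is to notice that the summand factors as $(a_{ip} b_{pk})(c_{jq} d_{ql})$, so the double sum separates as $\big(\sum_p a_{ip} b_{pk}\big)\big(\sum_q c_{jq} d_{ql}\big) = (AB)_{ik}\,(CD)_{jl}$, which is exactly the $((i,j),(k,l))$ entry of $AB \otimes CD$. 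Property (ii) is then the special case of (i) obtained by regarding $u \in \C^m$ as an $m \times 1$ matrix and $w \in \C^n$ as an $n \times 1$ matrix, so that $u \otimes w$ is the Kronecker product of these degenerate matrices and $Au$, $Cw$ make sense as such; alternatively, (ii) follows from the same one-line entrywise check applied directly to $(A\otimes C)(u\otimes w)$.

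Finally, for (vi) I would compare entries once more: by the definition of the conjugate transpose together with the dictionary above,
\begin{equation*}
  \big((A \otimes C)^*\big)_{(i,j),(k,l)} = \overline{(A \otimes C)_{(k,l),(i,j)}} = \overline{a_{ki}\, c_{lj}} = \overline{a_{ki}}\;\overline{c_{lj}} = (A^*)_{ik}\,(C^*)_{jl},
\end{equation*}
which is the $((i,j),(k,l))$ entry of $A^* \otimes C^*$. The only genuine subtlety in the whole argument is establishing the lexicographic index pairing carefully enough that the block form of the Kronecker product really does become the product formula $a_{ik}c_{jl}$; once that bookkeeping is done, (i) is the single place that requires an actual idea — the factoring of the double sum — and the remaining parts are immediate consequences or routine verifications.
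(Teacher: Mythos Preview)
Your proof is correct. The paper, however, does not give a proof of this proposition at all: it is stated without proof as a collection of standard facts about the Kronecker product, so there is nothing to compare against. Your entrywise verification via the lexicographic index pairing $(A \otimes C)_{(i,j),(k,l)} = a_{ik}c_{jl}$ is the standard textbook argument, and all six parts go through as you describe.
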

Above and in the following, the notation $A^*$ denotes the conjugate
transpose of $A$, which is the matrix defined as follows: $A^* :=
\bar{A}^{\top}$. Given a matrix $A$, the notation $A^{\otimes n}$
indicates the tensor product of $A$ with itself $n$ times, and the
same notation will be used for vector spaces $\S$: $$A^{\otimes n} :=
\underbrace{A \otimes A \dots \otimes A}_{n \text{ times}}, \qquad
\S^{\otimes n} := \underbrace{\S \otimes \S \dots \otimes \S}_{n
  \text{ times}}.$$

The quantum computing literature refers to a Hilbert space, typically
denoted ${\cal H}$, rather than a complex Euclidean space
$\C^n$. However, the material discussed in this tutorial does not
require any property of Hilbert spaces that is not already present in
complex Euclidean spaces, hence we stick to the more familiar concept.

We will work extensively with binary strings, using
the following definitions.
\begin{definition}
  \label{def:binaryrep}
  For any integer $q > 0$, we denote by $\vj \in \{0,1\}^q$ a binary
  string on $q$ digits, where we use the arrow to emphasize that $\vj$
  is a string of binary digits rather than an integer. Given $\vj \in
  \{0,1\}^q$, we denote its $k$-th digit by $\vj_k$.
\end{definition}
We use the notation $\v{0}$ to denote the all-zero binary string, and
$\v{1}$ to denote the all-one binary string; the size of these strings
will always be clear from the context. We use a little-endian
convention for binary strings, i.e., the first digit is the most
significant one. Thus, the binary string $\vj \in \{0,1\}^q$
corresponds to the decimal number $\sum_{k=1}^q \vj_k 2^{q-k}$.

In the rest of this tutorial, as is frequent in the quantum computing
literature, we use $\vj \in \{0,1\}^q$ to index the elements of
$2^q$-dimensional vectors; such an index is well defined because
$\{0,1\}^q$ has $2^q$ elements.
\begin{definition}
  \label{def:xor}
  For any integer $q > 0$ and binary strings $\vj, \vk \in \{0,1\}^q$, we
  denote by $\vj \oplus \vk$ the bitwise modulo 2 addition of $q$-digit
  strings (bitwise XOR), defined as:
  \begin{equation*}
    \vj \oplus \vk = \vh, \text{ with } \vh \in \{0,1\}^q \text{ and } 
    \v{h}_p = \begin{cases} 0 & \text{if } \vj_p = \vec{k}_p \\
      1 & \text{otherwise} \end{cases} \text{ for all } p=1,\dots,q.
  \end{equation*}
\end{definition}
\begin{definition}
  \label{def:bullet}
  For any integer $q > 0$ and binary strings $\vj, \vk \in \{0,1\}^q$,
  we denote by $\vj \bullet \vk$ the bitwise dot product of $q$-digit
  strings, defined as:
  \begin{equation*}
    \vj \bullet \vk = \sum_{h = 1}^{q} \vj_h \vk_h.
  \end{equation*}
\end{definition}

The last piece of notation that we need is the {\em bra-ket} notation,
used in quantum mechanics. As mentioned earlier, this tutorial will not
touch on any quantum-mechanical concepts, however there is an
undeniable advantage in the quantum notation in that it puts the most
important information in the center of the symbols, rather than
relegate it to a marginal role in the subscript or
superscript. Furthermore, a goal of this work is to equip the reader
with the necessary tools to understand quantum computing papers, hence
it is important to familiarize with the bra-ket notation.
\begin{definition}
  \label{def:braket}
  Given a complex Euclidean space $\S \equiv \C^n$, $\ket{\psi} \in \S$
  denotes a column vector, and $\bra{\psi} \in \S^*$ denotes a row
  vector that is the conjugate transpose of $\ket{\psi}$, i.e.,
  $\bra{\psi} = \ket{\psi}^*$. The vector $\ket{\psi}$ is also called
  a {\em ket}, and the vector $\bra{\psi}$ is also called a {\em bra}.
\end{definition}
Thus, an expression such as $\braket{\psi}{\phi}$ is an inner
product. The complex Euclidean spaces used in this work will be of the
form $(\C^2)^{\otimes q}$, where $q$ is a given integer. It is
therefore convenient to specify the basis elements of such spaces.
\begin{definition}
  \label{def:standardbasis}
  The standard basis for $\C^2$ is denoted by $\ket{0}_1
  = \begin{pmatrix} 1 \\ 0 \end{pmatrix}, \ket{1}_1 = \begin{pmatrix}
    0 \\ 1 \end{pmatrix}$. The standard basis for $(\C^2)^{\otimes
    q}$, which has $2^q$ elements, is denoted by $\ket{\vj}, \vj \in
  \{0,1\}^q$.
\end{definition}

According to our notation, for any $q$-digit binary string $\vj \in
\{0,1\}^q$, $\ket{\vj}$ is the $2^q$-dimensional basis vector in
$(\C^2)^{\otimes q}$ corresponding to the binary string $\vj$. Since
we always use the standard basis and the most natural order for its
vectors, it is easy to verify that for $\vj \in \{0,1\}^q$,
$\ket{\vj}$ is the basis vector with a 1 in position $\sum_{k=1}^q
\vj_k 2^{q-k} + 1$, and 0 elsewhere. For example, $\ket{101}$ is the
8-dimensional basis vector $(0 \, 0 \, 0 \, 0 \, 0 \, 1 \, 0 \,
0)^{\top}$, obtained as the tensor product $\ket{1} \otimes \ket{0}
\otimes \ket{1}$. Whenever useful for clarity, we use a subscript for
bras and kets to denote the dimension of the space that the vector
belongs to, e.g., we write $\ket{\vj}_q$ to emphasize that we are
working in a $2^q$ dimensional space (or, in other words, that the
basis elements of the space are associated with binary strings with
$q$ digits). We typically omit the subscript if the dimension of the
space is evident from the context. We provide a further example of
this notation below.
\begin{example}
  Let us write the basis elements of $(\C^2)^{\otimes 2} = \C^2 \otimes \C^2$:
  \begin{align*}
    \ket{00}_2 &= \ket{00} = \ket{0} \otimes \ket{0} = \begin{pmatrix} 1 \\ 0 \\ 0 \\ 0 \end{pmatrix} &
    \ket{01}_2 &= \ket{01} = \ket{0} \otimes \ket{1} = \begin{pmatrix} 0 \\ 1 \\ 0 \\ 0 \end{pmatrix}\\
    \ket{10}_2 &= \ket{10} = \ket{1} \otimes \ket{0} = \begin{pmatrix} 0 \\ 0 \\ 1 \\ 0 \end{pmatrix} &
    \ket{11}_2 &= \ket{11} = \ket{1} \otimes \ket{1} = \begin{pmatrix} 0 \\ 0 \\ 0 \\ 1 \end{pmatrix}.
  \end{align*}
\end{example}
In the above example we made an exception to our rule and used a
subscript to denote the dimension of the basis vectors, just to
emphasize that $\ket{00}_2$ and $\ket{00}$ are exactly the same. In
the remainder of this paper, we will write $\ket{01}$ rather
than $\ket{01}_2$ because it is clear that the basis element
$\ket{01}$ has two digits and therefore lives in the space
$(\C^2)^{\otimes 2}$.

To improve clarity when dealing with vectors in $(\C^2)^{\otimes q}$,
we always denote basis vectors using spelled-out binary strings or
Roman letters, (e.g., $\ket{01}, \ket{\vj}, \ket{\v{h}}, \ket{\v{x}},
\ket{\v{y}}$ all denote basis vectors), whereas we use Greek letters
to denote vectors that may not be basis vectors (e.g., $\ket{\psi},
\ket{\phi}$ all denote vectors that may not be basis vectors). In the
same spirit, single-digit binary numbers are always denoted with Roman
letters (e.g., $x, y, z$ denote a $0$ or a $1$).


\section{Qubits and quantum states}
\label{sec:qubits}
According to our computational model, a quantum computing device has a
state that is stored in the quantum register. Qubits are the quantum
counterpart of the bits found in classical computers: a classical
computer has registers that are made up of bits, whereas a quantum
computer has a single quantum register that is made up of qubits. The
assumption that there is a single quantum register is without loss of
generality, as one can think of multiple registers as being placed
``side-by-side'' to form a single register (of course, one would then
need to specify what operations are allowed on the resulting
register). The state of the quantum register, and therefore of the
quantum computing device, is defined next.
\begin{assumption}
  \label{ass:state}
  The state of a $q$-qubit quantum register is a unit vector in
  $\left(\C^2\right)^{\otimes q} = \underbrace{\C^2 \otimes \dots
    \otimes \C^2}_{q \text{ times}}$.
\end{assumption}
\begin{remark}
  A vector $\ket{\psi} \in \C^n$ is a unit vector if $\|\ket{\psi}\| =
  \sqrt{\braket{\psi}{\psi}} = 1$.
\end{remark}
\begin{remark}
  Choosing the standard basis for $\C^2$, the state of a 1-qubit
  register ($q = 1$) can be represented as $\alpha \ket{0} + \beta
  \ket{1} = \alpha \begin{pmatrix} 1 \\ 0 \end{pmatrix} +
  \beta \begin{pmatrix} 0 \\ 1 \end{pmatrix} = \begin{pmatrix} \alpha
    \\ \beta \end{pmatrix}$ where $\alpha, \beta \in \C$ and
  $|\alpha|^2 + |\beta|^2 = 1$.
\end{remark}
\begin{remark}
  Given the standard basis for $\C^2$, a basis for
  $\left(\C^2\right)^{\otimes q}$ is given by the following $2^q$
  vectors:
  \begin{align*}
    \vert \underbrace{00 \cdots 00}_{q \text{ digits}} \rangle &=
    \underbrace{\ket{0} \otimes \dots \otimes \ket{0} \otimes
      \ket{0}}_{q \text{ times}} \\
    \vert \underbrace{00 \cdots 01}_{q \text{ digits}} \rangle &=
    \underbrace{\ket{0} \otimes \dots \otimes \ket{0} \otimes
      \ket{1}}_{q \text{ times}} \\
    & \vdots \\
    \vert \underbrace{11 \cdots 11}_{q \text{ digits}} \rangle &=
    \underbrace{\ket{1} \otimes \dots \otimes \ket{1} \otimes
      \ket{1}}_{q \text{ times}}.
  \end{align*}
  In more compact form, the vectors are denoted by $\ket{\vj}, \vj \in
  \{0,1\}^q$. The state of a $q$-qubit quantum register can then be
  represented as: $\ket{\psi} = \sum_{\vj \in \{0,1\}^q} \alpha_{\vj}
  \ket{\vj}_q$, with $\alpha_{\vj} \in \C$ and $\sum_{\vj \in
    \{0,1\}^q} |\alpha_{\vj}|^2 = 1$.
\end{remark}
For brevity, we often write ``state of $q$-qubits'' or ``$q$-qubit
state'' to refer to the state of a $q$-qubit quantum register. This is
common in the literature, where the discussion of qubits is not
necessarily limited to the context of quantum registers. By properties
of the tensor product, we will see that sometimes it is appropriate to
refer to the state of just some of the qubits of a quantum computing
device, rather than all of them, and this may still be a well-defined
concept. We will revisit this in Section~\ref{sec:entanglement}.

It is important to remark that $\left(\C^2\right)^{\otimes q}$ is a
$2^q$-dimensional space. This is in sharp contrast with the state of
classical bits: given $q$ classical bits, their state is a binary
string in $\{0,1\}^q$, which is a $q$-dimensional space. In other
words, the dimension of the state space of quantum registers grows
{\em exponentially} in the number of qubits, whereas the dimension of
the state space of classical registers grows {\em linearly} in the
number of bits. Furthermore, to represent a quantum state we need
complex coefficients: the state of a $q$-qubit quantum register is
described by $2^q$ complex coefficients, which is an enormous amount
of information compared to what is necessary to describe a $q$-bit
classical register. However, we will see in
Section~\ref{sec:measurement} that a quantum state cannot be accessed
directly, therefore even if a description of the quantum state
requires infinite precision in principle, we cannot access such
description as easily as with classical registers. In fact, as it
turns out we cannot extract more than $q$ bits of information out of a
$q$-qubit register! This will be intuitively clear after stating the
effect of quantum measurements in Section~\ref{sec:measurement}; for a
formal proof, see \cite{holevo1973bounds}.

\subsection{Basis states and superposition}
\label{sec:superposition}
We continue our study of the state of quantum registers by discussing
the concept of superposition.
\begin{definition}
  \label{def:superposition}
  We say that $q$ qubits are in a {\em basis state} if the state
  $\ket{\psi} = \sum_{\vj \in \{0,1\}^q} \alpha_{\vj} \ket{\vj}_q$ of
  the corresponding register is such that $\exists \vk :
  |\alpha_{\vk}| = 1$, $\alpha_{\vj} = 0 \; \forall \vj \neq
  \vk$. Otherwise, we say that they are in a {\em superposition}.
\end{definition}
\begin{remark}
  A simpler, more intuitive definition would be to say that a basis
  state is such that $\ket{\psi} = \ket{\vk}$ for some $\vk \in
  \{0,1\}^q$. It is acceptable to use the simpler definition if
  desired: as it turns out, even if the states $\alpha_{\vk}
  \ket{\vk}$ for some $\vk \in \{0,1\}^q$ and $|\alpha_{\vk}|^2 = 1$
  are all different in principle, they are equivalent to $\ket{\vk}$
  up to the multiplication factor $\alpha_{\vk}$ which will be seen to
  be unimportant in Example \ref{ex:globalphase}.
\end{remark}

\begin{example}
  Consider two $1$-qubit registers and their states $\ket{\psi},
  \ket{\phi}$:
  \begin{align*}
    \ket{\psi} &= \alpha_0 \ket{0} + \alpha_1 \ket{1} \\
    \ket{\phi} &= \beta_0 \ket{0} + \beta_1 \ket{1}.
  \end{align*}
  If we put these $1$-qubit registers side-by-side to form a $2$-qubit
  register, then the $2$-qubit register will be in state:
  \begin{align*}
    \ket{\psi} \otimes \ket{\phi} = \alpha_0 \beta_0 \ket{0}
    \otimes \ket{0} + \alpha_0 \beta_1 \ket{0} \otimes \ket{1} +
    \alpha_1 \beta_0 \ket{1} \otimes \ket{0} + \alpha_1 \beta_1
    \ket{1} \otimes \ket{1}.
  \end{align*}
  If both $\ket{\psi}$ and $\ket{\phi}$ are in a basis state, we have
  that either $\alpha_0$ or $\alpha_1$ is zero, and similarly either
  $\beta_0$ or $\beta_1$ is zero, while the nonzero coefficients have
  modulus one. Thus, only one of the coefficients in the expression of
  the state of $\ket{\psi} \otimes \ket{\phi}$ is nonzero, and in fact
  its modulus is one. This implies that if both $\ket{\psi}$ and
  $\ket{\phi}$ are in a basis state, $\ket{\psi} \otimes \ket{\phi}$
  is in a basis state as well. But now assume that $\alpha_0 = \beta_0
  = \alpha_1 = \beta_1 = \frac{1}{\sqrt{2}}$: the qubits $\ket{\psi}$
  and $\ket{\phi}$ are in a superposition. Then the state of
  $\ket{\psi} \otimes \ket{\phi}$ is $\frac{1}{2} \ket{00} +
  \frac{1}{2} \ket{01} + \frac{1}{2} \ket{10} + \frac{1}{2} \ket{11}$,
  which is a superposition as well. Notice that the normalization of
  the coefficients works out, as one can easily check with simple
  algebra: the tensor product of unit vectors is also a unit vector.
\end{example}

The example clearly generalizes to an arbitary number of
qubits. In fact the following proposition is trivially true:
\begin{proposition}
  For any $q$, a $q$-qubit register is in a basis state if and only if
  its state can be expressed as the tensor product of $q$ 1-qubit
  registers, each of which is in a basis state.
\end{proposition}
Notice that superposition does not have a classical equivalent: $q$
classical bits are always in a basis state, i.e., a $q$-bit
classical register will always contain exactly one of the $2^q$ binary
strings in $\{0,1\}^q$. Indeed, superposition is one of the main
features of quantum computers that differentiates them from classical
computers. The second important feature is entanglement, that will be
discussed next.

\subsection{Product states and entanglement}
\label{sec:entanglement}
We have seen that the state of a $q$-qubit register is a vector in
$\left(\C^2\right)^{\otimes q}$, which is a $2^q$ dimensional
space. Since this is a tensor product of $\C^2$, i.e., the space in
which 1-qubit states live, it is natural to ask whether moving from
single qubits to multiple qubits gained us anything at all. In other
words, we want to investigate whether the quantum states that are
representable on $q$ qubits are simply the tensor product of $q$
1-qubit states. We can answer this question by using the definitions
given above. The state of $q$ qubits is a unit vector in
$\left(\C^2\right)^{\otimes q}$, and it can be written as:
\begin{equation*}
  \ket{\psi} = \sum_{\vj \in \{0,1\}^q} \alpha_{\vj} \ket{\vj}_q, \qquad
  \sum_{\vj \in \{0,1\}^q} |\alpha_{\vj}|^2 = 1.
\end{equation*}
Now let us consider the tensor product of $q$ 1-qubit states, the
$k$-th of which is given by $\beta_{k,0} \ket{0} + \beta_{k,1}
\ket{1}$, for $k=1,\dots,q$ (the first qubit corresponds to the most
significant bit, according to the little-endian convention). Taking
the tensor product we obtain the vector:
\begin{align*}
  \ket{\phi} &= (\beta_{1,0} \ket{0} + \beta_{1,1} \ket{1}) \otimes
  (\beta_{2,0} \ket{0} + \beta_{2,1} \ket{1}) \otimes \dots \otimes
  (\beta_{q,0} \ket{0} + \beta_{q,1} \ket{1}) \\
  &= \sum_{j_1=0}^{1} \sum_{j_{2}=0}^{1} \cdots
  \sum_{j_q=0}^1 \prod_{k=1}^{q} \beta_{k,j_k} | \underbrace{j_1 j_2
    \dots j_q}_{\substack{\text{taken as a} \\ \text{binary string}}}
  \rangle = \sum_{\vj \in \{0,1\}^q} \prod_{k=1}^{q} \beta_{k,
    \vj_k} \ket{\vj}_q, \\
  & \text{satisfying } |\beta_{k,0}|^2 + |\beta_{k,1}|^2 = 1
  \; \forall k=1,\dots,q.
\end{align*}
The normalization condition for $\ket{\phi}$ implies the normalization
condition of $\ket{\psi}$, but the converse is not true. That is,
$|\beta_{k,0}|^2 + |\beta_{k,1}|^2 = 1 \; \forall k=1,\dots,q$ implies
$\sum_{j_1=0}^{1} \sum_{j_2=0}^{1} \cdots \sum_{j_q=0}^1
\left|\prod_{k=1}^{q} \beta_{k,j_k}\right|^2 = 1$, but not
viceversa. This means that there exist values of $\alpha_{\vj}$, with
$\sum_{\vj \in \{0,1\}^q} |\alpha_{\vj}|^2 = 1$, that cannot be
expressed as coefficients $\beta_{k,0}, \beta_{k,1}$ (for
$k=1,\dots,q$) satisfying the conditions for $\ket{\phi}$.

\optbox{
This is easily clarified with an example.
\begin{example}
  \label{ex:productstate}
  Consider two 1-qubit states:
  \begin{align*}
    \ket{\psi} &= \alpha_0 \ket{0} + \alpha_1 \ket{1} \\
    \ket{\phi} &= \beta_0 \ket{0} + \beta_1 \ket{1}.
  \end{align*}
  Taking the two qubits together in a 2-qubit register, the state of
  the 2-qubit register is:
  \begin{equation}
    \begin{split}
      \ket{\psi} \otimes \ket{\phi} = \alpha_0 \beta_0 \ket{00} + \alpha_0
      \beta_1 \ket{01} + \alpha_1 \beta_0 \ket{10} + \alpha_1 \beta_1
      \ket{11},
    \end{split} \label{eq:productstate}
  \end{equation}
  with the normalization conditions $|\alpha_0|^2 + |\alpha_1|^2 = 1$
  and $|\beta_0|^2 + |\beta_1|^2 = 1$.  The general state of a 2-qubit
  register $\ket{\xi}$ is:
  \begin{equation}
  \label{eq:generalstate}
  \ket{\xi} = \gamma_{00} \ket{00} + \gamma_{01} \ket{01} +
  \gamma_{10} \ket{10} + \gamma_{11} \ket{11},
  \end{equation}
  with normalization condition $|\gamma_{00}|^2 + |\gamma_{01}|^2 +
  |\gamma_{10}|^2 + |\gamma_{11}|^2 = 1$. Comparing equations
  \eqref{eq:productstate} and \eqref{eq:generalstate}, we determine that
  $\ket{\xi}$ is of the form $\ket{\psi} \otimes \ket{\phi}$ if and only if
  it satisfies the relationship:
  \begin{equation}
    \label{eq:gammarel}
    \gamma_{00} \gamma_{11} = \gamma_{01} \gamma_{10}.
  \end{equation}
  Clearly $\ket{\psi} \otimes \ket{\phi}$ yields coefficients that satisfy
  this condition. To see the converse, let $\theta_{00}, \theta_{01},
  \theta_{10}, \theta_{11}$ be the phases of $\gamma_{00}, \gamma_{01},
  \gamma_{10}, \gamma_{11}$. Notice that \eqref{eq:gammarel} implies:
  \begin{align*}
    |\gamma_{00}|^2 |\gamma_{11}|^2 &= |\gamma_{01}|^2 |\gamma_{10}|^2 \\
    \theta_{00} + \theta_{11} &= \theta_{01} + \theta_{10}.
  \end{align*}
  Then we can write:
  \begin{align*}
    |\gamma_{00}| &= \sqrt{|\gamma_{00}|^2} = 
    \sqrt{|\gamma_{00}|^2 (|\gamma_{00}|^2 + |\gamma_{01}|^2 + 
      |\gamma_{10}|^2 + |\gamma_{11}|^2)} \\
    &= \sqrt{|\gamma_{00}|^4 + |\gamma_{00}|^2 |\gamma_{01}|^2 + |\gamma_{00}|^2 |\gamma_{10}|^2 + |\gamma_{01}|^2 |\gamma_{10}|^2} \\
    &= \underbrace{\sqrt{|\gamma_{00}|^2 + |\gamma_{01}|^2}}_{|\alpha_0|}\underbrace{\sqrt{|\gamma_{00}|^2 + |\gamma_{10}|^2}}_{|\beta_0|},
  \end{align*}
  and similarly for the other coefficients:
  \begin{align*}
    |\gamma_{01}| &= \underbrace{\sqrt{|\gamma_{00}|^2 + |\gamma_{01}|^2}}_{|\alpha_0|}\underbrace{\sqrt{|\gamma_{01}|^2 + |\gamma_{11}|^2}}_{|\beta_1|} \\
    |\gamma_{10}| &= \underbrace{\sqrt{|\gamma_{10}|^2 + |\gamma_{11}|^2}}_{|\alpha_1|}\underbrace{\sqrt{|\gamma_{00}|^2 + |\gamma_{10}|^2}}_{|\beta_0|} \\
    |\gamma_{11}| &= \underbrace{\sqrt{|\gamma_{10}|^2 + |\gamma_{11}|^2}}_{|\alpha_1|}\underbrace{\sqrt{|\gamma_{01}|^2 + |\gamma_{11}|^2}}_{|\beta_1|}.
  \end{align*}
  To fully define the coefficients $\alpha_0, \alpha_1, \beta_0,
  \beta_1$ we must determine their phases. We can assign:
  \begin{align}
    \label{eq:coeffproduct}
    \begin{split}
      \alpha_0 = e^{i\theta_{00}}|\alpha_0|, \qquad \alpha_1 = e^{i\theta_{10}}|\alpha_1|, 
      \qquad
      \beta_0 = |\beta_0|, \qquad \beta_1 = e^{i(\theta_{01}-\theta_{00})}|\beta_1|.
    \end{split}
  \end{align}
  It is now easy to verify that the state $\ket{\xi}$ in
  \eqref{eq:generalstate} can be expressed as $\ket{\psi} \otimes
  \ket{\phi}$ in \eqref{eq:productstate} with coefficients $\alpha_0,
  \alpha_1, \beta_0, \beta_1$ as given in \eqref{eq:coeffproduct}.

  The condition in equation \eqref{eq:gammarel}, to verify if the
  coefficients of a $2$-qubit state $\ket{\xi}$ can be expressed as a
  tensor product of two $1$-qubit states, can also be written in
  matrix form, which makes it easier to remember. If we assign the
  rows of the matrix to the first qubit, and the columns to the second
  qubit, we can arrange the coefficients $\gamma$ as follows (notice
  how the first qubit has value 0 in the first row and 1 in the second
  row; similarly for the second qubit and the columns):
  \begin{equation*}
    \begin{pmatrix}
      \gamma_{00} & \gamma_{01} \\
      \gamma_{10} & \gamma_{11}
    \end{pmatrix}.
  \end{equation*}
  Then, $\ket{\xi}$ is a tensor product of two $1$-qubit states if and
  only if this matrix has rank 1. This is equivalent to
  \eqref{eq:gammarel}.
\end{example}
}
We formalize the concept of expressing a quantum state as a tensor
product of lower-dimensional quantum states as follows.
\begin{definition}
  \label{def:entanglement}
  A quantum state $\ket{\psi} \in \left(\C^2\right)^{\otimes q}$ is a
  {\em product state} if it can be expressed as a tensor product
  $\ket{\psi_1} \otimes \dots \otimes \ket{\psi_q}$ of $q$ 1-qubit
  states. Otherwise, it is {\em entangled}.
\end{definition}
Notice that a general quantum state $\ket{\psi}$ could be the product
of two or more lower-dimensional quantum state, e.g., $\ket{\psi} =
\ket{\psi_1} \otimes \ket{\psi_2}$, with $\ket{\psi_1}$ and
$\ket{\psi_2}$ being entangled states. In such a situation,
$\ket{\psi}$ exhibits some entanglement, but in some sense it can
still be ``simplified''. Generally, according to the definition above,
we call a quantum state entangled as long as it cannot be fully
decomposed into a tensor product of 1-qubit states. In the case of
quantum systems composed of multiple subsystems (rather than just two
subsystems as in the example $\ket{\psi} = \ket{\psi_1} \otimes
\ket{\psi_2}$), the concept of entanglement as discussed in the
literature is not as simple as given in Def.~\ref{def:entanglement}
(and the rank-1 test discussed at the end of Example
\ref{ex:productstate} is not well-defined). However, our simplified
definition works in this tutorial and for most of the literature on
quantum algorithms, therefore we can leave other considerations aside;
we refer to \cite{coffman2000distributed} as an entry point for a
discussion on multipartite entanglement.

\begin{example}
  Consider the following 2-qubit state:
  \begin{equation*}
    \frac{1}{2} \ket{00} + \frac{1}{2} \ket{01} + \frac{1}{2} \ket{10} +
    \frac{1}{2} \ket{11}.
  \end{equation*}
  This is a product state because it is equal to
  $\left(\frac{1}{\sqrt{2}} \ket{0} + \frac{1}{\sqrt{2}}
  \ket{1}\right)\otimes \left(\frac{1}{\sqrt{2}} \ket{0} +
  \frac{1}{\sqrt{2}} \ket{1}\right)$. By contrast, the following
  2-qubit state:
  \begin{equation*}
    \frac{1}{\sqrt{2}} \ket{00} + \frac{1}{\sqrt{2}} \ket{11}
  \end{equation*}
  is an entangled state, because it cannot be expressed as a product
  of two 1-qubit states.
\end{example}

\section{Operations on qubits}
\label{sec:operations}
Operations on quantum states must satisfy certain conditions, to
ensure that applying an operation does not break the basic properties
of the quantum state. The required property is stated below, and we
treat it as an assumption.
\begin{assumption}
  \label{ass:operations}
  An {\em operation} applied by a quantum computer with $q$ qubits,
  also called a {\em gate}, is a unitary matrix in $\C^{2^q \times
    2^q}$.
\end{assumption}
\begin{remark}
  A matrix $U$ is unitary if $U^* U = U U^* = I$.
\end{remark}
A well-known property of unitary matrices is that they are
norm-preserving; that is, given a unitary matrix $U$ and a vector $v$,
$\|U v\| = \|v\|$. Thus, for a $q$-qubit system, the quantum state is
a unit vector $\ket{\psi} \in \C^{2^q}$, a quantum operation is a
matrix $U \in \C^{2^q \times 2^q}$, and the application of $U$ onto
the state $\ket{\psi}$ is the unit vector $U \ket{\psi} \in
\C^{2^q}$. This leads to the following remarks:
\begin{itemize}
\item Quantum operations are {\em linear}.
\item Quantum operations are {\em reversible}.
\end{itemize}
While these properties may initially seem to be extremely restrictive,
\cite{deutsch85quantum} shows that a universal quantum computer is
Turing-complete, implying that it can simulate any Turing-computable
function with an additional polynomial amount of space, given
sufficient time. Out of the two properties indicated above, the most
counterintuitive is perhaps reversibility: the classical notion of
computation does not appear to be reversible, because memory can be
erased and, in the classical Turing machine, symbols can be erased
from the tape. However, \cite{bennett73logical} shows that all
computations (including classical computations) can be made reversible
by means of extra space. The general idea to make a function
invertible is to have separate input and output registers: any output
is stored in a different location than the input, so that the input
does not have to be erased.  This is a standard trick in quantum
computing that will be discussed in Section \ref{sec:simon}, but in
order to do that, we first need to introduce some notation for quantum
circuits.

\subsection{Notation for quantum circuits}
\label{sec:circuitnotation}
A quantum circuit is represented by indicating which operations are
performed on each qubit, or group of qubits. For a quantum computer
with $q$ qubits, we represent $q$ qubit lines, where the top line
indicates qubit $1$ and the rest are given in increasing order from
the top. Operations are represented as gates; from now, the two terms
are used interchangeably. Gates take qubit lines as input, have the
same number of qubit lines as output, and apply the unitary matrix
indicated on the gate to the quantum state of those qubits. Figure
\ref{fig:basiccircuit} is a simple example.
\begin{figure}[h!]
\leavevmode
\centering
\Qcircuit @C=1em @R=.7em {
\lstick{\text{qubit } 1} & \multigate{2}{U}  & \qw \\
\lstick{\text{qubit } 2} & \ghost{U}         & \qw \\
\lstick{\text{qubit } 3} & \ghost{U}         & \qw \\
}
\caption{A simple quantum circuit.}
\label{fig:basiccircuit}
\end{figure}
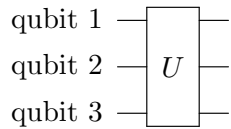

\noindent Note that circuit diagrams are read from left to right, but
because each gate corresponds to applying a matrix to the quantum
state, the matrices corresponding to the gates should be written from
right to left in the mathematical expression describing the
circuit. For example, in the circuit in Figure \ref{fig:circuitorder},
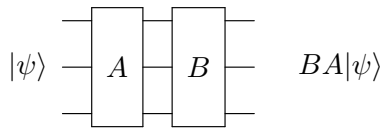
\begin{figure}[h!]
\leavevmode
\centering
\Qcircuit @C=1em @R=.7em {
                    & \multigate{2}{A}  & \multigate{2}{B}  & \qw & \\
\lstick{\ket{\psi}} & \ghost{A}         & \ghost{B}         & \qw & \rstick{BA\ket{\psi}}\\
                    & \ghost{A}         & \ghost{B}         & \qw & \\
}
\caption{Order of the operations in a quantum circuit.}
\label{fig:circuitorder}
\end{figure}
the outcome of the circuit is the state $BA \ket{\psi}$,
because we start with state $\ket{\psi}$, and we first apply the gate
with unitary matrix $A$, and then $B$.

Gates can also be applied to individual qubits. Because a
single qubit is a vector in $\C^2$, a single-qubit gate is a unitary
matrix in $\C^{2 \times 2}$. Consider the same three-qubit device, and
suppose we want to apply the gate only to the third qubit. We would
write it as in Figure \ref{fig:singlequbit}.
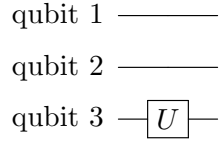
\begin{figure}[h!]
\leavevmode
\centering
\Qcircuit @C=1em @R=.5em @!R {
\lstick{\text{qubit } 1} & \qw       & \qw \\
\lstick{\text{qubit } 2} & \qw       & \qw \\
\lstick{\text{qubit } 3} & \gate{U}  & \qw \\
}
\caption{A circuit with a single-qubit gate.}
\label{fig:singlequbit}
\end{figure}

\noindent From an algebraic point of view, the action of our first
example in Figure \ref{fig:basiccircuit} on the quantum state is
clear: the state of the three qubits is mapped onto another three-qubit
state, as $U$ acts on all the qubits. To understand the example in
Figure \ref{fig:singlequbit}, where $U$ is a single-qubit gate that
acts on qubit 3 only, we must imagine that an identity gate is applied
to all the empty qubit lines. Therefore, Figure \ref{fig:singlequbit}
can be thought of as indicated in Figure \ref{fig:singlequbitI}.
\begin{figure}[h!]
\leavevmode
\centering
\Qcircuit @C=1em @R=.7em {
\lstick{\text{qubit } 1} & \gate{I} & \qw \\
\lstick{\text{qubit } 2} & \gate{I}  & \qw \\
\lstick{\text{qubit } 3} & \gate{U}  & \qw \\
}
\caption{Equivalent representation of a circuit with a single-qubit gate.}
\label{fig:singlequbitI}
\end{figure}
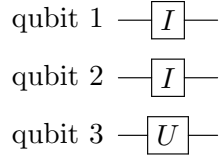

\noindent This circuit can be interpreted as applying the gate $I
\otimes I \otimes U$ to the 3-qubit state. Notice that by
convention the matrix $U$, which is applied to qubit 3, appears in the
rightmost term of the tensor product. This is because qubit 3 is
associated with the least significant digit according to our
little-endian convention, see Def.~\ref{def:binaryrep} and the subsequent discussion. If we have a
product state $\ket{\psi} \otimes \ket{\phi} \otimes \ket{\xi}$, we can write
labels as indicated in Figure \ref{fig:circuitproduct}.
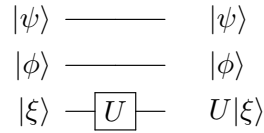
\begin{figure}[h!]
\leavevmode
\centering
\Qcircuit @C=1em @R=.3em @!R {
\lstick{\ket{\psi}} & \qw       & \qw & \rstick{\ket{\psi}} \\
\lstick{\ket{\phi}} & \qw       & \qw & \rstick{\ket{\phi}} \\
\lstick{\ket{\xi}} & \gate{U}  & \qw & \rstick{U\ket{\xi}} \\
}
\caption{Effect of a single-qubit gate on a product state.}
\label{fig:circuitproduct}
\end{figure}

\noindent Indeed, $(I \otimes I \otimes U)(\ket{\psi} \otimes
\ket{\phi} \otimes \ket{\xi}) = \ket{\psi} \otimes \ket{\phi} \otimes
U\ket{\xi}$. If the system is in an entangled state, however, the
action of $(I \otimes I \otimes U)$ cannot be determined in such a
simple way, because the state cannot be factored as a product
state. Thus, for a general entangled input state, the effect of the
circuit is as indicated in Figure \ref{fig:circuitentangled}.
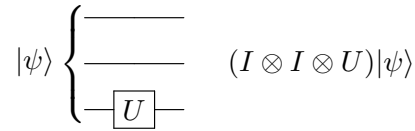
\begin{figure}[h!]
\leavevmode
\centering
\Qcircuit @C=1em @R=.3em @!R {
 & \qw  & \qw  \\
 & \qw  & \qw & \rstick{(I \otimes I \otimes U)\ket{\psi}} \\
 & \gate{U}  & \qw  
  \inputgroupv{1}{3}{.8em}{1.5em}{\ket{\psi}}
  {\gategroup{1}{3}{3}{3}{.8em}{\}}}\\
}
\caption{Effect of a single-qubit gate on an entangled state.}
\label{fig:circuitentangled}
\end{figure}
Notice that this fact is essentially the reason why simulation of
quantum computations on classical computers may take exponential
resources in the worst case: to simulate the effect of even a
single-qubit gate on the entangled state $\ket{\psi}$, we have to
explicitly compute the effect of the $2^q \times 2^q$ matrix $(I
\otimes I \otimes U)$ on the state $\ket{\psi}$. This requires
exponential space with a naive approach (if the matrices and vectors
are stored explicitly), and even with more parsimonious approaches it
may require exponential time (e.g., if we compute elements of the
state vector one at a time). As long as the quantum state is not
entangled computations can be carried out on each qubit independently,
but entanglement requires us to keep track of the full quantum state
in $2^q$-dimensional complex space, leading to large amounts of memory
-- or time -- required.

\subsection{Input-output, and measurement gates}
\label{sec:measurement}
We now discuss the input-output model for quantum computations. The
{\em input} of a quantum algorithm consists of an initial quantum
state and a quantum circuit.
\begin{remark}
  The quantum state and the quantum circuit must be described in a
  suitable compact way: for a circuit on $q$ qubits, a unitary matrix
  can be of size $2^q \times 2^q$, but for an efficient algorithm we
  require that the circuit contains polynomially many gates in $q$ and
  each gate has a compact representation. This will be discussed
  further in Section~\ref{sec:basicops}.
\end{remark}
By convention, the initial quantum state of the quantum computing
device is assumed to be $\ket{\v{0}}_q$ unless otherwise
specified. All algorithms described in this tutorial start from this
state. Of course, this does not prevent the algorithm from acting on
the state, transforming it into a more suitable one. Examples of how
this can be done will be seen in subsequent sections. The important
remark is that if there is any data that has to be fed to the
algorithm, this data is embedded in the quantum circuit given as part
of the input. We should also note that there are hybrid algorithms
involving classical and quantum computations. In such situations, the
quantum computations can generally be thought of as subroutines, but
this does not change the principle that each of these quantum
computations will be described by an initial quantum state (typically,
$\ket{\v{0}}_q$) and a quantum circuit. This summarizes the input
model. But what is the {\em output} of the quantum computer?

So far we characterized properties of quantum states and quantum
gates. Remarkably, the state of a $q$-qubit quantum register is
described by a vector of dimension $2^q$, exponentially larger than
the dimension of the vector required to describe $q$ classical
bits. However, there is a catch: in a classical computer we can simply
read the state of the bits, whereas in a quantum computer we do not
have direct, unrestricted access to the quantum state. Information on
the quantum state is only gathered through a measurement gate,
indicated in the circuit diagram in Figure \ref{fig:meas}.
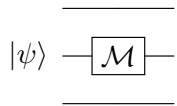
\begin{figure}[h!]
\leavevmode
\centering
\Qcircuit @C=1em @R=1em {
                    & \qw  & \qw \\
\lstick{\ket{\psi}} & \meter         & \\
                    & \qw         & \qw \\
}
\caption{Single-qubit measurement.}
\label{fig:meas}
\end{figure}
We now formally define the effect of a single-bit measurement gate.
\begin{assumption}
  \label{ass:meas}
  Information on the state of a quantum computing device can only be
  obtained through a {\em measurement}.  Given a $q$-qubit quantum
  state $\ket{\psi} = \sum_{\vj \in \{0,1\}^q} \alpha_{\vj}
  \ket{\vj}_q$, a {\em measurement gate} on qubit $k$ outputs $0$ with
  probability $\sum_{\vj \in \{0,1\}^q: \vj_k = 0} |\alpha_{\vj}|^2$,
  and $1$ with probability $\sum_{\vj \in \{0,1\}^q: \vj_k = 1}
  |\alpha_{\vj}|^2$. Let $x \in \{0,1\}$ be the measured value. After
  the measurement, the quantum state becomes: $$\sum_{\substack{\vj
      \in \{0,1\}^q:\\ \vj_k = x}} \frac{\alpha_{\vj}}{\sqrt{\sum_{\vj
        : \vj_k = x} |\alpha_{\vj}|^2}} \ket{\vj}_q.$$ The original
  quantum state is no longer recoverable.
\end{assumption}
\begin{remark}
  The state of the quantum system after a measurement collapses to a
  linear combination of only those basis states that are consistent
  with the outcome of the measurement, i.e., basis states $\ket{\vj}$
  with $\vj_k = x$. The coefficients $\alpha_{\vj}$ for such basis
  states are normalized to yield a unit vector.
\end{remark}

The rule for single-qubit measurements leads to a very simple and
natural expression for the probability of observing a given binary
string when measuring all the qubits.
\begin{proposition}
  \label{prop:meas}
  Given a $q$-qubit quantum state $\ket{\psi} = \sum_{\vj \in \{0,1\}^q}
  \alpha_{\vj} \ket{\vj}_q$, applying a measurement gate to the $q$ qubits
  in any order yields $\vj$ with probability $|\alpha_{\vj}|^2$, for $\vj
  \in \{0,1\}^q$.
\end{proposition}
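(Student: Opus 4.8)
The plan is to prove this by induction on the number of qubits measured, using Assumption~\ref{ass:meas} as the base case and the engine of the induction. Since the claim is about measuring in \emph{any} order, fix an arbitrary order $k_1, k_2, \dots, k_q$ in which the qubits are measured; by relabeling we may as well prove the statement for this fixed order, and the argument will not depend on which order was chosen. The inductive hypothesis I would use is the following: after measuring qubits $k_1,\dots,k_t$ and observing values $x_{k_1},\dots,x_{k_t}$, the event that these particular values occur has probability $\sum_{j \in S_t} |\alpha_j|^2$, where $S_t := \{ j : (jB_q)_{k_s} = x_{k_s} \text{ for } s=1,\dots,t\}$, and conditioned on this event the state has collapsed to $\sum_{j \in S_t} \frac{\alpha_j}{\sqrt{\sum_{\ell \in S_t}|\alpha_\ell|^2}} \ket{j}_q$.

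First I would establish the base case $t=1$: this is exactly the content of Assumption~\ref{ass:meas} applied to qubit $k_1$, with $S_1 = \{ j : (jB_q)_{k_1} = x_{k_1}\}$. Then, for the inductive step, I would apply Assumption~\ref{ass:meas} to qubit $k_{t+1}$ on the collapsed state $\ket{\psi_t} = \sum_{j \in S_t} \frac{\alpha_j}{\sqrt{N_t}} \ket{j}_q$ with $N_t := \sum_{\ell \in S_t}|\alpha_\ell|^2$. The rule says the conditional probability of observing $x_{k_{t+1}}$ is $\sum_{j \in S_t : (jB_q)_{k_{t+1}} = x_{k_{t+1}}} \frac{|\alpha_j|^2}{N_t} = \frac{1}{N_t}\sum_{j \in S_{t+1}}|\alpha_j|^2$, since $S_{t+1}$ is precisely the subset of $S_t$ with the correct $(k_{t+1})$-th bit. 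Multiplying this conditional probability by the probability $N_t$ of the conditioning event (the inductive hypothesis) gives $\sum_{j \in S_{t+1}} |\alpha_j|^2$ for the joint event, which is the first half of the hypothesis at level $t+1$. The renormalization denominators telescope cleanly: the new collapsed state is $\sum_{j \in S_{t+1}} \frac{\alpha_j/\sqrt{N_t}}{\sqrt{\sum_{\ell \in S_{t+1}}|\alpha_\ell|^2/N_t}} \ket{j}_q = \sum_{j \in S_{t+1}} \frac{\alpha_j}{\sqrt{N_{t+1}}}\ket{j}_q$, giving the second half.

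To finish, I would take $t = q$. After all $q$ qubits are measured, the values $x_{k_1},\dots,x_{k_q}$ determine a full binary string $x \in \{0,1\}^q$, and the set $S_q = \{ j : (jB_q)_{k_s} = x_{k_s} \;\forall s\}$ is the set of integers whose $q$-digit binary representation equals $x$ in every position --- which is the single integer $j^\star$ with $j^\star B_q = x$. Hence the probability of observing the outcome string $x = j^\star B_q$ is $\sum_{j \in S_q}|\alpha_j|^2 = |\alpha_{j^\star}|^2$, which is the claim. The one wrinkle I expect to spell out is the degenerate branch: if at some stage $N_t = 0$, then that sequence of outcomes occurs with probability $0$ and the post-measurement state in Assumption~\ref{ass:meas} is formally undefined, but since such branches contribute nothing to any probability the statement is unaffected --- I would simply restrict attention to outcome sequences reached with positive probability, or note that the formula $|\alpha_{j^\star}|^2 = 0$ is still correct in that case. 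This edge case, together with being careful that "any order" is genuinely handled by the order-agnostic induction, is the only mildly delicate point; the algebra of the telescoping normalization constants is the heart of the argument and is routine.
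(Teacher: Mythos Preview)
Your proof is correct and follows essentially the same approach as the paper's: both argue by induction using the single-qubit measurement rule (Assumption~\ref{ass:meas}) and the chain rule for conditional probabilities, with the key computation being the telescoping of the renormalization constants. Your formulation, inducting on the number $t$ of qubits measured so far and carrying along both the joint probability and the collapsed state, is arguably cleaner than the paper's induction on $q$, and your explicit handling of the zero-probability branch is a nice touch the paper omits.
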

\optbox{
  \begin{proof}
  We need to show that the probability of observing $\vj$ after $q$
  single-qubit measurements is equal to $|\alpha_{\vj}|^2$. We can do this
  by induction on $q$. The case $q=1$ is trivial. We now show how to
  go from $q-1$ to $q$. In terms of notation, we will write $\Pr(\text{Q}k
  \meas x)$ to denote the probability that the measurement of qubit
  $k$ yields $x \in \{0,1\}$. If it is important to indicate the
  quantum state on which the measurement is performed, we denote it as
  $\Pr_{\ket{\psi}}(\text{Q}k \meas x)$.
  
  Suppose the qubits are measured in an abitrary order and the qubit
  in position $h$ is the first to be measured. (The order of the
  remaining measurements does not matter for the proof, because after
  the first measurement we will rely on the inductive hypothesis). The
  probability of obtaining the outcome $\vj$ is:
  \begin{align*}
    \Pr_{\ket{\psi}}\left(\text{Q}1 \meas \vj_{1}, \dots, \text{Q}q \meas
    \vj_{q} \right) = \\ \Pr_{\ket{\psi}}\left(\text{Q}1 \meas
    \vj_{1}, \dots, \text{Q}(h-1) \meas \vj_{h-1}, \text{Q}(h+1) \meas \vj_{h+1}, \dots,
    \text{Q}q \meas \vj_{q} | \text{Q}h \meas \vj_{h}\right) \Pr_{\ket{\psi}}
    \left(\text{Q}h \meas \vj_{h}\right) = \\
    \Pr_{\ket{\phi}}\left(\text{Q}1 \meas
    \vj_{1}, \dots, \text{Q}(h-1) \meas \vj_{h-1}, \text{Q}(h+1) \meas \vj_{h+1}, \dots,
    \text{Q}q \meas \vj_{q} \right) \Pr_{\ket{\psi}}
    \left(\text{Q}h \meas \vj_{h}\right),
  \end{align*}
  where $\ket{\phi}$ is the state obtained from $\ket{\psi}$ after
  measuring the qubit in position $h$ and observing
  $\vj_h$. Therefore, we have:
  \begin{equation*}
    \ket{\phi} = \sum_{\substack{\vk \in \{0,1\}^q :\\ \vk_{h} = \vj_{h}}}
    \frac{\alpha_{\vk}}{\sqrt{\sum_{\vk \in \{0,1\}^q : \vk_{h} =
          \vj_{h}} |\alpha_{\vk}|^2}} \ket{\vk}_q := \sum_{\substack{\vk \in
      \{0,1\}^q :\\ \vk_{h} = \vj_{h}}} \beta_{\vk} \ket{\vk}_q,
  \end{equation*}
  and the coefficients $\beta_{\vk}$, as given above, are only
  defined for $\vk \in \{0,1\}^q : \vk_{h} = \vj_{h}$.  By the
  definition of single-qubit measurement, we also have:
  \begin{equation*}
    \Pr_{\ket{\psi}}\left(\text{Q}h \meas \vj_{h}\right) = \sum_{\vk
      \in \{0,1\}^q: \vk_{h} = \vj_{h}} |\alpha_{\vk}|^2.
  \end{equation*}
  By the induction hypothesis:
  \begin{align*}
    \Pr_{\ket{\phi}}\left(\text{Q}1 \meas
    \vj_{1}, \dots, \text{Q}(h-1) \meas \vj_{h-1}, \text{Q}(h+1) \meas \vj_{h+1}, \dots,
    \text{Q}q \meas \vj_{q} \right) = |\beta_{\vj}|^2,
  \end{align*}
  because: $\ket{\phi}$ is the state after measuring the $h$-th qubit
  and obtaining $\vj_h$ as the outcome, therefore it only contains
  basis states $\vk$ with $\vk_h = \vj_h$; and the induction
  hypothesis imposes that to compute the probability of observing
  $\vj_{\ell}, \ell \neq h$ in the respective positions we simply need
  to look at the coefficient of the corresponding basis state.
  Remembering that $\beta_{\vk} = \alpha_{\vk}/\left(
  \sqrt{\sum_{\vk \in \{0,1\}^q : \vk_{h} = \vj_{h}}
    |\alpha_{\vk}|^2}\right)$, we finally obtain:
  \begin{align*}
    \Pr_{\ket{\psi}}\left(\text{Q}1 \meas \vj_{1}, \dots, \text{Q}q \meas
    \vj_{q} \right) =  
    \frac{|\alpha_{\vj}|^2}{\displaystyle
      \sum_{\vk \in \{0,1\}^q : \vk_{h} = \vj_{h}}
      |\alpha_{\vk}|^2} \sum_{\substack{\vk \in \{0,1\}^q : \\
        \vk_{h} = \vj_{h}}} |\alpha_{\vk}|^2 = |\alpha_{\vj}|^2.
  \end{align*}
\end{proof}
}
Proposition \ref{prop:meas} above shows that the two circuits in Figure
\ref{fig:multimeas} are equivalent.
\begin{figure}[h!]
\leavevmode
\centering
\Qcircuit @C=1em @R=0.7em {
                    & \multigate{2}{\metersymb}        & \\
\lstick{\ket{\psi}} & \ghost{\metersymb}         & \\
                    & \ghost{\metersymb}         & \\
}
\hspace{10em}
\Qcircuit @C=1em @R=0.7em {
                    & \meter         & \\
\lstick{\ket{\psi}} & \meter         & \\
                    & \meter         & \\
}
\caption{Multiple-qubit measurement.}
\label{fig:multimeas}
\end{figure}
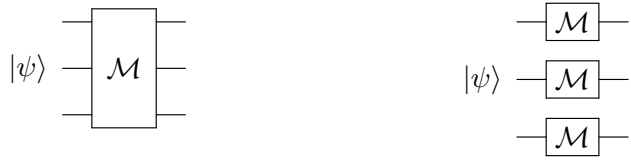

\noindent In other words, the single-qubit measurement gate is
sufficient to measure any number of qubits in the most natural way,
i.e., the measurement outcome $\vj$ on the $q$ qubits occurs with
probability that is exactly equal to the modulus squared of the state
coefficients $\alpha_{\vj}$. Notice that with this simple rule, it is
easy to compute the probability of obtaining a given string on a given
subset of the qubits: we just need to add up the modulus squared of
the coefficients for all those basis states that contain the desired
string in the desired position.

\begin{example}
  Consider again the following 2-qubit state:
  \begin{equation*}
    \alpha_{00} \ket{00} + \alpha_{01} \ket{01} + \alpha_{10} \ket{10}
    + \alpha_{11} \ket{11} = \frac{1}{2} \ket{00} + \frac{1}{2}
    \ket{01} + \frac{1}{2} \ket{10} + \frac{1}{2} \ket{11}.
  \end{equation*}
  We remarked that this is a product state. As usual, let qubit Q1 the
  first qubit (i.e., the one corresponding to the first digit in the
  two-digit binary strings), and let qubit Q2 be the second qubit
  (i.e., the one corresponding to the second digit in the two-digit
  binary strings). Then:
  \begin{align*}
    \Pr(\text{Q}1 \meas 0) &= |\alpha_{00}|^2 + |\alpha_{01}|^2 = \left(\frac{1}{2}\right)^2 + \left(\frac{1}{2}\right)^2 = \frac{1}{2} \\
    \Pr(\text{Q}1 \meas 1) &= |\alpha_{10}|^2 + |\alpha_{11}|^2 = \left(\frac{1}{2}\right)^2 + \left(\frac{1}{2}\right)^2 = \frac{1}{2} \\
    \Pr(\text{Q}2 \meas 0) &= |\alpha_{00}|^2 + |\alpha_{10}|^2 = \left(\frac{1}{2}\right)^2 + \left(\frac{1}{2}\right)^2 = \frac{1}{2} \\
    \Pr(\text{Q}2 \meas 1) &= |\alpha_{01}|^2 + |\alpha_{11}|^2 = \left(\frac{1}{2}\right)^2 + \left(\frac{1}{2}\right)^2 = \frac{1}{2}.
  \end{align*}
  Suppose we measure Q2 and we obtain 1 as the outcome of the
  measurement. Then the state of the 2-qubit system collapses to:
  \begin{equation*}
    \frac{1}{\sqrt{2}} \ket{01} + \frac{1}{\sqrt{2}} \ket{11}.
  \end{equation*}
  The outcome distribution for Q1 for this new state is:
  \begin{equation*}
    \Pr(\text{Q}1 \meas 0) = \frac{1}{2} \qquad \Pr(\text{Q}1 \meas 1) = \frac{1}{2}.
  \end{equation*}
  Hence, the probability of observing 0 or 1 when measuring qubit Q1
  did not change after the measurement.

  Consider now the following entangled 2-qubit state:
  \begin{equation*}
    \beta_{00} \ket{00} + \beta_{11} \ket{11} = \frac{1}{\sqrt{2}}
    \ket{00} + \frac{1}{\sqrt{2}} \ket{11}.
  \end{equation*}
  Doing the calculations, we still have:
  \begin{align*}
    \Pr(\text{Q}1 \meas 0) &= |\beta_{00}|^2 = \frac{1}{2} 
    & \Pr(\text{Q}1 \meas 1) &= |\beta_{11}|^2 = \frac{1}{2} \\
    \Pr(\text{Q}2 \meas 0) &= |\beta_{00}|^2 = \frac{1}{2} & 
    \Pr(\text{Q}2 \meas 1) &= |\beta_{11}|^2 = \frac{1}{2}.
  \end{align*}
  Suppose we measure qubit Q2 and we obtain 1 as the outcome of the
  measurement. Then the state of the 2-qubit system collapses to:
  \begin{equation*}
    \ket{11}.
  \end{equation*}
  If we measure Q1 from this state, we obtain:
  \begin{equation*}
    \Pr(\text{Q}1 \meas 0) = 0 \qquad \Pr(\text{Q}1 \meas 1) = 1.
  \end{equation*}
  The situation is now very different: the probability of the outcomes
  from a measurement on Q1 have changed after measuring Q2. This is
  exactly the concept of entanglement: when two or more qubits are
  entangled, they affect each other, and measuring one qubit changes
  the probability distribution for the other qubits.
\end{example}
The example above can be seen in terms of conditional probabilities:
if, for all $x, y \in \{0,1\}$, we have $\Pr(\text{Q}1 \meas x) =
\Pr(\text{Q}1 \meas x | \text{Q}2 \meas y)$, then the two qubits are
not entangled (product state), whereas if $\Pr(\text{Q}1 \meas x) \neq
\Pr(\text{Q}2 \meas x | \text{Q}2 \meas y)$ for some $x, y$, there is
entanglement. Indeed, recall from Example~\ref{ex:tensorproduct} that
taking the tensor product of two vectors containing outcome
probabilities for independent random variables yields the joint
probability distribution. Quantum state vectors do not contain outcome
probabilities, but the modulus squared of the components of the state
vector corresponds to a probability. Furthermore, for any two complex
numbers $\alpha, \beta \in \C$ we have $|\alpha\beta|^2 =
|\alpha|^2|\beta|^2$, so the operation applied to compute
probabilities from state coefficients is distributive with respect to
multiplication.  A product state is a tensor product of
smaller-dimensional state vectors, hence it leads to outcome
probabilities that are simply the product of the outcome probabilities
corresponding to measuring each of the qubits
independently. Conversely, an entangled state is not a product state,
and the outcomes of measuring each of the qubits are no longer
independent.
\begin{remark}
  Despite the above discussion, it would be wrong to think of the
  quantum state as a probability distribution: the quantum state {\em
    induces} a probability distribution by taking the modulus squared
  of its entries, but it is not a probability distribution! Indeed,
  the coefficients in a quantum state are complex numbers unrestricted
  in sign, while probabilities are nonnegative real numbers.
  Furthermore, just as there is an infinite set of complex numbers
  that have the same modulus (i.e., the set $\{a \in \C: |a| = v\}$
  for some real number $v > 0$ is infinite), there is an infinite
  number of quantum state vectors in $(\C^2)^{\otimes q}$ that yield
  the same distribution. Some of these may yield the same outcome of
  the computation, but others may not: this is discussed in the next
  two examples.
\end{remark}
\begin{example}
  \label{ex:globalphase}
  Suppose we have two $q$-qubit quantum states $\ket{\psi},
  \ket{\phi}$ satisfying $\ket{\psi} = e^{i\theta} \ket{\phi}$ for
  some $\theta \in \R$. Now consider the application of some unitary
  matrix $U$ onto $\ket{\psi}$ and $\ket{\phi}$, followed by a
  measurement of all the qubits. Define:
  \begin{equation*}
    U \ket{\phi} := \sum_{\vj \in \{0,1\}^q} \alpha_{\vj} \ket{\vj}
  \end{equation*}
  for some (normalized) coefficients $\alpha_{\vj}$, which implies:
  \begin{equation*}
    U \ket{\psi} = Ue^{i\theta} \ket{\phi} = \sum_{\vj \in \{0,1\}^q}
    e^{i\theta} \alpha_{\vj} \ket{\vj}.
  \end{equation*}
  This means that for a given $\vk$:
  \begin{equation*}
    \Pr_{\ket{\phi}} (\text{Q}1 \meas \vk_1, \dots, \text{Q}q \meas \vk_q) = |\alpha_{\vk}|^2,
    \qquad
    \Pr_{\ket{\psi}} (\text{Q}1 \meas \vk_1, \dots, \text{Q}q \meas \vk_q) = |e^{i\theta} \alpha_{\vk}|^2 = |\alpha_{\vk}|^2,
  \end{equation*}
  so the probability of obtaining $\vk$ as the outcome of a
  measurement is the same for both $\ket{\psi}$ and
  $\ket{\phi}$. Since this is true after applying an arbitrary unitary
  $U$, it is also true after applying a whole circuit, which is just a
  sequence of unitaries. Hence, if the vectors $\ket{\psi},
  \ket{\phi}$ satisfy the relationship $\ket{\psi} = e^{i\theta}
  \ket{\phi}$, they induce the same outcome distribution. The factor
  $e^{i\theta}$ is usually called {\em global phase} and can be safely
  be ignored.
\end{example}
\begin{example}
  Consider the following two 1-qubit state vectors:
  \begin{equation*}
    \ket{\psi} = \frac{1}{\sqrt{2}} \ket{0} + \frac{1}{\sqrt{2}} \ket{1} \qquad
    \ket{\phi} = \frac{1}{\sqrt{2}} \ket{0} - \frac{1}{\sqrt{2}} \ket{1}.
  \end{equation*}
  Both induce the same probability distribution on the measurement
  outcomes:
  \begin{align*}
    \Pr_{\ket{\psi}}(\text{Q}1 \meas 0) = \frac{1}{2} \qquad \Pr_{\ket{\psi}}(\text{Q}1 \meas 1) = \frac{1}{2} \\
    \Pr_{\ket{\phi}}(\text{Q}1 \meas 0) = \frac{1}{2} \qquad \Pr_{\ket{\phi}}(\text{Q}1 \meas 1) = \frac{1}{2}.
  \end{align*}
  But $\ket{\psi}$ and $\ket{\phi}$ are very different states! If we
  apply a certain unitary matrix to both (this gate is called Hadamard
  gate, as we will see in Section~\ref{sec:basicops}), we obtain very
  different results -- orthogonal vectors, in fact:
  \begin{align*}
    \frac{1}{\sqrt{2}} \begin{pmatrix} 1 & 1 \\ 1 & -1 \end{pmatrix} \ket{\psi} &= \frac{1}{2} \begin{pmatrix} 1 & 1 \\ 1 & -1 \end{pmatrix} \begin{pmatrix} 1 \\ 1 \end{pmatrix} = \frac{1}{2} \begin{pmatrix} 2 \\ 0 \end{pmatrix} = \ket{0} \\
    \frac{1}{\sqrt{2}} \begin{pmatrix} 1 & 1 \\ 1 & -1 \end{pmatrix} \ket{\phi} &= \frac{1}{2} \begin{pmatrix} 1 & 1 \\ 1 & -1 \end{pmatrix} \begin{pmatrix} 1 \\ -1 \end{pmatrix} = \frac{1}{2} \begin{pmatrix} 0 \\ 2 \end{pmatrix} = \ket{1}.
  \end{align*}
  This illustrates the danger of thinking about the quantum state as a
  probability distribution.
\end{example}

\subsection{The no-cloning principle}
Because measurement destroys the quantum state, it is natural to look
for a way to create a copy of a quantum state. If a clone could be
created, it would be possible to perform measurements on the clone, so
that the original state would not be destroyed. Furthermore, cloning
would allow us to take several measurements of the same set of qubits
without having to repeat the circuit that creates the quantum state.
However, it turns out that cloning is impossible: this is a direct
consequence of the properties of quantum gates, in particular the fact
that gates are unitary matrices.
\begin{proposition}
  Let $\ket{\psi}$ be an arbitrary quantum state on $q$ qubits. There
  does not exist a unitary matrix that maps $\ket{\psi}_q \otimes
  \ket{\vec{0}}_q$ to $\ket{\psi}_q \otimes \ket{\psi}_q$.
\end{proposition}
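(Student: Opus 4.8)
The plan is to prove this by contradiction, exploiting the linearity of quantum gates, which is the crux of the whole argument. Suppose, for contradiction, that there exists a single unitary matrix $U$ that clones every $n$-qubit state, i.e., $U(\ket{\psi} \otimes \ket{0}_n) = \ket{\psi} \otimes \ket{\psi}$ for all unit vectors $\ket{\psi} \in (\C^2)^{\otimes n}$. The key observation is that this identity must hold simultaneously for all states, but $U$ is a fixed linear map, so its action on a superposition is forced by its action on the components.

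First I would pick two distinct, non-parallel states, say $\ket{\psi}$ and $\ket{\phi}$ (for concreteness one can even take $\ket{0}_n$ and $\ket{1}_n$, or a basis state and an equal superposition), so that $\ket{\chi} := \tfrac{1}{\sqrt{2}}(\ket{\psi} + \ket{\phi})$ is again a unit vector and hence a legitimate quantum state to which the cloning hypothesis applies. Then I would compute $U(\ket{\chi} \otimes \ket{0}_n)$ in two different ways. On one hand, applying the cloning property directly to $\ket{\chi}$ gives $U(\ket{\chi}\otimes\ket{0}_n) = \ket{\chi}\otimes\ket{\chi} = \tfrac12(\ket{\psi}\otimes\ket{\psi} + \ket{\psi}\otimes\ket{\phi} + \ket{\phi}\otimes\ket{\psi} + \ket{\phi}\otimes\ket{\phi})$, using bilinearity of $\otimes$ (Proposition~\ref{prop:tensor}, parts (iii)--(iv)). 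On the other hand, using linearity of $U$ together with the bilinearity of $\otimes$ in its first argument, $U(\ket{\chi}\otimes\ket{0}_n) = \tfrac{1}{\sqrt2}\big(U(\ket{\psi}\otimes\ket{0}_n) + U(\ket{\phi}\otimes\ket{0}_n)\big) = \tfrac{1}{\sqrt2}(\ket{\psi}\otimes\ket{\psi} + \ket{\phi}\otimes\ket{\phi})$, again invoking the cloning property on the individual states $\ket{\psi}$ and $\ket{\phi}$.

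Equating the two expressions yields $\tfrac12(\ket{\psi}\otimes\ket{\psi} + \ket{\psi}\otimes\ket{\phi} + \ket{\phi}\otimes\ket{\psi} + \ket{\phi}\otimes\ket{\phi}) = \tfrac{1}{\sqrt2}(\ket{\psi}\otimes\ket{\psi} + \ket{\phi}\otimes\ket{\phi})$, which forces $\ket{\psi}\otimes\ket{\phi} + \ket{\phi}\otimes\ket{\psi} = (\sqrt2 - 1)(\ket{\psi}\otimes\ket{\psi} + \ket{\phi}\otimes\ket{\phi})$. For a concrete choice such as $\ket{\psi} = \ket{0}_n$, $\ket{\phi} = \ket{1}_n$, the four vectors $\ket{\psi}\otimes\ket{\psi}$, $\ket{\psi}\otimes\ket{\phi}$, $\ket{\phi}\otimes\ket{\psi}$, $\ket{\phi}\otimes\ket{\phi}$ are linearly independent basis vectors of $(\C^2)^{\otimes 2n}$, so matching coefficients gives the impossible system $1 = 0$ (the coefficient of $\ket{\psi}\otimes\ket{\phi}$), a contradiction. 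Hence no such $U$ exists.

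The main obstacle — really more of a subtlety to state carefully than a genuine difficulty — is being precise about what the cloning hypothesis asserts and making sure the two computations are both legitimate applications of it: the hypothesis is that a \emph{single} fixed $U$ works for \emph{all} input states, and it is exactly this uniformity that lets us apply it to $\ket{\psi}$, to $\ket{\phi}$, and to their superposition $\ket{\chi}$ with the same matrix. I would flag explicitly that a map cloning only a single prescribed state (or an orthonormal family of states) does exist; it is the demand to clone superpositions that collides with linearity. A secondary point worth a sentence is to note that unit-norm of $\ket{\chi}$ holds precisely when $\braket{\psi}{\phi}$ has the right value — but for the orthogonal pair $\ket{0}_n, \ket{1}_n$ this is automatic, so choosing that pair sidesteps any normalization bookkeeping entirely.
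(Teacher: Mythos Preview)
Your proof is correct, but it takes a genuinely different route from the paper's. The paper exploits unitarity rather than linearity: assuming $U$ clones both $\ket{\psi}$ and $\ket{\phi}$, it computes $\braket{\phi}{\psi} = (\bra{\phi}\otimes\bra{0}_n)U^*U(\ket{\psi}\otimes\ket{0}_n) = (\bra{\phi}\otimes\bra{\phi})(\ket{\psi}\otimes\ket{\psi}) = \braket{\phi}{\psi}^2$, forcing $\braket{\phi}{\psi}\in\{0,1\}$ and contradicting arbitrariness. Your argument, by contrast, never invokes $U^*U=I$ at all; it uses only that $U$ is linear and that a single fixed $U$ must work on superpositions. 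This buys you a slightly stronger conclusion---no \emph{linear} map, unitary or not, can clone---and makes transparent that the obstruction is linearity itself. The paper's inner-product argument is more compact and yields finer information in a different direction: it shows that any two states which \emph{can} be simultaneously cloned must be orthogonal or parallel, so cloning an orthonormal family is not ruled out (and indeed is possible), whereas your argument needs the concrete orthogonal pair $\ket{0}_n,\ket{1}_n$ only to finish the linear-independence step cleanly.
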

\optbox{
\begin{proof}
  Suppose there exists such a unitary $U$. Then for any two quantum
  states $\ket{\psi}_q, \ket{\phi}_q$, we have:
  \begin{align*}
    U(\ket{\psi}_q \otimes \ket{\vec{0}}_q) &= \ket{\psi}_q \otimes \ket{\psi}_q \\
    U(\ket{\phi}_q \otimes \ket{\vec{0}}_q) &= \ket{\phi}_q \otimes \ket{\phi}_q.
  \end{align*}
  Using these equalities, and remembering that $U^*U = I$, we can
  write:
  \begin{align*}
    \braket{\phi}{\psi} &= \braket{\phi}{\psi}\braket{\vec{0}}{\vec{0}} = 
    \braket{\phi}{\psi} \otimes \left(\braket{\vec{0}}{\vec{0}}\right) =
    (\bra{\phi}_q \otimes \bra{\vec{0}}_q)(\ket{\psi}_q \otimes \ket{\vec{0}}_q) 
    \\
    &= (\bra{\phi}_q \otimes \bra{\vec{0}}_q)U^*U(\ket{\psi}_q \otimes \ket{\vec{0}}_q)  = (\bra{\phi}_q \otimes \bra{\phi}_q)(\ket{\psi}_q \otimes \ket{\psi}_q) =
    \braket{\phi}{\psi}^2.    
  \end{align*}
  But $\braket{\phi}{\psi} = \braket{\phi}{\psi}^2$ is only true if
  $\braket{\phi}{\psi}$ is equal to 0 or to 1, contradicting the fact
  that $\ket{\phi}, \ket{\psi}$ are arbitrary quantum states.
\end{proof}
}

The above proposition shows that we cannot copy an arbitrary quantum
state. We remark that the proof does not rule out the possibility of
constructing a gate that copies a specific quantum state. In other
words, if we know what quantum state we want to copy, one could
construct a unitary matrix to do that; but it is impossible to
construct a single unitary matrix to copy all possible states.  This
establishes that we cannot ``cheat'' the destructive effect of a
measurement by simply cloning the state before the measurement. Hence,
whenever we run a circuit that produces an output quantum state, in
general we can reproduce the output quantum state only by repeating
all the steps of the algorithm.

\subsection{Basic operations and universality}
\label{sec:basicops}
Quantum computation does not allow the user to specify just any
unitary matrix in the code, just as classical computations do not
allow the user to specify any classical function. Rather, the user is
limited to gates (unitary matrices) which are efficiently specifiable
and implementable, just as classically one can only write efficient
programs by specifying a polynomial-size sequence of basic operations
on bits. The specification of a unitary matrix must be done by
combining gates out of a basic set, which can be thought of as the
instruction set of the quantum computer. We will now discuss what
these basic gates are, and how they can be combined to form other
operations.

The first operations that we discuss are the {\em Pauli gates}.
\begin{definition}
  The four Pauli gates are the following single-qubit gates:
  \begin{align*}
    I &= \begin{pmatrix} 1 & 0 \\ 0 & 1 \end{pmatrix} & X &= \begin{pmatrix} 0 & 1 \\ 1 & 0 \end{pmatrix} \\
    Y &= \begin{pmatrix} 0 & -i \\ i & 0 \end{pmatrix} & Z &= \begin{pmatrix} 1 & 0 \\ 0 & -1 \end{pmatrix}. \\
  \end{align*}
\end{definition}
\begin{proposition}
  The Pauli gates form a basis for $\C^{2 \times 2}$, they are
  Hermitian, and they satisfy the relationship $XYZ = iI$.
\end{proposition}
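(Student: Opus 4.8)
The plan is to verify the three assertions in turn, each by an elementary computation. For the basis claim, I would first recall that $\C^{2 \times 2}$ is a four-dimensional complex vector space, so it suffices to establish that $I, X, Y, Z$ are linearly independent over $\C$. I would form a generic linear combination
\begin{equation*}
  aI + bX + cY + dZ = \begin{pmatrix} a+d & b - ic \\ b + ic & a - d \end{pmatrix},
\end{equation*}
and observe that setting the right-hand side equal to the zero matrix forces $a+d = 0$, $a-d = 0$, $b-ic = 0$, $b+ic = 0$, hence $a = b = c = d = 0$. Linear independence together with the matching dimension count then yields that the four Pauli gates span $\C^{2 \times 2}$. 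Equivalently, from the very same display one can read off the explicit expansion of an arbitrary matrix $\left(\begin{smallmatrix} w & x \\ y & z \end{smallmatrix}\right)$ using $a = (w+z)/2$, $d = (w-z)/2$, $b = (x+y)/2$, $c = i(x-y)/2$, which proves spanning directly; I would likely include this form as it is more informative.

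For Hermiticity I would check $A^* = A$ for each gate. The matrices $I$, $X$, and $Z$ are real and symmetric, so each trivially coincides with its conjugate transpose. For $Y = \left(\begin{smallmatrix} 0 & -i \\ i & 0 \end{smallmatrix}\right)$, complex conjugation swaps the off-diagonal entries $-i$ and $i$ in place, while transposition swaps their positions; the two operations cancel, so $Y^* = Y$.

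For the final identity I would simply multiply in the grouping $X(YZ)$: first $YZ = \left(\begin{smallmatrix} 0 & -i \\ i & 0 \end{smallmatrix}\right)\left(\begin{smallmatrix} 1 & 0 \\ 0 & -1 \end{smallmatrix}\right) = \left(\begin{smallmatrix} 0 & i \\ i & 0 \end{smallmatrix}\right)$, and then $X(YZ) = \left(\begin{smallmatrix} 0 & 1 \\ 1 & 0 \end{smallmatrix}\right)\left(\begin{smallmatrix} 0 & i \\ i & 0 \end{smallmatrix}\right) = \left(\begin{smallmatrix} i & 0 \\ 0 & i \end{smallmatrix}\right) = iI$, as claimed.

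I do not expect any genuine obstacle here: the whole proposition is a matter of routine $2 \times 2$ linear algebra. If any point deserves a word of care it is the basis claim, where one must combine the linear-independence verification with the observation that $\dim_{\C} \C^{2 \times 2} = 4$ (or else exhibit the explicit coordinates of an arbitrary matrix); the Hermiticity and the product $XYZ = iI$ are pure arithmetic verification.
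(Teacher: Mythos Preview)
Your proposal is correct in every detail: the linear-independence computation, the Hermiticity checks, and the product $XYZ = iI$ all go through exactly as you describe. The paper itself states this proposition without proof, so there is nothing to compare against; your write-up simply fills in the routine verification that the paper omits.
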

The proof is left as an exercise. The $X$ gate is the equivalent of a
NOT gate in classical computers, as it implements a bit (rather,
qubit) flip, changing from $\ket{0}$ to $\ket{1}$ and vice versa:
\begin{equation*}
  X \ket{0} = \ket{1} \qquad X \ket{1} = \ket{0}.
\end{equation*}
The $Z$ gate is also called a phase flip gate: it leaves $\ket{0}$
unchanged, and maps $\ket{1}$ to $-\ket{1}$.
\begin{equation*}
  Z \ket{0} = \ket{0} \qquad Z \ket{1} = -\ket{1}.
\end{equation*}

A single-qubit gate that is used in many quantum algorithms is the
so-called Hadamard gate:
\begin{equation*}
  H = \frac{1}{\sqrt{2}} \begin{pmatrix} 1 & 1 \\ 1 & -1 \end{pmatrix}.
\end{equation*}
The action of $H$ is as follows:
\begin{equation*}
  H \ket{0} = \frac{1}{\sqrt{2}}\left(\ket{0} + \ket{1}\right) \qquad
  H \ket{1} = \frac{1}{\sqrt{2}}\left(\ket{0} - \ket{1}\right)
\end{equation*}
In subsequent sections we will need an algebraic expression for the
action of Hadamard gates on basis states. The effect of $H$ on a
1-qubit basis state $\ket{x}$ (where $x = 0$ or $1$) can be
summarized as follows:
\begin{equation*}
  H \ket{x} = \frac{1}{\sqrt{2}}(\ket{0} + (-1)^x\ket{1}) =
  \frac{1}{\sqrt{2}} \sum_{k=0}^1 (-1)^{k x} \ket{k}.
\end{equation*}
This is consistent with our previous definition. If we apply
$H^{\otimes q}$ on a $q$-qubit basis state $\ket{\v{x}}_q$, we obtain:
\begin{align}
  \label{eq:hadamard}
  \begin{split}
  H^{\otimes q} \ket{\v{x}}_q &= \frac{1}{\sqrt{2^q}} \sum_{k_{1}=0}^1 \cdots \sum_{k_{q}=0}^1 (-1)^{\sum_{h=1}^{q} k_h \v{x}_{h}} \ket{k_{1}}_1 \otimes \cdots \otimes \ket{k_q}_1 \\
  &= \frac{1}{\sqrt{2^q}} \sum_{\vk \in \{0,1\}^q} (-1)^{\vk \bullet \v{x}} \ket{\vk}_q,
  \end{split}
\end{align}
where $\bullet$ is the bitwise dot product, as defined in
Section~\ref{sec:notation}. When considering multiple Hadamard gates
in parallel, we will also make use of the following relationship, that
can be easily verified using the definition:
\begin{equation}
  \label{eq:hrecursive}
  H^{\otimes q} = \frac{1}{\sqrt{2}} \begin{pmatrix} H^{\otimes q-1} & H^{\otimes q-1} \\ H^{\otimes q-1} & -H^{\otimes q-1} \end{pmatrix}.
\end{equation}
The next proposition shows one of the
reasons why the Hadamard gate is frequently employed in many quantum
algorithms.
\begin{proposition}
  Given a $q$-qubit quantum computing device initially in the state
  $\ket{\v{0}}_q$, applying the Hadamard gate to all qubits, or
  equivalently the matrix $H^{\otimes q}$, yields the uniform
  superposition of basis states $\frac{1}{\sqrt{2^q}} \sum_{\vj \in
    \{0,1\}^q} \ket{\vj}$.
\end{proposition}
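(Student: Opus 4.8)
The plan is to reduce the claim to the already-established formula \eqref{eq:hadamard} for the action of $H^{\otimes n}$ on a basis state. Setting $x = 0$ and $n = q$ in \eqref{eq:hadamard} gives $H^{\otimes q}\ket{0}_q = \frac{1}{\sqrt{2^q}}\sum_{k=0}^{2^q-1}(-1)^{k \bullet_q 0}\ket{k}_q$, so the only thing left to check is that every sign equals $+1$. This follows immediately from Definition \ref{def:bullet}: since $0B_q$ is the all-zeros string, $(0B_q)_h = 0$ for all $h$, hence $k \bullet_q 0 = \sum_{h=1}^q (kB_q)_h (0B_q)_h = 0$ for every $k$. Therefore $(-1)^{k\bullet_q 0} = 1$ for all $k$, and the sum collapses to $\frac{1}{\sqrt{2^q}}\sum_{k=0}^{2^q-1}\ket{k}_q$, which is exactly the uniform superposition.

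If one prefers a self-contained argument not relying on \eqref{eq:hadamard}, I would instead use the multiplicativity of the tensor product. Since $\ket{0}_q = \ket{0}^{\otimes q}$ by Definition \ref{def:standardbasis}, property (ii) of Proposition \ref{prop:tensor} (applied $q-1$ times) gives $H^{\otimes q}\ket{0}_q = (H\ket{0})^{\otimes q} = \left(\frac{1}{\sqrt 2}(\ket{0}+\ket{1})\right)^{\otimes q}$. Pulling the scalars out with property (v) yields $\frac{1}{\sqrt{2^q}}(\ket{0}+\ket{1})^{\otimes q}$, and expanding the tensor power by bilinearity (properties (iii) and (iv)) produces a sum of $2^q$ terms, one for each way of choosing $\ket{0}$ or $\ket{1}$ in each of the $q$ factors; each such term is a basis vector $\ket{j}_q$ by the concatenation-of-indices convention, each $j \in \{0,\dots,2^q-1\}$ occurs exactly once, and every coefficient equals $\frac{1}{\sqrt{2^q}}$. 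This again gives $\frac{1}{\sqrt{2^q}}\sum_{j=0}^{2^q-1}\ket{j}_q$.

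There is essentially no hard step here: the result is a special case of machinery already developed. The only place requiring a little care is the combinatorial bookkeeping in the second approach — verifying that the $2^q$ terms in the expanded tensor power are precisely the $2^q$ distinct basis states, each appearing once — but this is exactly the concatenation observation made right after the Example following Definition \ref{def:standardbasis}. A third option would be induction on $q$ using the block form \eqref{eq:hrecursive}, but that introduces more notational overhead than either of the above while offering no real advantage.
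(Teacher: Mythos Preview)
Your proposal is correct. Your second, self-contained argument via $(H\ket{0})^{\otimes q}$ is exactly the paper's proof; your first argument, specializing \eqref{eq:hadamard} at $x=0$, is an equally valid shortcut that the paper does not use but which follows immediately from material already established.
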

\begin{proof}
  We have:
  \begin{equation*}
    H^{\otimes q} \ket{\vec{0}}_q = H^{\otimes q} \ket{0}^{\otimes q} =
    \left(H\ket{0}\right)^{\otimes q} = \left(\frac{1}{\sqrt{2}}\ket{0}
    + \frac{1}{\sqrt{2}}\ket{1}\right)^{\otimes q} =
    \frac{1}{\sqrt{2^q}} \sum_{\vj \in \{0,1\}^q} \ket{\vj}.
  \end{equation*}
\end{proof}
\begin{remark}
  The uniform superposition of the $2^q$ basis states on $q$ qubits
  can be obtained from the initial state $\ket{\vec{0}}_q$ by
  applying $q$ gates only.
\end{remark}
The multiple Hadamard can be represented by one of the equivalent
circuits given in Figure \ref{fig:hadamard}.
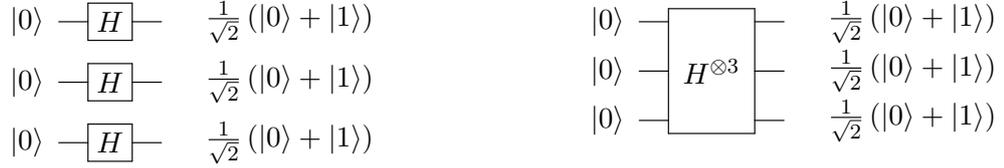
\begin{figure}[h!]
\leavevmode
\centering
\Qcircuit @C=1em @R=0.8em {
\lstick{\ket{0}} & \gate{H}  & \qw & \rstick{\frac{1}{\sqrt{2}}\left(\ket{0} + \ket{1}\right)}\\
\lstick{\ket{0}} & \gate{H}  & \qw & \rstick{\frac{1}{\sqrt{2}}\left(\ket{0} + \ket{1}\right)}\\
\lstick{\ket{0}} & \gate{H}  & \qw & \rstick{\frac{1}{\sqrt{2}}\left(\ket{0} + \ket{1}\right)}\\
}
\hspace{15em}
\Qcircuit @C=1em @R=0.8em {
\lstick{\ket{0}} & \multigate{2}{H^{\otimes 3}}  & \qw & \rstick{\frac{1}{\sqrt{2}}\left(\ket{0} + \ket{1}\right)}\\
\lstick{\ket{0}} & \ghost{H^{\otimes_3}}  & \qw & \rstick{\frac{1}{\sqrt{2}}\left(\ket{0} + \ket{1}\right)}\\
\lstick{\ket{0}} & \ghost{H^{\otimes_3}}  & \qw & \rstick{\frac{1}{\sqrt{2}}\left(\ket{0} + \ket{1}\right)}\\
}
\caption{Two representations for multiple Hadamard gates.}
\label{fig:hadamard}
\end{figure}
Many quantum algorithms (for example, the algorithms discussed in
Sections \ref{sec:simon} and \ref{sec:grover}) start by setting the
state of the quantum device to a uniform superposition, and then apply
further operations which, by linearity, are simultaneously applied to
all the possible binary strings. This is a remarkable advantage of
quantum computing over classical computing.

\optbox{

  Readers with advanced knowledge of theoretical computer science
  might be wondering how this compares to classical probabilistic
  computation: after all, probabilistic Turing machines are a
  well-known concept in computational complexity. A probabilistic
  Turing machine is initialized with a set of random bits that take an
  unknown value and influence the state transition. The state is
  described by a probability distribution over all the possible
  states, because we do not know the value of the random bits with
  which the machine is initialized. When a state transition occurs, to
  update the description of the state we need to apply the transition
  to all states that appear with positive probability. In this sense,
  operations in a probabilistic Turing machine can be thought of as
  being simultaneously applied to many (possibly all) binary
  strings. However, a probabilistic Turing machine admits a more
  compact description of the state: if we know the random bits with
  which the machine is initialized, then the state becomes
  deterministically known. Hence, for a given value of the random
  bits, the state of the probabilistic Turing machine can be described
  in linear space, and operations map one state into another state. On
  the other hand, it is not known how to obtain such a compact
  description for a quantum computer: there is no equivalent for the
  random bits, and a characterization of the state truly requires an
  exponential number of complex coefficients. In fact, it is believed
  that quantum computers are more powerful than probabilistic Turing
  machines, although there is no formal proof.

}

\optbox{

  To conclude our discussion on single-qubit gates, we remark that all
  single-qubit can be represented by the following parameterized
  matrix that describes all unitary matrices (up to a global phase
  factor, see Example~\ref{ex:globalphase}):
  \begin{equation*}
    U(\theta, \phi, \lambda) = 
    \begin{pmatrix}
      e^{-i(\phi + \lambda)/2} \cos (\theta/2) &  -e^{-i(\phi - \lambda)/2} \sin (\theta/2) \\
      e^{i(\phi - \lambda)/2} \sin (\theta/2) & e^{i(\phi + \lambda)/2} \cos (\theta/2)
    \end{pmatrix}
  \end{equation*}
  All single-qubit gates can be obtained by an appropriate choice of
  parameters $\theta, \phi, \lambda$. The open-source library Qiskit
  used for the numerical experiments in Section~\ref{sec:code} allows
  specifying a single-qubit gate in terms of $\theta, \phi, \lambda$, if
  so desired.
}

Another fundamental gate is the CNOT gate, also called ``controlled
NOT''. The CNOT gate is a two-qubit gate that has a control bit and a
target bit, and acts as follows: if the control bit is $\ket{0}$,
nothing happens, whereas if the control bit is $\ket{1}$, the target
bit is bit-flipped (i.e., the same effect
as the $X$ gate). The corresponding circuit is given in Figure
\ref{fig:cnot}.
\begin{figure}[h!]
\leavevmode
\centering
\Qcircuit @C=1em @R=0.7em {
  & \ctrl{1}  & \qw & \\
  & \targ  & \qw & 
}
\caption{The $\text{CNOT}_{12}$, or controlled NOT, gate with control
  qubit 1 and target qubit 2.}
\label{fig:cnot}
\end{figure}
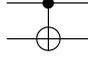

\noindent The matrix description of the gate with control qubit 1 and
target qubit 2 is as follows:
\begin{equation*}
  \text{CNOT}_{12} = 
  \begin{pmatrix}
    1 & 0 & 0 & 0 \\ 0 & 1 & 0 & 0 \\ 0 & 0 & 0 & 1 \\ 0 & 0 & 1 & 0
  \end{pmatrix}.
\end{equation*}
We can easily see that the effect of CNOT is as follows:
\begin{align*}
  \text{CNOT}_{12} \ket{00} &= \ket{00} & \text{CNOT}_{12} \ket{01} &= \ket{01} \\
  \text{CNOT}_{12} \ket{10} &= \ket{11} & \text{CNOT}_{12} \ket{11} &= \ket{10}.
\end{align*}
An interesting feature of the CNOT gate is that it can be used to swap
two qubits. A swap between two qubits Q$i$ and Q$j$ is defined as the
operation that maps a quantum state into a new quantum state in which
every basis state has its $i$-th and $j$-th digit permuted. If two
qubits are in a product state $\ket{\psi}_1 \otimes \ket{\phi}_1$,
then $\text{SWAP}(\ket{\psi}_1 \otimes \ket{\phi}_1) = \ket{\phi}_1
\otimes \ket{\psi}_1$.  Considering that CNOT, like all quantum gates,
is a linear map, it may sound surprising that it can implement a
swap. However, the SWAP gate can indeed be constructed out of CNOTs as
depicted in Figure \ref{fig:swap}.
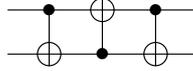
\begin{figure}[h!]
\leavevmode
\centering
\Qcircuit @C=1em @R=0.7em {
  & \ctrl{1}  & \targ      & \ctrl{1} & \qw & \\
  & \targ     & \ctrl{-1}  & \targ    & \qw & 
}
\caption{A circuit that swaps two qubits.}
\label{fig:swap}
\end{figure}
\begin{proposition}
  The circuit in Figure~\ref{fig:swap}, constructed with three CNOTs,
  swaps qubits 1 and 2.
\end{proposition}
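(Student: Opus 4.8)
The plan is to use the fact that every quantum gate is linear (Assumption~\ref{ass:operations}): the three-CNOT circuit is a composition of unitaries, hence a linear map on $(\C^2)^{\otimes 2}$, so it is enough to check that it agrees with the SWAP map $\ket{a}\otimes\ket{b}\mapsto\ket{b}\otimes\ket{a}$ on the four basis states $\ket{ab}$, $a,b\in\{0,1\}$. Rather than running four separate cases, I would carry the verification out symbolically using bitwise-XOR algebra, which treats all four basis states simultaneously.

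First I would fix conventions and record how each gate acts on a basis state. Reading Figure~\ref{fig:swap} from left to right, and recalling that the bottom wire is qubit~$1$ and the top wire qubit~$2$, the first and third gates are $\text{CNOT}_{21}$ (control qubit~$2$, target qubit~$1$) and the middle gate is $\text{CNOT}_{12}$ (control qubit~$1$, target qubit~$2$); as a matrix the circuit therefore implements $\text{CNOT}_{21}\,\text{CNOT}_{12}\,\text{CNOT}_{21}$ (left-to-right in the diagram becomes right-to-left in the product). From the defining behavior of a controlled-NOT — flip the target exactly when the control is $\ket{1}$ — one gets, for all $a,b\in\{0,1\}$,
\begin{equation*}
  \text{CNOT}_{21}\,(\ket{a}\otimes\ket{b}) = \ket{a}\otimes\ket{b\oplus a},
  \qquad
  \text{CNOT}_{12}\,(\ket{a}\otimes\ket{b}) = \ket{a\oplus b}\otimes\ket{b};
\end{equation*}
if preferred, $\text{CNOT}_{12}$ can be written out as the explicit $4\times4$ permutation matrix, just as the paper does for $\text{CNOT}_{21}$, and these identities read off from it.

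Then I would propagate a generic basis vector through the three gates, using only $a\oplus a = 0$ together with commutativity and associativity of $\oplus$:
\begin{align*}
  \ket{a}\otimes\ket{b}
  &\;\overset{\text{CNOT}_{21}}{\longmapsto}\; \ket{a}\otimes\ket{b\oplus a} \\
  &\;\overset{\text{CNOT}_{12}}{\longmapsto}\; \ket{a\oplus(b\oplus a)}\otimes\ket{b\oplus a} = \ket{b}\otimes\ket{b\oplus a} \\
  &\;\overset{\text{CNOT}_{21}}{\longmapsto}\; \ket{b}\otimes\ket{(b\oplus a)\oplus b} = \ket{b}\otimes\ket{a}.
\end{align*}
Since this holds for every basis state, by linearity the circuit maps an arbitrary $\ket{\psi}\in(\C^2)^{\otimes2}$ to the state with its two qubits exchanged; that is, it realizes the SWAP gate. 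As a consistency check one could instead multiply the three $4\times4$ matrices and observe that the product is the permutation matrix swapping the $\ket{01}$ and $\ket{10}$ coordinates.

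There is no genuine mathematical difficulty here; the only thing that requires care is the bookkeeping — correctly identifying control versus target on each of the three gates and composing them in the right order. The content of the claim is essentially the XOR cancellation $a\oplus(b\oplus a)=b$ appearing in the middle step, which is the reason exactly three alternating CNOTs suffice.
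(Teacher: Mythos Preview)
Your proof is correct and follows the same approach as the paper: both invoke linearity to reduce the claim to the action of $\text{CNOT}_{21}\,\text{CNOT}_{12}\,\text{CNOT}_{21}$ on the four computational-basis states. The only difference is presentational --- the paper checks the four cases $\ket{00},\ket{01},\ket{10},\ket{11}$ explicitly, whereas you handle them all at once with the parametrized XOR computation $\ket{a}\otimes\ket{b}\mapsto\ket{b}\otimes\ket{a}$.
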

\begin{proof}
  By linearity, it suffices to show that the circuit above maps
  $\ket{00} \to \ket{00}, \ket{01} \to \ket{10}, \ket{10} \to
  \ket{01}$, and $\ket{11} \to \ket{11}$. We have:
  \begin{align*}
    \text{CNOT}_{12} \text{CNOT}_{21} \text{CNOT}_{12} \ket{00} = \text{CNOT}_{12} \text{CNOT}_{21} \ket{00} 
    = \text{CNOT}_{12} \ket{00} = \ket{00}. \\
    \text{CNOT}_{12} \text{CNOT}_{21} \text{CNOT}_{12} \ket{01} = \text{CNOT}_{2}1 \text{CNOT}_{21} \ket{01} 
    = \text{CNOT}_{12} \ket{11} = \ket{10}. \\
    \text{CNOT}_{12} \text{CNOT}_{21} \text{CNOT}_{12} \ket{10} = \text{CNOT}_{12} \text{CNOT}_{21} \ket{11} 
    = \text{CNOT}_{12} \ket{01} = \ket{01}. \\
    \text{CNOT}_{12} \text{CNOT}_{21} \text{CNOT}_{12} \ket{11} = \text{CNOT}_{12} \text{CNOT}_{21} \ket{10} 
    = \text{CNOT}_{12} \ket{10} = \ket{11}.
  \end{align*}
  Therefore, the SWAP circuit maps:
  \begin{equation*}
  \alpha_{00} \ket{00} + \alpha_{01} \ket{01} + \alpha_{10} \ket{10} +
  \alpha_{11} \ket{11} \to  \alpha_{00} \ket{00} + \alpha_{01} \ket{10} +
  \alpha_{10} \ket{01} + \alpha_{11} \ket{11}.
  \end{equation*}
\end{proof}
The SWAP circuit is particularly important for practical reasons: in
the current generation of quantum computing hardware, two-qubit gates
can only be applied among certain pairs of qubits. For example, when
employing one of the most prevalent quantum hardware technologies
(superconducting qubits, see
e.g.~\cite{castelvecchi2017leap,devoret2013superconducting}),
two-qubit gates can only be applied to qubits that are physically
adjacent on a chip.  Thanks to the SWAP, as long as the connectivity
graph of the qubits on the device is a connected graph, two-qubit
gates can be applied to any pair of qubits: if the qubits are not
directly connected on the graph (e.g., physically located next to each
other on the chip), we just need to SWAP one of them as many times as
is necessary to bring it to a location adjacent to the other
qubit. This way, we can assume that each qubit can interact with all
other qubits from a theoretical point of view, even if from a
practical perspective this may require extra SWAP gates.

A set of gates consisting of (some) single-qubit gates plus CNOT can
be shown to be sufficient to construct any unitary matrix with
arbitrary precision. This is the concept of {\em universality}.
\begin{definition}
  A unitary matrix $A$ is an $\epsilon$-approximation of a unitary
  matrix $U$ if $\sup_{\psi : \| \psi \| = 1} \| (U - A)\psi\| <
  \epsilon$. A set of gates that can be used to construct an
  $\epsilon$-approximation of any unitary matrix, for any $\epsilon >
  0$ and on any given number of qubits, is called a {\em universal}
  set of gates.
\end{definition}
To build a universal set of gates, the first step is to show how to
construct arbitrary single-qubit gates, then use these gates to build
larger ones.
\begin{theorem} (Solovay-Kitaev \cite{kitaev97quantum,nielsen02quantum})
  \label{thm:sk}
  Let $U \in \C^{2 \times 2}$ be an arbitrary unitary matrix. Then
  there exists a constant $c$ such that there exists a sequence $S$ of
  gates of length $O(\log^c \frac{1}{\epsilon})$ that yields an
  $\epsilon$-approximation of $U$ and consists only of $H$, $T
  = \begin{pmatrix} 1 & 0 \\ 0 & e^{i\frac{\pi}{4}} \end{pmatrix}$ and
  CNOT gates.
\end{theorem}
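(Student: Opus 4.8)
The plan is to prove the Solovay--Kitaev theorem by the standard argument that bootstraps a fixed-precision $\epsilon_0$-net into arbitrarily fine nets using group commutators. First I would dispose of the global phase: every unitary $U\in\C^{2\times 2}$ equals $e^{i\phi}\tilde U$ with $\det\tilde U=1$, and since no measurement can detect an overall phase it suffices to $\epsilon$-approximate elements of $SU(2)$; for a single-qubit target the CNOT gate plays no role, so I would work only with $H$ and $R_{\pi/4}$. The one external fact I would invoke is that the subgroup of $SO(3)\cong SU(2)/\{\pm I\}$ generated by (the images of) $H$ and $R_{\pi/4}$ is \emph{dense}: the usual argument exhibits short words in $H,R_{\pi/4}$ that act on the Bloch sphere as rotations by angles that are irrational multiples of $\pi$ about two distinct axes --- the irrationality being a number-theoretic statement --- and two such rotations generate a dense subgroup of $SO(3)$. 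By compactness of $SU(2)$, density yields, for any fixed constant $\epsilon_0>0$, a finite set $S_0$ of gate words of bounded length $\ell_0=\ell_0(\epsilon_0)$ that forms an $\epsilon_0$-net of $SU(2)$.

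The geometric heart of the proof is two lemmas about the group commutator $[V,W]:=VWV^{*}W^{*}$. The \emph{shrinking lemma}: there is a constant $c_1$ such that $\|V-I\|,\|W-I\|\le\delta$ implies $\|[V,W]-I\|\le c_1\delta^{2}$; one proves this by writing $V=e^{iA}$, $W=e^{iB}$ with $\|A\|,\|B\|=O(\delta)$ and using the Baker--Campbell--Hausdorff expansion $\log[V,W]=-[A,B]+O(\delta^{3})$, where $\|[A,B]\|=O(\delta^{2})$. The \emph{commutator-decomposition lemma}: there is a constant $c_2$ such that every $U\in SU(2)$ with $\|U-I\|\le\epsilon$ can be written \emph{exactly} as $U=[V,W]$ with $\|V-I\|,\|W-I\|\le c_2\sqrt{\epsilon}$; I would prove this in the rotation picture --- a rotation by a small angle $\theta$ about an axis $\hat n$ is realized as a commutator of two rotations by angles $\Theta(\sqrt{\theta})$ about explicitly chosen axes, after which the parameters are pinned to give exact equality by a continuity argument. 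I also need the refinement that near the identity the commutator map is Lipschitz with a small constant: if $V,V'$ and $W,W'$ are all $O(\sqrt{\epsilon})$-close to $I$ with $\|V-V'\|,\|W-W'\|\le\eta$, then $\|[V,W]-[V',W']\|=O(\sqrt{\epsilon}\,\eta)$, which again falls out of the Lie-algebra expansion.

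With the lemmas in place I would define the recursive approximator $\mathrm{SK}(\cdot,n)$: $\mathrm{SK}(U,0)$ returns the closest element of $S_0$, and for $n\ge1$ one sets $U_{n-1}=\mathrm{SK}(U,n-1)$, forms $\Delta=UU_{n-1}^{*}$, writes $\Delta=[V,W]$ with $V,W$ within $c_2\sqrt{\epsilon_{n-1}}$ of $I$ via the decomposition lemma, computes $V_{n-1}=\mathrm{SK}(V,n-1)$ and $W_{n-1}=\mathrm{SK}(W,n-1)$, and returns $U_n=V_{n-1}W_{n-1}V_{n-1}^{*}W_{n-1}^{*}\,U_{n-1}$. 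The induction hypothesis, stated for all $T\in SU(2)$, is $\|\mathrm{SK}(T,n)-T\|\le\epsilon_n$; it holds at $n=0$ by the net property, and for the step one uses that $V_{n-1},W_{n-1}$ remain $O(\sqrt{\epsilon_{n-1}})$-close to $I$ together with the Lipschitz estimate to get $\|U_n-U\|=\|[V_{n-1},W_{n-1}]-\Delta\|=O(\epsilon_{n-1}^{3/2})$, so one may take $\epsilon_n=C\,\epsilon_{n-1}^{3/2}$ --- provided $\epsilon_0$ was fixed small enough that $C^{2}\epsilon_0<1$, which is precisely why the base net must have sufficiently good \emph{constant} precision. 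Unrolling gives $C^{2}\epsilon_n=(C^{2}\epsilon_0)^{(3/2)^{n}}$, so $n=O(\log\log\tfrac1\epsilon)$ levels suffice to reach precision $\epsilon$; since each level multiplies the word length by a constant factor $\kappa$ (one can take $\kappa=5$ after closing the gate set under inverses, which is harmless because $R_{\pi/4}^{-1}=R_{\pi/4}^{7}$), the final length is $\kappa^{n}\ell_0=O\!\big((\log\tfrac1\epsilon)^{\log_{3/2}\kappa}\big)=O(\log^{c}\tfrac1\epsilon)$ with $c=\log_{3/2}\kappa\approx3.97<4$.

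I expect the commutator-decomposition lemma to be the main obstacle: obtaining an \emph{exact} equality $U=[V,W]$ together with the $\sqrt{\epsilon}$ bound on $\|V-I\|,\|W-I\|$ is the genuinely geometric step and the place where the exponent $3/2$ --- and hence the final exponent $c$ --- is born. By comparison, the density input can be quoted, the expansions are routine, and the recursion bookkeeping is mechanical once the two lemmas and the Lipschitz estimate are in hand.
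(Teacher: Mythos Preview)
The paper does not prove this theorem: it merely states it with citations to \cite{kitaev97quantum,nielsen02quantum} and remarks that \cite{dawson05solovay} gives a proof with $c\approx 3.98$. Your sketch is exactly the Dawson--Nielsen commutator-bootstrapping argument that the paper is pointing to, down to the recursion $\epsilon_n=C\epsilon_{n-1}^{3/2}$, the word-length factor $5$ per level, and the resulting exponent $c=\log_{3/2}5\approx 3.97$, so there is nothing to compare --- you have supplied the proof the paper chose to outsource.
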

The theorem implies that just two single-qubit gates together with
CNOT allow us to build any single-qubit gate with arbitrary
precision. \cite{dawson05solovay} gives a proof with $c \approx
3.98$. More recent work gives improved algorithm with smaller $c$, in
fact even $c = 1$ (but different constants), see
\cite{selinger12efficient,kliuchnikov16practical}. To go from
single-qubit gates to general $q$-qubit gates, one needs at most
$O(q^2 4^q)$ basic gates (i.e., the gates of Theorem \ref{thm:sk});
intuitively, this is because each gate on $q$ qubits has $2^q \times
2^q$ elements, and it takes $q^2$ basic gates to ``fill'' an arbitrary
element of a large matrix -- for a detailed discussion, see
\cite[Ch.~4]{nielsen02quantum}. In other words, the set of gates
consisting of just $H, T$ and CNOT is universal. This shows that with
a very small set of basic gates, we can construct any unitary matrix
in any dimension, although this may require many operations. Once
again, this is important for practical reasons: the current generation
of quantum hardware only allows (natively) certain single-qubit gates
and CNOT gates, but all other gates can be constructed from these.

We conclude our discussion on basic operations with a quantum circuit
for the logic AND gate. Although this gate is not used by the
algorithm described later, it is used in the numerical examples. We
already know that the $X$ gate performs the logic NOT: having access
to the AND guarantees that we can construct any Boolean circuit (since
we stated already in Section \ref{sec:operations} that quantum
computers are Turing-complete, they can of course perform Boolean
logic). The quantum version of the AND gate is the CCNOT
(doubly-controlled NOT) gate, that acts on three qubits: it has two
control qubits, and it flips (bit flip, i.e., as the $X$ gate) the
third qubit if and only if both control qubits are $\ket{1}$. The gate
is depicted in Figure \ref{fig:ccnot}.
\begin{figure}[h!]
\leavevmode
\centering
\Qcircuit @C=1em @R=0.7em @!R {
  & \ctrl{2}  & \qw & \\
  & \ctrl{1}  & \qw & \\
  & \targ  & \qw & 
}
\caption{The CCNOT, or doubly-controlled NOT, gate.}
\label{fig:ccnot}
\end{figure}
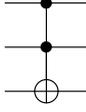
The action of CCNOT can be described as: $\ket{x}_1 \otimes \ket{y}_1
\otimes \ket{z}_1 \to \ket{x}_1 \otimes \ket{y}_1 \otimes \ket{z
  \oplus (x \cdot y)}_1$, where $x, y, z \in \{0,1\}$. Notice that if
$z = 0$, CCNOT indeed computes the logical AND between $x$ and $y$
because $0 \oplus (x \cdot y) = x \wedge y$.

\optbox{

  Of course, following our earlier discussion, CCNOT can be
  constructed using only the basic gates indicated in Theorem
  \ref{thm:sk}. For this, we can use the circuit in Figure
  \ref{fig:ccnotbasic}, see \cite{nielsen02quantum}. In this circuit
  we also use the conjugate transpose $T^*$ of the $T$ gate, but it is
  easy to see that if we really want to stick to the gates $H, T$,
  CNOT only, $T^*$ can be constructed from $T$ because
  $e^{-i\frac{\pi}{4}} = e^{i\frac{7\pi}{8}}$.  Verifying correctness
  of the construction in Figure \ref{fig:ccnotbasic} requires a few
  calculations, that we leave as an exercise. One way is to carry out
  the matrix multiplications; another way, probably more manageable if
  doing calculations by hand, is to use linearity and look at the
  effect of the circuit on each of the $2^3$ possible basis states. We
  show only part of the calculations here. Suppose the circuit is
  applied to the basis state $\ket{11x}$ with $x \in \{0,1\}$. After
  performing several simplifications ($T$ and $T^*$ cancel out, the
  $T$ gate has no effect on a qubit in state $\ket{0}$, and we can
  transform the CNOTs on the third qubit line into $X$ gates because
  we already know that the first and second qubit are in state
  $\ket{1}$), we find out that the circuit maps:
  \begin{equation*}
    \ket{1} \otimes \ket{1} \otimes \ket{x} \to (T\ket{1}) \otimes
    (T\ket{1}) \otimes (H T X T^* X T X T^* X H \ket{x}).
  \end{equation*}
  Doing the calculations, we see that:
  \begin{align*}
    H T X T^* X T X T^* X H = \begin{pmatrix} 0 & -i \\ -i & 0 \end{pmatrix},
  \end{align*}
  so that the mapping reads:
  \begin{align*}
    \ket{1} \otimes \ket{1} \otimes \ket{1} \to &(T\ket{1}) \otimes
    (T\ket{1}) \otimes (H T X T^* X T X T^* X H \ket{1}) = \\
    & (e^{i\frac{\pi}{4}} \ket{1}) \otimes (e^{i\frac{\pi}{4}} \ket{1}) \otimes (-i \ket{0}) = \ket{1} \otimes \ket{1} \otimes \ket{0} \\
    \ket{1} \otimes \ket{1} \otimes \ket{0} \to &(e^{i\frac{\pi}{4}}\ket{1})
    \otimes (e^{i\frac{\pi}{4}}\ket{1}) \otimes (-i \ket{1}) =
    \ket{1} \otimes \ket{1} \otimes \ket{1}.
  \end{align*}  
  In general, coming up with these constructions requires a good deal
  of experience, or a piece of code implementing the algorithms
  referenced above.  }
\begin{figure}[h!]
  \begin{mdframed}[linecolor=gray!20,backgroundcolor=gray!20]
  \leavevmode
  \centering
  \Qcircuit @C=1em @R=0.7em @!R {
    & \qw      & \qw      & \qw        & \ctrl{2} & \qw      & \qw      & \qw        & \ctrl{2} & \qw        & \ctrl{1} & \qw        & \ctrl{1} & \gate{T} & \qw      & \qw & \\
    & \qw      & \ctrl{1} & \qw        & \qw      & \qw      & \ctrl{1} & \qw        & \qw      & \gate{T^*} & \targ    & \gate{T^*} & \targ    & \gate{T} & \gate{T} & \qw &\\
    & \gate{H} & \targ    & \gate{T^*} & \targ    & \gate{T} & \targ    & \gate{T^*} & \targ    & \gate{T}   & \gate{H} & \qw        & \qw      & \qw      & \qw      & \qw
  }
  \caption{Decomposition of CCNOT in terms of the universal set of gates of Theorem \ref{thm:sk}.}
  \label{fig:ccnotbasic}
  \end{mdframed}
\end{figure}
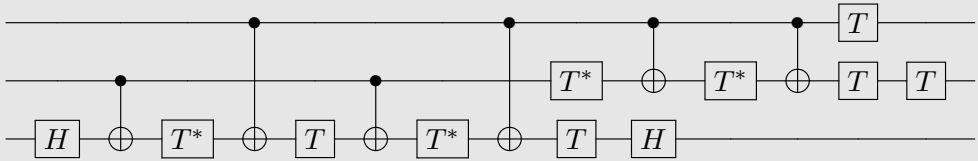

\optbox{
\subsection{Can we solve NP-hard problems?}
It is important to remark that even if we can easily create a uniform
superposition of all basis states, the rules of measurement imply that
using just this easily-obtained superposition does not allow us
satisfactorily solve NP-complete problems such as, for example, SAT
(the satisfiability problem). Indeed, suppose we have a quantum
circuit $U_f$ that encodes a SAT formula on $q$ boolean variables; in
other words, a unitary $U_f : \ket{\vj}_q \otimes \ket{0}_1 \to
\ket{\vj}_q \otimes \ket{f(\vj)}_1$, where $f(\vj)$ is 1 if the binary
string $\vj$ satisfies the formula, and 0 if not. We might be tempted
to apply $H^{\otimes q}$ to the initial state $\ket{\v{0}}_q$ to
create the uniform superposition $\frac{1}{\sqrt{2^q}} \sum_{\vj \in
  \{0,1\}^q} \ket{\vj}$, apply $U_f$ to this superposition (which
evaluates the truth assignment of all possible binary strings), and
then perform a measurement on all $q+1$ qubits. But measuring the
state:
\begin{equation*}
  U_f \left(\frac{1}{\sqrt{2^q}}
\sum_{\vj \in \{0,1\}^q} \ket{\vj} \otimes \ket{0}_1\right) = \frac{1}{\sqrt{2^q}}
\sum_{\vj \in \{0,1\}^q} \ket{\vj} \otimes \ket{f(\vj)}
\end{equation*}
will return a binary string that satisfies the formula if and only if
the last qubit has value 1 after the measurement, and this happens
with a probability that depends on the number of binary assignments
that satisfy the formula. If the SAT problem at hand is solved by
exactly $\rho$ assignments out of $2^n$ possible assignments, then the
probability of finding the solution after one measurement is
$\frac{\rho}{2^n}$: we have done nothing better than randomly sampling
a binary string and hoping that it satisfies the SAT formula. Clearly,
this is not a good algorithm. In fact, in general solving NP-hard
problems in polynomial time with quantum computers is not believed to
be possible: most researchers believe that the complexity class BQP,
which is the class of problems solvable in polynomial time by a
quantum computer with bounded (and small) error probability, does not
contain the class NP. Of course, one cannot hope to prove this
unconditionally, because showing NP $\not\subseteq$ BQP would resolve
the famous P vs NP problem. Nevertheless, it is strongly believed that
NP $\not\subseteq$ BQP, due to the lower bound on black-box search of
\cite{bennett1997strengths}, and the inability of quantum computing
researchers to develop an efficient quantum algorithm for SAT.

\par Even if we cannot solve all difficult problems in polynomial time
using a quantum computer, we will see in the next sections two
examples of quantum algorithms that are faster than any known
classical algorithm. The basic principle employed by these algorithms
is to start with a uniform superposition of basis states, then apply
operations that make the basis states interact with each other so that
the modulus of the coefficients for some (desirable) basis states
increase, which implies that the other coefficients
decrease. Performing a measurement will then reveal the solution to
the problem at hand, or some useful information about the solution,
with high probability. Of course, how to do this in order to solve a
specific problem is exactly where the crux of the matter lies.

}

\section{A simple period finding problem: Simon's algorithm}
\label{sec:simon}
In this section we describe a quantum algorithm, known as Simon's
algorithm \cite{simon97power}, that gives an expected exponential
speedup with respect to classical algorithms. Simon's algorithm is one
of the first examples of quantum speedup presented in the literature;
other early examples are
\cite{deutsch1992rapid,bernstein1997quantum}. However, we discuss
Simon's algorithm because it has many interesting features from an
educational perspective (namely, the fact that it uses both classical
and quantum computation, and that it yields an exponential speedup).

Admittedly, the problem that Simon's algorithm solves is not very
useful, but the ideas shown here give us a flavor of what quantum
computation can do. In fact, this algorithm was an inspiration for the
well-known and groundbreaking work of Shor on integer factorization
\cite{shor97polynomial}: a large part of Shor's algorithm relies on
the solution of a period finding problem, and Simon's algorithm solves
a simplified problem of the same flavor. Shor's algorithm is, however,
much more involved than Simon's algorithm, and a full treatment
requires several number-theoretical results that are beyond the scope
of this introductory material. Thus, we will focus on Simon's
algorithm, but at the end of this tutorial the reader should be
capable of undertaking the study of Shor's algorithm by himself or
herself, see also Section~\ref{sec:furtherreading}.

For Simon's algorithm, we are told that there exists a function $f :
\{0,1\}^n \to \{0,1\}^n$ with the property that $f(\v{x}) = f(\v{z})$
if and only if $\v{x} = \v{z} \oplus \v{a}$, for some unknown $\v{a}
\in \{0,1\}^n$. We do not know anything else about the function, and
the goal is to find $\v{a}$ by querying the function the smallest
number of times.
\begin{remark}
  For both Simon's algorithm and Grover's algorithm, the complexity of
  an algorithm is determined only in terms of the number of calls to
  the function $f$. Considerations on what the function $f$ actually
  implements, and how many operations are performed {\em inside} of
  $f$, or between the calls to $f$, are not part of our analysis. This
  model is known as {\em query complexity}, because -- as the name
  implies -- it defines the complexity of an algorithm as the number
  of queries to a given function (in this case, $f$). Query complexity
  is used as a model to answer important theoretical questions. There
  are many quantum algorithms that yield speedups under the query
  complexity model, but some others, e.g., Shor's algorithm, are
  faster than (known) classical algorithms under the more traditional
  computational complexity model, i.e., number of basic operations.
\end{remark}
Notice that if $\v{a} = \v{0}$ then the function is one-to-one,
whereas if $\v{a} \neq \v{0}$ the function is two-to-one, because for
every $\v{x}$, there is exactly another number in domain for which the
function has the same value. The function $f$ is assumed to be given
as a quantum circuit on $q = 2n$ qubits, via the unitary $U_f$
depicted in Figure \ref{fig:simonuf}, and we are allowed to query the
function in superposition. Remember that by linearity, to describe the
effect of $U_f$ it is enough to describe its behavior on all basis
states.
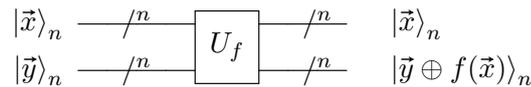
\begin{figure}[h!]
\leavevmode
\centering
\Qcircuit @C=1em @R=0.7em {
\lstick{\ket{\v{x}}_n} & \qw & {/^n} \qw & \qw & \multigate{1}{U_f} & \qw & {/^n} \qw & \qw & \rstick{\ket{\v{x}}_n} \\
\lstick{\ket{\v{y}}_n} & \qw & {/^n} \qw & \qw & \ghost{U_f}        & \qw & {/^n} \qw & \qw & \rstick{\ket{\v{y} \oplus f(\v{x})}_n}
}
\caption{The circuit implementing $U_f$ for Simon's problem, with
  basis states $\v{x}, \v{y} \in \{0,1\}^n$.}
\label{fig:simonuf}
\end{figure}

This particular form of the function, that maps $\ket{\v{x}}_n \otimes
\ket{\v{y}}_n$ to $\ket{\v{x}}_n \otimes \ket{\v{y} \oplus
  f(\v{x})}_n$, is typical of the quantum world. Notice that if
$\v{y} = \v{0}$, then $\ket{\v{y} \oplus f(\v{x})}_n =
\ket{f(\v{x})}_n$ so the circuit computes the desired
function. Furthermore, this is a reversible function, because applying
the same circuit $U_f$ goes back to the initial state:
\begin{equation*}
  U_f U_f (\ket{\v{x}}_n \otimes \ket{\v{y}}_n) = U_f (\ket{x}_n \otimes \ket{\v{y}
    \oplus f(\v{x})}_n) = \ket{\v{x}}_n \otimes \ket{\v{y} \oplus {f(\v{x})} \oplus f(\v{x})}_n =
  \ket{\v{x}}_n \otimes \ket{\v{y}}_n.
\end{equation*}

\subsection{Classical algorithm}
\label{sec:simonclassical}
Because we do not know anything about the binary string $\v{a}$, the
best we can do is to feed inputs to the function, and try to extract
information from the output. The number $\v{a}$ is determined once we
find two distinct inputs $\v{x}, \v{z}$ such that $f(\v{x}) =
f(\v{z})$, because then $\v{x} = \v{z} \oplus \v{a}$ which implies
$\v{x} \oplus \v{z} = \v{a}$.

Suppose we have evaluated $m$ distinct input values and we did not
find a match. Then $\v{a} \neq \v{x} \oplus \v{z}$ for all $\v{x},
\v{z}$ previously evaluated, therefore we have eliminated at most
$m(m-1)/2$ values of $\v{a}$. (Fewer values may have been eliminated
if we test inputs equal to $\v{x} \oplus \v{y} \oplus \v{z}$ for any
three input values $\v{x}, \v{y}, \v{z}$ already tested. In fact, if
we test $\v{w}$ such that $\v{w} = \v{x} \oplus \v{y} \oplus \v{z}$,
we have that $\v{w} \oplus \v{z} = \v{x} \oplus \v{y}$, therefore the
value $\v{w} \oplus \v{z}$ had already been eliminated from the list
of possible valus of $\v{a}$.) Since $m(m-1)/2$ is small compared to
$2^n$, the probability of success $\frac{m(m-1)}{2^{n+1}}$ is very
small until we have evaluated a number of inputs that is in the order
of $2^n$. In particular, to guarantee a probability of success of at
least $\rho$, we need $m(m-1) \ge \rho 2^{n+1}$, which implies that $m
\in O(\sqrt{\rho 2^n})$. Hence, for any positive constant $\rho$, the
number of required iterations is exponential.  After evaluating
$\frac{1 +\sqrt{2^{n+3}+1}}{2} \in O(2^{n/2})$ distinct input values
satisfying the condition outlined above for non-matching triplets (to
obtain this number, we found the smallest value of $m$ such that
$m(m-1) \ge 2^{n+1}$), we are guaranteed that a matching pair has been
found, or we can safely determine that $\v{a} = \v{0}$.

\subsection{Simon's algorithm: quantum computation}
Using a quantum computer, we can determine $\v{a}$ much faster. The idea,
first described in \cite{simon97power}, is to apply the circuit in
Figure \ref{fig:simoncircuit}.
\begin{figure}[h!]
\leavevmode
\centering
\Qcircuit @C=1em @R=0.7em {
\lstick{\ket{\v{0}}_n} & \qw & {/^n} \qw & \gate{H^{\otimes n}} & \qw & {/^n} \qw & \multigate{1}{U_f} & \qw & {/^n} \qw & \gate{H^{\otimes n}} & \qw & {/^n} \qw & \qw & \meter \\
\lstick{\ket{\v{0}}_n} & \qw & {/^n} \qw & \qw                  & \qw & \qw       & \ghost{U_f}        & \qw & {/^n} \qw & \qw & \qw & \qw & \qw & \qw \\
}
\caption{Quantum circuit used in Simon's algorithm.}
\label{fig:simoncircuit}
\end{figure}
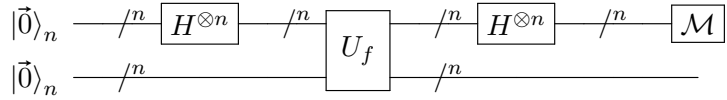

\noindent From an algebraic point of view, the circuit is described by the 
following equation:
\begin{equation*}
  (H^{\otimes n} \otimes I^{\otimes n})U_f(H^{\otimes n} \otimes
  I^{\otimes n}) (\ket{\v{0}}_n \otimes \ket{\v{0}}_n).
\end{equation*}
We now analyze the output of the quantum circuit, by looking at the
quantum states at intermediate steps of the circuit. Let $\ket{\psi}$
be the state just before the $U_f$ gate, $\ket{\phi}$ the state just
after $U_f$, and $\ket{\chi}$ the final state. In other words:
\begin{align*}
  \ket{\psi} &=   (H^{\otimes n} \otimes
  I^{\otimes n}) (\ket{\v{0}}_n \otimes \ket{\v{0}}_n) \\
  \ket{\phi} &=   U_f(H^{\otimes n} \otimes
  I^{\otimes n}) (\ket{\v{0}}_n \otimes \ket{\v{0}}_n) \\
  \ket{\chi} &=   (H^{\otimes n} \otimes I^{\otimes n})U_f(H^{\otimes n} 
  \otimes I^{\otimes n}) (\ket{\v{0}}_n \otimes \ket{\v{0}}_n).
\end{align*}
For $\ket{\psi}$, we know that $H^{\otimes n}$ creates a uniform
superposition of $\ket{\vj}_n, \vj \in \{0,1\}^n$ over the first $n$
quantum bits. Therefore we can write:
\begin{equation*}
  \ket{\psi} = (H^{\otimes n} \otimes I^{\otimes n}) (\ket{\v{0}}_n
  \otimes \ket{\v{0}}_n) = \frac{1}{\sqrt{2^n}} \sum_{\vj \in \{0,1\}^n}
  \ket{\vj}_n \otimes \ket{\v{0}}_n.
\end{equation*}
By linearity, applying $U_f$ to this state yields:
\begin{equation*}
  \ket{\phi} = U_f \ket{\psi} = \frac{1}{\sqrt{2^n}} \sum_{\vj \in
    \{0,1\}^n} \ket{\vj}_n \otimes \ket{\v{0} \oplus f(\vj)}_n =
  \frac{1}{\sqrt{2^n}} \sum_{\vj \in \{0,1\}^n} \ket{\vj}_n \otimes
    \ket{f(\vj)}_n.
\end{equation*}
We now need to analyze the effect of applying further Hadamard gates
on the top lines of the circuit. Using \eqref{eq:hadamard}, the next
step in the circuit is given by:
\begin{align}
  \ket{\chi} &= (H^{\otimes n} \otimes I^{\otimes n})
  \frac{1}{\sqrt{2^n}} \sum_{\vj \in \{0,1\}^n} \ket{\vj}_n \otimes
  \ket{f(\vj)}_n = \notag \\
  &= \frac{1}{2^n} \sum_{\vj \in \{0,1\}^n} \sum_{\vk \in \{0,1\}^n} 
  (-1)^{\vk \bullet \vj} \ket{\vk}_n \otimes
  \ket{f(\vj)}_n. \label{eq:simonbasecoeff}
\end{align}
When we make a measurement on the top $n$ qubit lines of $\ket{\chi}$
(i.e., the first $n$-qubit register, containing qubits 1 through $n$),
we obtain one of the stings $\vk$ with probability equal to the sum of
the modulus squared of the coefficient of the states $\ket{\vk}_n
\otimes \ket{f(\vj)}$, for all $\vj$. It is easier to analyze the case
$\va \neq \v{0}$ first: we will deal with the case $\va = 0$ later in
Section \ref{sec:simonanalysis}. Assuming $\va \neq \v{0}$,
$\ket{\vk}_n \otimes \ket{f(\vj)}_n = \ket{\vk}_n \otimes \ket{f(\vj
  \oplus \v{a})}_n$. Let $R$ be a set with the following property: for
every $\vj \in \{0,1\}^n$, $R$ contains either $\vj$ or $\vj \oplus
\v{a}$, but not both. (For the reader familiar with the concept of
quotient sets, $R$ is the quotient set $\{0,1\}^n/\sim$ where $\sim$
is the equivalence relationship defined as: $\v{x} \sim \v{y}$ if and
only if $\v{x} = \v{y} \oplus \v{a}$.) Then, for each $\vk$, the
string $\vk$ appears in the top qubit lines exactly in the $2^{n-1}$
basis states $\ket{\vk}_n \otimes \ket{f(\vj)}_n$ for $\vj \in R$. For
each $\vj \in R$, the coefficient of the basis state $\ket{\vk}_n
\otimes \ket{f(\vj)}_n$ is exactly the sum of the coefficients in
\eqref{eq:simonbasecoeff} for $\ket{\vk}_n \otimes \ket{f(\vj)}_n$ and
$\ket{\vk}_n \otimes \ket{f(\vj \oplus \v{a})}_n$, that is:
\begin{align*}
  \frac{(-1)^{\vk \bullet \vj} + (-1)^{\vk \bullet (\vj \oplus \v{a})}}{2^n} &= \frac{(-1)^{\vk \bullet \vj} + (-1)^{\vk \bullet \vj}(-1)^{\vk \bullet \v{a}}}{2^n} \\
  &=  \frac{(-1)^{\vk \bullet \vj}\left(1 + (-1)^{\vk \bullet \v{a}}\right)}{2^n}.
\end{align*}
Therefore the probability of obtaining the binary string $\vk$ after
measuring the top qubit lines is:
\begin{align*}
  \sum_{\vj \in R} 
  \left(\frac{(-1)^{\vk \bullet \vj}\left(1 + (-1)^{\vk \bullet \v{a}}\right)}{2^n}\right)^2 =
  2^{n-1}\left(\frac{\left(1 + (-1)^{\vk \bullet \v{a}}\right)}{2^n}\right)^2 =
  \begin{cases} \frac{1}{2^{n-1}} & \text{if } \vk \bullet \v{a} \equiv 0 \mod 2 \\
    0 & \text{if } \vk \bullet \v{a} \equiv 1 \mod 2, \end{cases}
\end{align*}
where the multiplication factor $2^{n-1}$ comes from the fact that
$|R| = \frac{2^n}{2}$. Thus, the only binary strings that have
positive probability to be observed are those strings $\vk$ for which
$\vk \bullet \v{a} \equiv 0 \mod 2$. The remaining strings are never
sampled: by carefully applying quantum operations we have reduced
their state coefficients to zero, a phenomenon known as {\em
  destructive interference}. Notice that unless $\vk = \v{0}$, then there
is a nonempty set of bits for which the modulo 2 sum of $\v{a}$ must
vanish. In this case, unless we are unlucky and we obtain the vector
$\vk = \v{0}$ (or some other undesirable cases that will be specified
later), we can express one of such bits as a modulo 2 sum of the
others, and we eliminate half of the possible values for $\v{a}$.

Our discussion shows that with a single quantum query to $U_f$, in the
case $\v{a} \neq \v{0}$ with high probability we learn very valuable
information about $\v{a}$, and we can approximately halve the search space
for $\v{a}$. It now remains to fully specify in a more precise manner how
this information can be used, and how to deal with the case $\v{a} = \v{0}$.

\subsection{Simon's algorithm: description and analysis}
\label{sec:simonanalysis}
The quantum algorithm described in the previous section yields
information on $\va$, but it does not output $\va$ directly. To recover
$\va$, further calculations have to be performed. This is a typical
situation in quantum algorithms: a quantum computation measures some
properties of the desired answer; then, classical computations are
used to analyze these properties and obtain the desired answer. Thus,
even if the quantum algorithm does not explicitly output the desired
answer, it allows us to get closer to our goal.

In the specific case of the problem discussed here, the quantum
computation allows us to learn $\vk$ such that $\vk \bullet \va \equiv
0 \mod 2$. We embed this equation into an algorithm as follows: we
initialize $E$ to the empty set; then, while the system of equations
$E$ does not have a unique solution, we apply the circuit
described in the previous section to obtain $\vk$, and add the
equation $\vk \bullet \va \equiv 0 \mod 2$ to $E$. Notice that $\va =
\v{0}$ is always a solution of the homogeneous system $E$, but we are
interested in the nonzero solutions, if any exist. In other words, we
want to determine if the null space contains any nonzero vector. We
can have two possible situations: either the system has a uniquely
determined nonzero solution $\va \neq \v{0}$, or the only possible
solution is $\v{a} = \v{0}$. Since there are $n$ unknowns and we are
dealing with a homogeneous system, to identify which of these
situations happens we need $E$ to contain $n$ linearly independent
vectors $\vk$, where independence is intended modulo 2. Because at
every iteration we obtain a random $\vk$ for which $\vk \bullet \va
\equiv 0 \mod 2$, we need to analyze how many iterations we need to
obtain $n$ such vectors with high probability.

In continuous space, uniform random sampling of vectors yields
linearly independent vectors with probability 1. In this case we are
considering linear independence among vectors that have coefficients 0
or 1, and independence is in terms of the modulo 2 sum, so the
argument is less clear; however, it is possible to show that the
probability of obtaining $n$ such linearly independent vectors after
sampling $n + t$ times is bounded below by $1 - \frac{1}{2^{t}}$
\cite[Apx.~G]{mermin07quantum}. This lower bound does not depend on
$n$. Hence, with overwhelming probability after slightly more than $n$
executions of the quantum circuit, and therefore $O(n)$ queries to the
function $f$, we determine the solution to the problem with a
classical computation that can be performed in polynomial time (i.e.,
$O(n^2)$ to determine a solution to the system of linear equations
modulo 2). We remark that once the unique nonzero $\va$ is determined,
we can easily verify that it is the solution by querying the
function. On the other hand, if $\va = \v{0}$, the algorithm will
detect that this is the case because at some point the system of
linear equations $E$ will have $\va = \v{0}$ as the only possible
solution (notice that if $\va = \v{0}$, then $\vk \bullet \va \equiv 0
\mod 2$ for any $\vk$). Compare the $O(n)$ queries of this approach
with the $O(2^{n/2})$ queries that are required by a classical
algorithm, and we have shown an exponential speedup.

This algorithm shows a typical feature of many quantum algorithms:
oftentimes, there is a classical computation to complement the quantum
computation. For example, the classical computation could verify that
the correct solution to the problem has indeed been found. In this
case, the verification is carried out by checking whether the system
of equations has a unique solution. Indeed, quantum algorithms are
probabilistic algorithm, and we can only try to increase the
probability that the correct answer is returned; only in rare cases
the solution can be obtained with probability 1, see
e.g.~\cite{brassard2002quantum}. For this reason, it is desirable to
have a way to deterministically verify correctness. This may require a
classical computation. In other words, the quantum algorithm is
applied to a problem for which it is difficult to classicaly compute
the solution, but once the solution (or some information about it) is
obtained, it is easy to classically verify that we have the right
answer. This is not known to be possible in general, since the
complexity class BQP is not known or believed to be contained in NP
(recall that NP is the class of problems admitting an efficient
classical proof that the solution has been found, i.e., a certificate
that can be checked in polynomial time); indeed, it is an active
topic of research to design verification protocols for generic quantum
computations
\cite{broadbent2009universal,aharonov2017interactive,reichardt2013classical,mahadev2018classical}. However,
the quantum algorithms presented in this tutorial will admit simple
classical verification.

\section{Black-box search: Grover's algorithm}
\label{sec:grover}
Simon's algorithm gives an exponential speedup with respect to a
classical algorithm, but it solves a very narrow problem that does not
have practical applications. We now describe an algorithm that gives
only a polynomial -- more specifically, quadratic -- speedup with
respect to classical, but it applies to a very large class of
problems. The algorithm is known as Grover's search
\cite{grover96fast}.

The problem solved by the algorithm can be described as black-box
search: we are given a circuit that computes an unknown function of a
binary string, and we want to determine for which value of the
input the function gives output 1. In other words, we are trying to
determine the unique binary string that satisfies a property encoded
by a circuit. The original paper \cite{grover96fast} describes this as
looking for a certain element in a database. Such an algorithm can be
applied whenever we are searching for a specific element in a set, we
have a way of testing if an element is the desired element (in fact,
this test must be implementable as a quantum subroutine -- see below),
and we do not have enough information to do anything smarter than a
brute force search, i.e., testing all elements in the set.

As mentioned earlier, the basic idea of the algorithm is to start with
the uniform superposition of all basis states, and iteratively
increase the coefficients of basis states that correspond to binary
strings for which the unknown function gives output 1.

We need some definitions. Let $f : \{0,1\}^n \to \{0,1\}$, and assume
that there exists a unique $\v{\ell} \in \{0,1\}^n : f(\v{\ell}) = 1$,
i.e., there is a unique element in the domain of the function that
yields output 1. We want to determine $\v{\ell}$. The function $f$ is
assumed to be encoded by a unitary as follows:
\begin{equation*}
  U_f : \ket{\vj}_n \otimes \ket{y}_1 \to \ket{\vj}_n \otimes \ket{y
    \oplus f(\vj)}_1.
\end{equation*}
As usual, we are allowed to query the function in superposition.
\begin{remark}
  Grover's search can also be applied to the case in which there are
  multiple input values that yield output 1, and we want to retrieve
  any of them; however, the analysis in that case is slightly more
  convoluted, and is not pursued here in detail. By the end of our
  analysis, the reader will have all the necessary tools to study this
  extension.
\end{remark}

\subsection{Classical algorithm}
\label{sec:groverclassical}
Given the problem definition, classical search cannot do better than
$O(2^n)$ operations. Indeed, any deterministic classical algorithm may
need to explore all $2^n$ possible input values before finding $\v{\ell}$:
given any deterministic classical algorithm, there exists a
permutation $\pi$ of $\{0,1\}^n$ that represents the longest
execution path (i.e., sequence of values at which $f$ is queried) of
such algorithm. Then, if $\v{\ell} = \pi(\v{1})$ the algorithm will
require $2^n$ queries to determine the answer, which is clearly the
worst case.

At the same time, a randomized algorithm requires $O(2^n)$ function
calls to have at least a constant positive probability to determine
$\v{\ell}$; the expected number of function calls to determine the
answer is approximately $2^{n-1}$, i.e., the expected number of draws
before we extract the black marble from an urn containing one black
marble and $2^n-1$ white marbles (sampling without replacement; the
corresponding distribution is known as the hypergeometric
distribution).

\subsection{Grover's search: algorithm description}
\label{sec:groverquantum}
The quantum search algorithm proposed in \cite{grover96fast} requires
$q = n+1$ qubits, which is exactly the number of qubits for the
encoding of $U_f$ as defined above.

The outline of the algorithm is as follows. The algorithm starts with
the uniform superposition of all basis states on $n$ qubits. The last
qubit ($n+1$) is used as an auxiliary qubit, and it is initialized to
$H\ket{1}$. We obtain the quantum state $\ket{\psi}$. Then, these
operations are repeated several times:
\begin{enumerate}[(i)]
\item Flip the sign of the vectors for which $U_f$ gives output 1.
\item Invert all the coefficients of the quantum state around the
  average coefficient -- we will explain the precise mapping
  implemented by this operation in Section~\ref{sec:inversionavg}.
\end{enumerate}
A full cycle of the two operations above increases the coefficient of
$\ket{\v{\ell}}_n \otimes \frac{1}{\sqrt{2}}(\ket{0} - \ket{1})$, and
after a certain number of cycles (to be specified later), the
coefficient of the state $\ket{\v{\ell}}_n \otimes
\frac{1}{\sqrt{2}}(\ket{0} - \ket{1})$ is large enough that it can be
obtained from a measurement with probability close to 1. This
phenomenon is known as {\em amplitude amplification}.

\begin{figure}[t!b]
  \centering
  \subfloat[Initialization.]{\includegraphics[width=0.32\textwidth]{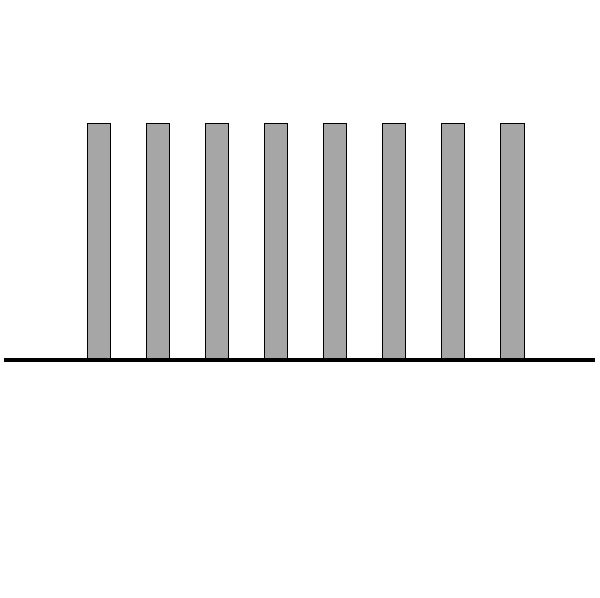}\label{fig:gr1}}\hspace{5em}
  \subfloat[Sign flip.]{\includegraphics[width=0.32\textwidth]{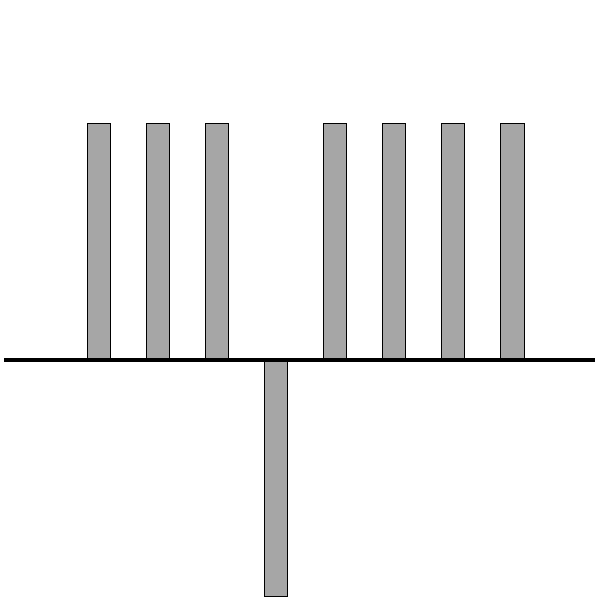}\label{fig:gr2}}\\[2em]
  \subfloat[Computation of the average.]{\includegraphics[width=0.32\textwidth]{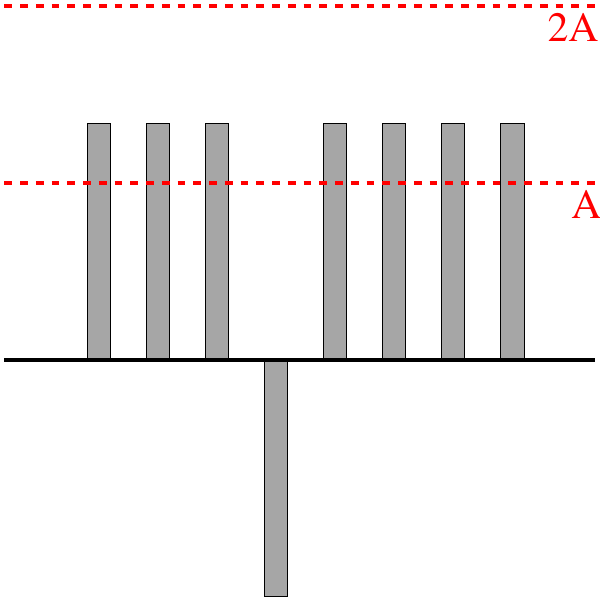}\label{fig:gr3}}\hspace{5em}
  \subfloat[Inversion about the average.]{\includegraphics[width=0.32\textwidth]{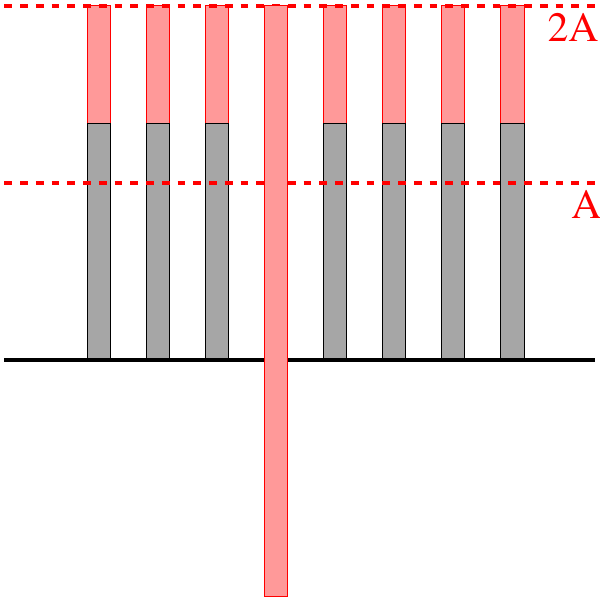}\label{fig:gr4}}
  \caption{Sketch of Grover's algorithm. The bars represent the coefficients of the basis states.}
  \label{fig:grover}
\end{figure}
A sketch of the ideas for the algorithm is depicted in Figure
\ref{fig:grover}: we have eight basis states, and suppose the fourth
basis state is the target basis state $\ket{\v{\ell}}$. The
representation is purely meant to convey intuition, and does not
geometrically represent the vectors encoding the quantum state, but
solely the amplitude of the coefficients. In Figure \ref{fig:gr1}, all
basis states have the same coefficient. In Figure \ref{fig:gr2}, the
coefficient of the target basis state has its sign flipped. In Figure
\ref{fig:gr3}, we can see that the average value of the coefficients
is slightly below the coefficient for the undesired states. Taking
twice the average and subtracting each coefficient now yields the red
bars in Figure \ref{fig:gr4}, where the target basis state
$\ket{\v{\ell}}$ has a coefficient with much larger value than the
rest, and will therefore be measured with higher probability. Of
course, we need to show that these steps can be implemented with
unitary matrices that can be constructed with a polynomial number of
basic gates.

We now describe the steps above in more detail. 

\subsubsection{Initialization}
The algorithm is initialized by applying the operation $H^{\otimes
  (n+1)}(I^{\otimes n}\otimes X)$ onto the state
$\ket{\v{0}}_{n+1}$. We can express the quantum state as follows:
\begin{align*}
  (I^{\otimes n} \otimes X) \ket{\v{0}}_{n+1} &= \ket{\v{0}}_n \otimes \ket{1} \\
  H^{\otimes (n+1)} (I^{\otimes n} \otimes X) \ket{\v{0}}_{n+1} &= 
  \sum_{\vj \in \{0,1\}^n} \frac{1}{\sqrt{2^n}}
  \ket{\vj}_n \otimes \frac{(\ket{0} - \ket{1})}{\sqrt{2}} =
  \sum_{\vj \in \{0,1\}^n} \alpha_{\vj}
  \ket{\vj}_n \otimes \frac{(\ket{0} - \ket{1})}{\sqrt{2}} = \ket{\psi},
\end{align*}
where $\alpha_{\vj} = \frac{1}{\sqrt{2^n}}$. Thus, the initial
coefficients $\alpha_{\vj}$ of the state $\ket{\psi}$ are real
numbers. Since all the other steps of the algorithm will map real
numbers to real numbers, we only need to consider real numbers through
the course of the algorithm.

\subsubsection{Sign flip: step (i)}
To flip the sign of the target state $\ket{\v{\ell}}_n \otimes
\frac{1}{\sqrt{2}}(\ket{0} - \ket{1})$, we apply $U_f$ to
  $\ket{\psi}$. We now show why this works:
\begin{align*}
  U_f \ket{\psi} &= U_f\left(\sum_{\vj \in \{0,1\}^n} 
  \alpha_{\vj} \ket{\vj}_n \otimes \frac{1}{\sqrt{2}}(\ket{0} - \ket{1})\right) \\
  &= \alpha_{\v{\ell}} \ket{\v{\ell}}_n
  \otimes \frac{1}{\sqrt{2}}(\ket{1} - \ket{0}) + \sum_{\substack{\vj \in \{0,1\}^n \\ \vj \neq \v{\ell}}}
  \alpha_{\vj} \ket{\vj}_n \otimes \frac{1}{\sqrt{2}} (\ket{0} - \ket{1}) \\
  &= \left(-\alpha_{\v{\ell}} \ket{\v{\ell}}_n
   + \sum_{\substack{\vj \in \{0,1\}^n \\ \vj \neq \v{\ell}}}
  \alpha_{\vj} \ket{\vj}_n\right) \otimes \frac{1}{\sqrt{2}}(\ket{0} - \ket{1}).
\end{align*}
As the expression above suggests, we can always think of the last
qubit as being in the state $\frac{1}{\sqrt{2}}(\ket{0} - \ket{1})$
and unentangled from the rest of the qubits, with the sign flip
affecting only the first $n$ qubits. Therefore, the state that we
obtain by applying $U_f$ to $\ket{\psi}$ is the same as $\ket{\psi}$
except that the sign of $\ket{\v{\ell}}_n \otimes
\frac{1}{\sqrt{2}}(\ket{0}-\ket{1})$ has been flipped.

\subsubsection{Inversion about the average: step (ii)}
\label{sec:inversionavg}
To perform the inversion about the average, we want to perform the
following operation:
\begin{equation*}
  \sum_{\vj \in \{0,1\}^n} \alpha_{\vj} \ket{\vj}_n \to \sum_{\vj \in
    \{0,1\}^n} \left(2\left(\sum_{\vk \in \{0,1\}^n} \frac{\alpha_{\vk}}{2^n}\right) - \alpha_{\vj}\right) \ket{\vj}_n,
\end{equation*}
where $\sum_{\vk \in \{0,1\}^n} \frac{\alpha_{\vk}}{2^n}$ is the
average, and therefore we are taking twice the average and subtracting
each coefficient from it. It is not clear yet that this is a unitary
operation, but it will become evident in the following. This mapping
is realized by the following matrix:
\begin{equation*}
  W = \begin{pmatrix} \frac{2}{2^n} - 1 & \frac{2}{2^n} & \dots & \frac{2}{2^n} \\
    \frac{2}{2^n} & \frac{2}{2^n} -1 & \dots & \frac{2}{2^n} \\
    \vdots & \vdots & \ddots & \vdots \\
    \frac{2}{2^n} & \frac{2}{2^n} & \dots & \frac{2}{2^n} - 1
    \end{pmatrix} =
  \begin{pmatrix}
    \frac{2}{2^n} & \frac{2}{2^n} & \dots & \frac{2}{2^n} \\
    \frac{2}{2^n} & \frac{2}{2^n} & \dots & \frac{2}{2^n} \\
    \vdots & \vdots  & \ddots & \vdots \\
    \frac{2}{2^n} & \frac{2}{2^n} & \dots & \frac{2}{2^n}
  \end{pmatrix}
  - I^{\otimes n},
\end{equation*}
where the denominator $\frac{1}{2^n}$ computes the average
coefficient, the numerator $2$ of the fraction takes twice the
average, and finally we subtract the identity to subtract each
individual coefficient from twice the average. From the definition of
the Hadamard gate in \eqref{eq:hadamard}, we can see that the entry of
$H^{\otimes n}$ in position $j,k$ is $\left(H^{\otimes n}\right)_{j,k}
= \frac{1}{\sqrt{2^n}}(-1)^{\vj \bullet \vk}$.  But now if we let:
\begin{equation*}
  R = \begin{pmatrix} 2 & 0 & \dots & 0 \\ 0 & 0 & \dots & 0 \\
    \vdots & & \ddots & \vdots \\
    0 & 0 & \dots & 0 \end{pmatrix} \in \R^{2^n \times 2^n},
\end{equation*}
then we can write $(H^{\otimes n}RH^{\otimes n})_{j,k} =
\left(H^{\otimes n}\right)_{j,0}R_{0,0}\left(H^{\otimes
  n}\right)_{0,k} = \frac{2}{2^n}$, because $R_{j,k} = 0$ for $j \neq
0$ or $k \neq 0$. Therefore, using the fact that $H^{\otimes n}H^{\otimes
  n} = I^{\otimes n}$, we have:
\begin{equation}
  \label{eq:Tdecomp}
  \begin{split}
    W &= H^{\otimes n}RH^{\otimes n} - I^{\otimes n} = H^{\otimes n}(R - I^{\otimes n})H^{\otimes n} = -H^{\otimes n}(I^{\otimes n} - R)H^{\otimes n} \\
    &= -H^{\otimes n} \text{diag}(\underbrace{-1,1,\dots,1}_{2^n}) H^{\otimes n} := -H^{\otimes n} D H^{\otimes n}.
  \end{split}
\end{equation}
The expression \eqref{eq:Tdecomp}, besides providing a decomposition for $W$, also shows that $W$ is unitary,
because $H^{\otimes n}$ is unitary (tensor product of unitary
matrices) and $D$ is diagonal with ones on the diagonal. We must find
a way to construct the matrix $D := \text{diag}(-1,1,\dots,1)$. This
will be discussed in the next section. For now, we summarize our
analysis of the inversion about the average by concluding that it can
be performed by applying $W = -H^{\otimes n} D H^{\otimes n}$ to the
$n$ qubits of interest (i.e., all qubits except the auxiliary qubit
that we used for the sign flip of step (i)).

\optbox{
\subsubsection{Constructing the matrix $D$}
\label{sec:matrixd}
We give a sketch of the idea of how to construct $D =
\text{diag}(-1,1,\dots,1)$. Notice that the effect of this quantum
operation is to flip the sign of the coefficient of the basis state
$\ket{\v{0}}_n$, and leave other coefficients untouched. 

Instead of flipping the sign of $\ket{\v{0}}_n$, let us start by
seeing how to flip the sign of $\ket{\v{1}}_n$ while leaving all other
coefficients untouched. Let C${}^{n-1}Z$ be the gate that applies $Z$
to qubit $n$ if qubits $1,\dots,n-1$ are $\ket{1}$, and does nothing
otherwise. This is similar to the CNOT gate, except that it has
multiple controls, and it applies a $Z$ gate rather than an $X$ (i.e.,
NOT) gate when the control qubits are $\ket{1}$. It is called a
``multiply-controlled $Z$''. C${}^{n-1}Z$ in the case of two qubits
($n=2$) is given by the following matrix:
\begin{equation*}
  \text{C}Z = 
  \begin{pmatrix}
    1 & 0 & 0 & 0 \\
    0 & 1 & 0 & 0 \\
    0 & 0 & 1 & 0 \\
    0 & 0 & 0 & -1 \\
  \end{pmatrix}. 
\end{equation*}
Notice that in the two-qubit case ($n = 2$), the two circuits depicted
in Figure \ref{fig:cz} are equivalent: carrying out the matrix
multiplications will confirm that the circuit on the right in Figure
\ref{fig:cz} implements exactly the C$Z$ matrix as defined
above. Thus, the controlled $Z$ gate can be easily realized with a
CNOT and two Hadamard gates. If we have access to the C${}^{n-1}Z$
gate, we can write:
\begin{equation*}
  D = X^{\otimes n} (\text{C}^{n-1}Z) X^{\otimes n},
\end{equation*}
because, as can be easily verified, this operations flips the sign of
the coefficient of a basis state if and only if all qubits have value
$\ket{0}$ in the basis state. In circuit form, it can be written as
depicted in Figure \ref{fig:grovercircuit}. 
}
\begin{figure}[h!]
    \begin{mdframed}[linecolor=gray!20,backgroundcolor=gray!20]
      \leavevmode
      \centering
      \Qcircuit @C=1em @R=.7em {
        & \qw      & \qw      & \ctrl{1}    & \qw & \qw      & \qw \\
        & \qw      & \qw      & \gate{Z}    & \qw & \qw & \qw \\
      }
      \hspace{10em}
      \Qcircuit @C=1em @R=.7em {
        & \qw      & \qw      & \ctrl{1} & \qw & \qw      & \qw \\
        & \gate{H} & \qw      & \targ    & \qw & \gate{H} & \qw \\
      }
      \caption{Controlled Z gate on two qubits: two possible representations.}
      \label{fig:cz}
    \end{mdframed}
\end{figure}
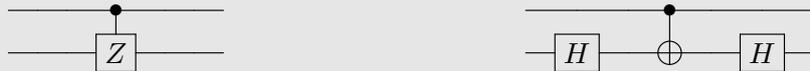
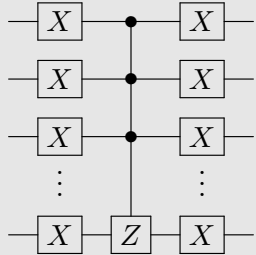
\begin{figure}[h!]
  \begin{mdframed}[linecolor=gray!20,backgroundcolor=gray!20]
    \leavevmode
    \centering
    \Qcircuit @C=1em @R=.7em {
      & \gate{X} & \ctrl{1} & \gate{X} & \qw \\
      & \gate{X} & \ctrl{1} & \gate{X} & \qw \\
      & \gate{X} & \ctrl{3} & \gate{X} & \qw \\
      & \vdots   &          & \vdots   & \\
      &  \\
      & \gate{X} & \gate{Z} & \gate{X} & \qw \\
    }
    \caption{Quantum circuit implementing the $D$ operation used in Grover's algorithm.}
    \label{fig:grovercircuit}
  \end{mdframed}
\end{figure}

\optbox{
Of course, one has to construct the operation
$\text{C}^{n-1}Z$. There are several ways to do so. Perhaps the
simplest construction, suggested in \cite{barenco95elementary}, is to
implement a $\text{C}^{n-2}\text{NOT}$ and a controlled Z gate. The
$\text{C}^{n-2}\text{NOT}$ is actually easy to implement with some
auxiliary qubits, in a construction that will be used in
Section~\ref{sec:code} as well. We show this scheme in Figure
\ref{fig:czdecomposition} with an example for for $n=4$ qubits, but
clearly it can be generalized to an arbitary number of qubits. We
first implement a $\text{C}^{n-2}\text{NOT}$ gate, with an auxiliary
qubit (which is initialized to $\ket{0}$, as one can see from the
bottom qubit in Figure \ref{fig:czdecomposition}) as the target of the
$\text{C}^{n-2}\text{NOT}$. We then implement a $\text{CC}Z$ gate
using a CCNOT and two Hadamard gates on the target qubit; the reader
can easily convince himself or herself that this implements a
doubly-controlled $Z$, using the identity $HXH = Z$ and carrying out
the calculations (in the large unitary matrix for $\text{CC}Z$, the
gate being controlled appears in the bottom right, just as in CNOT).
Summarizing, this yields a decomposition of $\text{C}^{n-1}Z$ with a
linear number of gates and auxiliary qubits. It is possible to forsake
the initialization of the auxiliary qubit, see
\cite{barenco95elementary} for details. To conclude, the construction
of $D$, and therefore of the whole circuit implementing step (ii) of
Grover's search, can be performed in $O(n)$ gates and auxiliary
qubits. We remark that the $-1$ multiplication factor appearing in front
of $H^{\otimes n}$ in 
\eqref{eq:Tdecomp} can be ignored, as it is a global phase factor, see
Example~\ref{ex:globalphase}.}
\begin{figure}[h!]
  \begin{mdframed}[linecolor=gray!20,backgroundcolor=gray!20]
    \leavevmode
    \centering
    \Qcircuit @C=1em @R=1em @!R {
      & \ctrl{4} & \qw      & \qw       & \qw      & \ctrl{4} & \qw \\ 
      & \ctrl{3} & \qw      & \qw       & \qw      & \ctrl{3} & \qw \\
      & \qw      & \qw      & \ctrl{1}  & \qw      & \qw      & \qw \\
      & \qw      & \gate{H} & \targ     & \gate{H} & \qw      & \qw \\
      \lstick{\ket{0}} & \targ    & \qw      & \ctrl{-1} & \qw      & \targ    & \qw & \rstick{\ket{0}}
      \gategroup{3}{3}{5}{5}{2.2em}{--}
    }
    \caption{Decomposition of $\text{C}^{n-1}Z$ for $n=4$. The fifth
      (bottom) qubit is initialized to $\ket{0}$ and is used as
      working space. This implements $\text{C}^3Z$ for the top four
      qubits.}
    \label{fig:czdecomposition}
  \end{mdframed}
\end{figure}
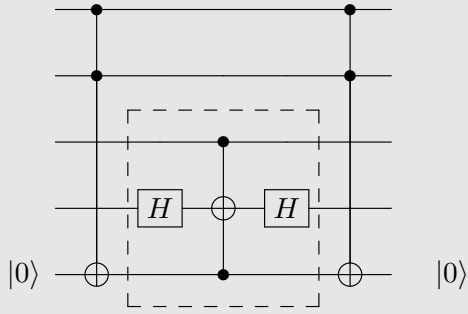

\subsection{Determining the number of iterations}
Let $Q$ be the matrix that applies a single iteration of Grover's
search, consisting of steps (i) and (ii) above. It is paramount to
determine how many iterations should be performed, so that the
coefficient of the desired basis state $\ket{\v{\ell}} \otimes
\frac{1}{\sqrt{2}}(\ket{0} - \ket{1})$ is as large as possible, and
the binary string $\v{\ell}$ is the outcome of a measurement with high
probability. This is what we attempt to do in this section.

Since the last, auxiliary qubit is always in state
$\frac{1}{\sqrt{2}}(\ket{0} - \ket{1})$ and unentangled with the rest,
we can ignore it.  Let
\begin{equation*}
  \ket{\psi_D} = \ket{\v{\ell}}_n, \qquad \ket{\psi_U} =
  \left(\sum_{\substack{\vj \in \{0,1\}^n \\ \vj \neq \v{\ell}}}
  \frac{1}{\sqrt{2^n-1}}\ket{\vj}_n \right)
\end{equation*}
be the desirable and undesirable quantum states, respectively. We
claim that after iteration $k$ of the algorithm, the quantum state can
be expressed as $\ket{\psi_k} = d_k \ket{\psi_D} + u_k
\ket{\psi_U}$. We will show this by induction. Initially, $d_0 =
\frac{1}{\sqrt{2^n}}$ and $u_0 = \sqrt{\frac{2^n - 1}{2^n}}$, where
notice that to obtain $u_0$ from the value of an individual
coefficient in $\ket{\psi_U}$ (all such coefficients are
$\frac{1}{\sqrt{2^n}}$ initially) we have multiplied by $\sqrt{2^n -
  1}$ for normalization. Thus, the claim is true for $k=0$. We now
need to show the induction step: assuming $\ket{\psi_{k-1}} = d_{k-1}
\ket{\psi_D} + u_{k-1} \ket{\psi_U}$, we must show $\ket{\psi_k} = d_k
\ket{\psi_D} + u_k \ket{\psi_U}$.

\optbox{

  The calculations in this part are heavier than usual; if the
  reader is not interested in the details, he or she can simply trust
  the results and skip to the end of this gray box.

  At step (i) of the algorithm, the algorithm flips $d_k \ket{\psi_D}
  + u_k \ket{\psi_U} \to -d_k \ket{\psi_D} + u_k \ket{\psi_U}$.  
  
  At step (ii), the algorithm maps $\alpha_h \to 2A_k - \alpha_h$ for
  each coefficient $\alpha_h$, where $A_k$ is the average
  coefficient. Therefore:
  \begin{align*}
    -\alpha_{\v{\ell}} &\to 2A_k + \alpha_{\v{\ell}} \\
    \alpha_{\v{h}} &\to 2A_k - \alpha_{\v{h}} \quad \forall {\v{h}} \neq \v{\ell}.
  \end{align*}
  To compute $A_k$, we need to determine the value of each individual
  coefficient. The coefficient for $\ket{\v{\ell}}$ is clearly $d_k$,
  as there is only one such state. On the other hand, there are
  $2^n-1$ states with coefficient $u_k$, so the value of the
  coefficient for each of the states $\ket{\vj}, \vj \neq \v{\ell}$ is
  $\frac{u_k}{\sqrt{2^n -1}}$ (the square root is due to
  normalization, see above). The average coefficient at iteration $k$
  is therefore:
  \begin{equation*}
    A_k = \frac{(2^n-1)\frac{1}{\sqrt{2^n - 1}} u_k - d_k}{2^n} = \frac{\sqrt{2^n - 1} u_k - d_k}{2^n}.
  \end{equation*}
  To obtain $u_k$ from $\alpha_{\v{h}}$ we need to multiply by $\sqrt{2^n
    -1}$, so the mapping of step (ii) can be written, overall, as:
  \begin{align*}
    -d_k\ket{\psi_D} + u_k \ket{\psi_U} \to &(2A_k + d_k)\ket{\psi_D} +
    \sqrt{2^n -1}(2A_k - \frac{u_k}{\sqrt{2^n -1}}) \ket{\psi_U} =\\
    &d_{k+1} \ket{\psi_D} + u_{k+1} \ket{\psi_U},
  \end{align*}
  where we defined:
  \begin{align*}
  d_{k+1} &= 2A_k + d_k \\
  u_{k+1} &= 2A_k\sqrt{2^n - 1} - u_k.
  \end{align*}
  This shows the induction step.

  Performing the substitution of $A_k$, we obtain:
  \begin{align*}
    d_{k+1} &= 2\frac{\sqrt{2^n - 1} u_k - d_k}{2^n} + d_k = \left(1 - \frac{1}{2^{n-1}}\right)d_k + \frac{2\sqrt{2^n - 1}}{2^n} u_k \\
    u_{k+1} &= 2\frac{\sqrt{2^n - 1} u_k - d_k}{2^n}\sqrt{2^n - 1} - u_k = -\frac{2\sqrt{2^n - 1}}{2^n} d_k + \left(1-\frac{1}{2^{n-1}}\right)u_k.
  \end{align*}
  This transformation is exactly a clockwise rotation of the vector
  $\begin{pmatrix} d_k \\ u_k \end{pmatrix}$ by a certain angle
  $\theta$, because it has the form $$\begin{pmatrix} \cos \theta & \sin
    \theta \\ -\sin \theta & \cos\theta \end{pmatrix} \begin{pmatrix}
    d_k \\ u_k \end{pmatrix}$$ and it satisfies the relationship $\sin^2
  \theta + \cos^2 \theta = 1$. The angle $\theta$ must satisfy:
  \begin{equation}
    \label{eq:groversintheta}
    \sin \theta = \frac{2\sqrt{2^n - 1}}{2^n}.
  \end{equation}
  Notice that because this value of the sine is very small (for large
  $n$), we can use the approximation $\sin x \approx x$ (when $x$
  is close to 0) to write:
  \begin{equation}
    \label{eq:grovertheta}
    \theta = \frac{2\sqrt{2^n - 1}}{2^n}.
  \end{equation}
}

Overall, the above analysis shows that each iteration performs a
rotation by an angle $\theta$ of the vector $\ket{\psi_k}$, which
always belongs to the plane spanned by $\ket{\psi_D}$ and
$\ket{\psi_U}$. So after $k$ iterations the coefficients $d_k, u_k$
satisfy the following equation:
\begin{equation*}
  \begin{pmatrix} d_k \\ u_k \end{pmatrix} = \begin{pmatrix} \cos \theta & \sin \theta \\ - \sin \theta & \cos \theta \end{pmatrix}^k \begin{pmatrix} d_0 \\ u_0 \end{pmatrix},
\end{equation*}
which can be rewritten as:
\begin{align*}
  d_{k} &= \cos k \theta d_0 + \sin k \theta u_0 \\
  u_{k} &= -\sin k \theta d_0 + \cos k \theta u_0.
\end{align*}
In order to maximize the probability of obtaining $\ket{\psi_D}$ after
a measurement, remember that $|u_0| \gg |d_0|$, so the best choice is
to pick $k \theta = \frac{\pi}{2}$ which yields the largest value of
$|d_k|$. Using \eqref{eq:grovertheta}, the optimal number of
iterations of Grover's search algorithm is:
\begin{equation}
  \label{eq:groveroptiter}
  k \approx \frac{2^{n} \pi}{4\sqrt{2^n - 1}} \approx
  \frac{\pi}{4}\sqrt{2^n}.
\end{equation}
After this many iterations, we have a probability close to 1 of
measuring $\ket{\psi_D}$ and obtaining the sought state
$\ket{\v{\ell}}$. Comparing this with a classical algorithm, that requires
$O(2^n)$ iterations, we obtained a quadratic speedup.
\begin{remark}
  If we perform more iterations of Grover's algorithm than the optimal
  number, the probability of measuring the desired state actually goes
  down, and reduces our chances of success. Therefore, it is important
  to choose the right number of iterations.
\end{remark}
Of course, the approximation for $\theta$ given in
\eqref{eq:grovertheta} is only valid for large $n$: for smaller $n$,
it is better to compute the optimal number of iterations deriving
$\theta$ from \eqref{eq:groversintheta}. We conclude this section by
noticing that in case there are multiple input values on which $f$ has
value 1, we should amend the above analysis adjusting the values for
$d_0$ and $u_0$, but the main steps remain the same.

\section{Numerical implementation of Grover's algorithm}
\label{sec:code}
We conclude the tutorial by showing how to implement Grover's
algorithm using the open-source Python library Qiskit
\cite{qiskit}. One of the advantages of using Qiskit is that the code
is self-explanatory. The library requires Python 3 and can be
installed via {\tt pip}:
\begin{lstlisting}
  pip install qiskit==0.11.1
\end{lstlisting}
The code below was tested with Qiskit 0.11.1; it may work with other
versions as well.

We apply Grover's algorithm to the problem of finding a satisfying
assignment for an instance of the Exactly-1 3-SAT problem, which is
defined as follows:
\vskip1ex
\fbox{ 
  \begin{minipage}{0.9\linewidth}
    {\bf Problem (Exactly-1 3-SAT): Determining a satisfying assignment
      containing one true literal per clause.} \\ Input: SAT formula
    in conjunctive normal form $\bigwedge_{k=1}^m C_k$ over $n$ Boolean
    variables $x_1,\dots,x_n$, with 3 literals per clause
    $C_1,\dots,C_m$. \\ Output: Does there exist an assignment of
    $x_1,\dots,x_n$ such that every clause $C_1,\dots,C_m$ has exactly
    one {\tt True} literal?
  \end{minipage}
}

This problem is NP-hard \cite{garey1972computers}. In our
implementation, an instance of Exactly-1 3-SAT is specified as a list
of clauses, where each clause contains three integers: a positive
integer is the index of a positive literal, a negative integer is the
index of a negative literal. For example, the Python list of lists
\begin{equation*}
  [[1, 2, -3], [-1, -2, -3], [-1, 2, 3]]
\end{equation*}
represents the instance:
\begin{equation}
  \label{eq:satinstance}
  (x_1 \bigtriangledown x_2 \bigtriangledown \neg x_3) \land (\neg x_1
  \bigtriangledown \neg x_2 \bigtriangledown \neg x_3) \land (\neg x_1 \bigtriangledown x_2 \bigtriangledown x_3).
\end{equation}
We use this formula in the example given below. We use the symbol
$\bigtriangledown$ rather than the usual $\vee$ (normally used to
indicate the logical OR) to emphasize that this is not a regular 3-SAT
formula, but an Exactly-1 3-SAT formula: the problem definition
requires {\em exactly} one True literal per clause.

In the implementation presented in this section we only allow up to
three boolean variables and three clauses, i.e., $n \le 3, m \le
3$. At the end of the section we will discuss how to generalize the
implementation to allow an arbitrary number of clauses, which is left
as an exercise. Notice that the suggested approach requires additional
qubits for each clause. The code presented here for $n \le 3, m \le 3$
yields a circuit with at most 8 qubits, that can be simulated on a
laptop in 1-2 minutes in most cases. Further clauses would require
additional qubits and slow down the simulation of the circuit (roughly
by a factor of 2 for each additional qubit).

To apply Grover's algorithm we will use three basic subroutines: a
subroutine to construct the initial state, a subroutine to compute the
unitary $U_f$ implementing the black-box function $f$, and a
subroutine to perform the inversion about the average. We will discuss
them in order.

\subsection{Initial state}
Before we construct the initial state, let us give names to some of
the quantum registers (i.e., groups of quantum qubits) that we
need. Grover's algorithm applies to a function with an $n$-qubit input
and a single-qubit output. We call {\tt f\_in} the input register of
$U_f$, of size $n$, and {\tt f\_out} the output register of $U_f$, of
size 1.  We construct the initial state as follows.
\begin{lstlisting}[language=Python]
def input_state(circuit, f_in, f_out, n):
    """(n+1)-qubit input state for Grover search."""
    for j in range(n):
        circuit.h(f_in[j])
    circuit.x(f_out)
    circuit.h(f_out)
\end{lstlisting}
This is equivalent to the circuit given in Figure \ref{fig:groverinit}.
\begin{figure}[h!tb]
\leavevmode
\centering
\Qcircuit @C=1em @R=0.7em {
\lstick{{\tt f\_in} \quad \ket{0}_n} & \qw & {/^n} \qw & \gate{H^{\otimes n}} & \qw & {/^n} \qw & \qw \\
\lstick{{\tt f\_out} \quad \ket{0}_1} & \qw & \gate{X}  & \gate{H}             & \qw & \qw       & \qw \\
}
\caption{Initialization step of Grover's algorithm.}
\label{fig:groverinit}
\end{figure}
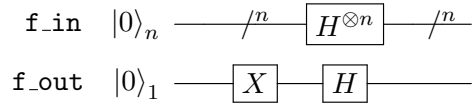

\subsection{Black-box function $U_f$}
Implementing $U_f$ for the Exactly-1 3-SAT problem is the most complex
part of the code, and it can be done in several ways. To favor
simplicity, we decompose the problem of computing $U_f$ by introducing
$m$ auxiliary qubits (these are often called ``ancillas'' in the
quantum computing literature), one for each clause. For each clause we
construct a circuit that bit-flips the corresponding auxiliary qubit
if and only if the clause has exactly one true literal (these
auxiliary qubits will be initialized in state $\ket{0}$). Finally, we
apply a bit-flip on the output register of $U_f$ if and only if all
$m$ auxiliary qubits are $1$.  In Figure \ref{fig:groverufblock} we
show a circuit that bit-flips the bottom qubit $y$ if the clause $x_1
\bigtriangledown \neg x_2 \bigtriangledown x_3$ is satisfied.
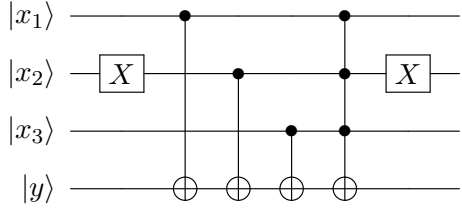
\begin{figure}[h!tb]
\leavevmode
\centering
\Qcircuit @C=1em @R=.7em @!R {
\lstick{\ket{x_1}} & \qw      & \ctrl{3} & \qw      & \qw      & \ctrl{1} & \qw      & \qw \\
\lstick{\ket{x_2}} & \gate{X} & \qw      & \ctrl{2} & \qw      & \ctrl{1} & \gate{X} & \qw \\
\lstick{\ket{x_3}} & \qw      & \qw      & \qw      & \ctrl{1} & \ctrl{1} & \qw      & \qw \\
\lstick{\ket{y}} &   \qw    & \targ      & \targ    & \targ    & \targ    & \qw      & \qw \\
}
\caption{Quantum circuit bit-flipping the bottom qubit if the clause
  $x_1 \bigtriangledown \neg x_2 \bigtriangledown x_3$ is satisfied by exactly one literal.}
\label{fig:groverufblock}
\end{figure}
The idea is as follows. The $X$ gate bit-flips the qubit $\ket{x_2}$,
since $x_2$ appears negated in the clause. Using three CNOT gates, we
set $\ket{y} = \ket{y \oplus x_1 \oplus \neg x_2 \oplus x_3}$,
implying that $y$ is bit-flipped if an odd number of literals is
satisfied. We use a triply-controlled NOT gate to finally obtain
$\ket{y} = \ket{y \oplus x_1 \oplus \neg x_2 \oplus x_3 \oplus (x_1
  \land \neg x_2 \land x_3)}$, as desired. To set $\ket{y} = \ket{1}$
if and only if exactly one literal is satisfied, it is enough to
assume that $\ket{y}$ starts in $\ket{0}$ state. The final $X$ gate
resets the state of qubit $\ket{x_2}$.

To implement this circuit in Qiskit, there is a small obstacle: the
triply-controlled NOT gate is not part of the basic gate set. We can
implement it with a strategy similar to what is discussed in Section
\ref{sec:matrixd} for the $\text{C}^{n-1}Z$ gate: we use
three CCNOT gates and one auxiliary qubit, implementing the circuit in
Figure \ref{fig:cccnot}. This has the drawback of requiring an extra
qubit, but it is easy to understand. We remark that CCNOT, while not
part of the basic gate set of Qiskit, is defined as a macro and
therefore can be used as if it were part of the basic gate set: the
software will automatically perform the substitution, using the
circuit of Figure~\ref{fig:ccnotbasic} to implement CCNOT.

We can quickly verify that the circuit in Figure~\ref{fig:cccnot}
bit-flips $\ket{q_5}$ if and only if $\ket{q_1}, \ket{q_2}, \ket{q_3}$
are 1; the final CCNOT resets the state of the auxiliary qubit
$\ket{q_4}$, which is assumed to be initialized at $\ket{0}$ and is
left in state $\ket{0}$.
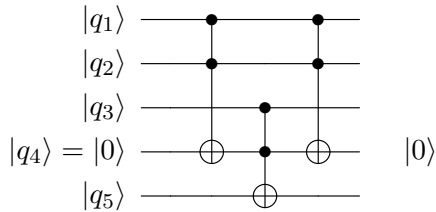
\begin{figure}[h!tb]
\leavevmode
\centering
\Qcircuit @C=1em @R=.7em @!R {
\lstick{\ket{q_1}} & \qw      & \ctrl{1} & \qw      & \ctrl{1} & \qw \\
\lstick{\ket{q_2}} & \qw      & \ctrl{2} & \qw      & \ctrl{2} & \qw \\
\lstick{\ket{q_3}} & \qw      & \qw      & \ctrl{1} & \qw      & \qw \\
\lstick{\ket{q_4} = \ket{0}} & \qw      & \targ    & \ctrl{1} & \targ    & \qw & \rstick{\ket{0}} \\
\lstick{\ket{q_5}} & \qw      & \qw      & \targ    & \qw      & \qw \\
}
\caption{A possible implementation of a triply-controlled NOT gate.}
\label{fig:cccnot}
\end{figure}

The implementation of $U_f$ then proceeds as follows. We loop over the
clauses, using index $k=0,\dots,m-1$ (as it often happens, in this
tutorial we use ``mathematical language'' and the clauses are numbered
$k=1,\dots,m$, but in the Python implementation the corresponding
array is indexed $k=0,\dots,m-1$). For each clause we implement the
circuit in Figure \ref{fig:groverufblock}, setting the auxiliary qubit
{\tt aux[k]} to 1 if clause $C_{k+1}$ is satisfied. We then perform a
multiply-controlled NOT operation to ensure that the output register
{\tt f\_out} is bit-flipped if all $m$ auxiliary qubits are
$1$. Finally, we run the same circuit in reverse to reset the state of
the auxiliary qubits.

\begin{lstlisting}[language=Python]
def black_box_u_f(circuit, f_in, f_out, aux, n, exactly_1_3_sat_formula):
    """Circuit that computes the black-box function from f_in to f_out.

    Create a circuit that verifies whether a given exactly-1 3-SAT
    formula is satisfied by the input. The exactly-1 version
    requires exactly one literal out of every clause to be satisfied.
    """
    num_clauses = len(exactly_1_3_sat_formula)
    if (num_clauses > 3):
        raise ValueError('We only allow at most 3 clauses')
    for (k, clause) in enumerate(exactly_1_3_sat_formula):
        # This loop ensures aux[k] is 1 if an odd number of literals
        # are true
        for literal in clause:
            if literal > 0:
                circuit.cx(f_in[literal-1], aux[k])
            else:
                circuit.x(f_in[-literal-1])
                circuit.cx(f_in[-literal-1], aux[k])
        # Flip aux[k] if all literals are true, using auxiliary qubit
        # (ancilla) aux[num_clauses]
        circuit.ccx(f_in[0], f_in[1], aux[num_clauses])
        circuit.ccx(f_in[2], aux[num_clauses], aux[k])
        # Flip back to reverse state of negative literals and ancilla
        circuit.ccx(f_in[0], f_in[1], aux[num_clauses])
        for literal in clause:
            if literal < 0:
                circuit.x(f_in[-literal-1])
    # The formula is satisfied if and only if all auxiliary qubits
    # except aux[num_clauses] are 1
    if (num_clauses == 1):
        circuit.cx(aux[0], f_out[0])
    elif (num_clauses == 2):
        circuit.ccx(aux[0], aux[1], f_out[0])
    elif (num_clauses == 3):
        circuit.ccx(aux[0], aux[1], aux[num_clauses])
        circuit.ccx(aux[2], aux[num_clauses], f_out[0])
        circuit.ccx(aux[0], aux[1], aux[num_clauses])
    # Flip back any auxiliary qubits to make sure state is consistent
    # for future executions of this routine; same loop as above.
    for (k, clause) in enumerate(exactly_1_3_sat_formula):
        for literal in clause:
            if literal > 0:
                circuit.cx(f_in[literal-1], aux[k])
            else:
                circuit.x(f_in[-literal-1])
                circuit.cx(f_in[-literal-1], aux[k])
        circuit.ccx(f_in[0], f_in[1], aux[num_clauses])
        circuit.ccx(f_in[2], aux[num_clauses], aux[k])
        circuit.ccx(f_in[0], f_in[1], aux[num_clauses])
        for literal in clause:
            if literal < 0:
                circuit.x(f_in[-literal-1])    
\end{lstlisting}

\subsection{Inversion about the average}
The inversion about the average is discussed in Sections
\ref{sec:inversionavg}-\ref{sec:matrixd}. This can be implemented as
follows.

\begin{lstlisting}[language=Python]
def inversion_about_average(circuit, f_in, n):
    """Apply inversion about the average step of Grover's algorithm."""
    # Hadamards everywhere
    for j in range(n):
        circuit.h(f_in[j])
    # D matrix: flips the sign of the state |00...00> only
    for j in range(n):
        circuit.x(f_in[j])
    n_controlled_Z(circuit, [f_in[j] for j in range(n-1)], f_in[n-1])
    for j in range(n):
        circuit.x(f_in[j])
    # Hadamards everywhere again
    for j in range(n):
        circuit.h(f_in[j])
\end{lstlisting}

\vskip1ex

The above routine requires a $\text{C}^{n-1}Z$ gate; we implement it,
for three qubits, using a CCNOT gate and two Hadamards. We raise an
exception if there are more than two controls, which are not necessary
for this example.
\begin{lstlisting}[language=Python]
def n_controlled_Z(circuit, controls, target):
    """Implement a Z gate with multiple controls"""
    if (len(controls) > 2):
        raise ValueError('The controlled Z with more than 2 ' +
                         'controls is not implemented')
    elif (len(controls) == 1):
        circuit.h(target)
        circuit.cx(controls[0], target)
        circuit.h(target)
    elif (len(controls) == 2):
        circuit.h(target)
        circuit.ccx(controls[0], controls[1], target)
        circuit.h(target)  
\end{lstlisting}

\subsection{Putting everything together}
To run Grover's algorithm, we have to initialize a quantum circuit
with the desired number of qubits, and apply the routines described
above. Notice that for three qubits the optimal number of Grover
iterations is two, see \eqref{eq:groveroptiter} (the $\sin \theta$
approximation is not accurate for $n=3$, but doing the calculations
more carefully, we can verify that two iterations is still
optimal). Including the small, necessary setup to initialize a quantum
circuit with Qiskit (``classical registers'' are used to store the
result of a measurement), this results in the following code:

\begin{lstlisting}
import sys
from qiskit import QuantumRegister, ClassicalRegister, QuantumCircuit
from qiskit import compiler, Aer
from qiskit.tools import visualization

# Make a quantum program for the n-bit Grover search.
n = 3
# Exactly-1 3-SAT formula to be satisfied, in conjunctive
# normal form. We represent literals with integers, positive or
# negative to indicate a boolean variable or its negation.
exactly_1_3_sat_formula = [[1, 2, -3], [-1, -2, -3], [-1, 2, 3]]

# Define three quantum registers: 'f_in' is the search space (input
# to the function f), 'f_out' is bit used for the output of function
# f, aux are the auxiliary bits used by f to perform its
# computation.
f_in = QuantumRegister(n)
f_out = QuantumRegister(1)
aux = QuantumRegister(len(exactly_1_3_sat_formula) + 1)
# One classical register to store the result of a measurement
ans = ClassicalRegister(n)
# Create quantum circuit with the quantum and classical registers
# defined above
qc = QuantumCircuit(f_in, f_out, aux, ans, name='grover')

input_state(qc, f_in, f_out, n)
# Apply two full iterations
black_box_u_f(qc, f_in, f_out, aux, n, exactly_1_3_sat_formula)
inversion_about_average(qc, f_in, n)
black_box_u_f(qc, f_in, f_out, aux, n, exactly_1_3_sat_formula)
inversion_about_average(qc, f_in, n)
# Measure the output register in the computational basis
for j in range(n):
    qc.measure(f_in[j], ans[j])

# Create an instance of the local quantum simulator    
quantum_simulator = Aer.get_backend('qasm_simulator')
# Compile the circuit into "quantum object code" that can be
# executed on the simulator
qobj = compiler.assemble(qc, quantum_simulator, shots=2048)
# Execute and store the results. Note that this could take some
# time (up to a few minutes, depending on the machine)
job = quantum_simulator.run(qobj)
result = job.result()
# Get counts
counts = result.get_counts('grover')
print('Observed measurement outcomes:')
print('string | count')
for key in sorted(counts):
    print(' {:>5s}   {:d}'.format(key, counts[key]))

# Plot histogram        
figure = visualization.plot_histogram(counts)
print()
# We can display the histogram with figure.show() if matplotlib is
# properly configured. Instead, we write it to file.
figure.savefig('groverhist.png')
print('Histogram saved as groverhist.png')
\end{lstlisting}

\vskip1ex That's it! We have successfully executed Grover's
algorithm. The resulting histogram is given in Figure
\ref{fig:groverhist}: with only two calls to the black-box function
$U_f$, we sample the correct string 101 (the only solution of
\eqref{eq:satinstance}) with probability $\approx 95\%$. The argument
      {\tt shots} given to the {\tt assemble()} function determines the
      number of samples extracted from the quantum state, i.e., the
      quantum circuit is executed that many times, each time
      performing a measurement (and therefore potentially obtaining
      different outcomes).

\begin{figure}[tb]
  \centering
  \includegraphics[width=0.7\textwidth]{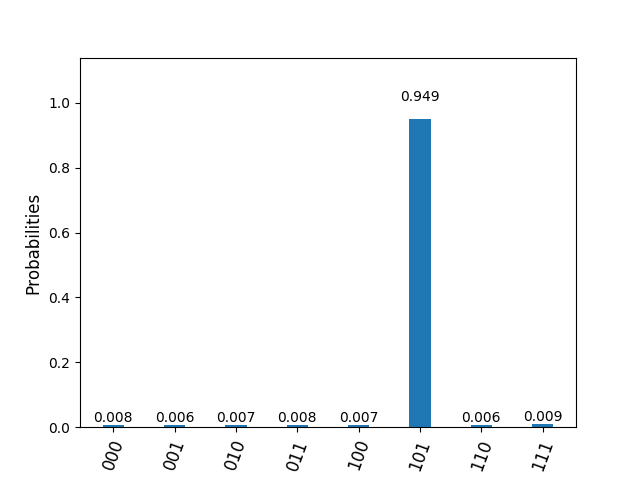}
  \caption{Histogram of output probabilities.}
  \label{fig:groverhist}
\end{figure}

The Qiskit allows running experiments on real quantum computing
hardware accessible on the cloud via the IBM Q experience, changing
the backend object used to run the experiment. In this example, we
used a classical simulation of the quantum computer executed locally
(via the {\tt qasm\_simulator} backend); such a simulation is only
able to scale up to a dozen qubits or so (a regular laptop should be
able to simulate $\approx 18$ qubits, but the computation can be 
slow after 12-14 qubits).

To run experiments on a real device, one first needs an account on the
IBM Q experience, after which {\tt IBMQ} can be loaded from Qiskit and
one can access the list of available remote backends --- which includes
some of IBM's devices. More detailed instructions are available on
Qiskit's webpage. We should also remark that the current generation of
quantum computing devices has limited qubit connectivity, i.e., it
only allows the application of two-qubit gates between certain pairs
of qubits. The code listed in this tutorial assumes an all-to-all
connectivity, which makes for much simpler code. The interested reader
can modify the code to allow for limited connectivity, using SWAP
gates when necessary.

Another extension of the code that we leave as an exercise (and is
perhaps simpler than the above exercise) is to allow an arbitrary
number of boolean variables and clauses in the Exactly-1 3-SAT
formula. For this extension, we need a controlled NOT with an
arbitrary number of controls. Such a gate will be used in two places:
first, to implement the function $U_f$ (in our approach, we use one
auxiliary qubit per clause and a final controlled NOT operation to
check that all clauses are $\ket{1}$); second, to implement the
$n$-controlled Z. One way to implement a controlled NOT with an
arbitrary number of controls is to apply the same idea as the circuit
in Figure~\ref{fig:cccnot} for the triply-controlled NOT. This will
require extra qubits, which will slow down the classical simulation of
the resulting circuit.

\section{Further reading}
\label{sec:furtherreading}
In this tutorial we use several notational devices to help the reader,
but they are usually not employed in the quantum computing
literature. We list them here.
\begin{itemize}
\item The subscript for bra-ket vectors to indicate the dimension of
  the space, e.g., $\ket{\psi}_q$ for $2^q$-dimensional
  vectors. Typically, the dimension of the space is defined elsewhere
  and can be understood from the context. Whenever subscripts for kets
  are used, it is normally to address registers.
\item The vector arrow, e.g., $\vj$, to indicate binary
  strings. Typically, binary strings are not distinguished from other
  mathematical symbols and they can be identified from the context.
\item The use of Roman letters for basis vectors and Greek letters for
  general, i.e., possibly not basis, vectors.
\item The notation for the probability of observing the measurement
  outcomes: this tutorial makes explicit the qubit(s) that is (are)
  being measured, but typically this is only defined by the context.
\end{itemize}
Finally, the all-zero binary string of dimension $q$ is normally
denoted $0^q$, rather than $\v{0}$.

We end this tutorial with some pointers to papers describing quantum
algorithm that provide a speedup with respect to classical
computation, in various areas of computing. This list is by no means
exhaustive: it merely serves the purpose of giving the reader some
idea about existing work showcasing the power of quantum
computation. Additional references can be found from the papers listed
below, or looking up on the arXiv, where most of the quantum
literature is available.

The first article that we mention is of course Shor's paper on integer
factorization \cite{shor97polynomial}. Shor's algorithm determines the
prime factors of an integer using a combination of a classical
algorithm and a quantum subroutine for the following period-finding
problem: given integers $a, n$ with $a < n$, find the smallest integer
$r$ such that $a^x \mod n = a^{x+r} \mod n$, for all $x$. We call this
``period finding'' because it can be seen as determining the period of
the function $f(x) = a^x \mod n$.

For readers who are familiar with topology, the work by Aharonov,
Jones and Landau on the approximation of Jones polynomial
\cite{aharonov2009polynomial} will be of great interest. The Jones
polynomial is an invariant of an oriented knot. It is known that
exactly evaluating the Jones polynomial is very difficult: it is a
\#P-hard problem, and it is not expected to admit a polynomial-time
classical algorithm. The paper \cite{aharonov2009polynomial} shows how
to approximately solve the problem in polynomial time using a quantum
computer --- an easier task because of the approximation, but one for
which no classical algorithm is known. Furthermore, computing such an
approximation is a BQP-complete problem, i.e., it solves a problem
that is as hard as any other problem that admits an efficient quantum
algorithm. It is believed that BQP-complete problems cannot be solved
efficiently on classical computers, because that would imply BQP = P,
which in turn implies that there exist a polynomial-time classical
algorithm for integer factorization. Thus, surprisingly the
computational power of quantum computing can be rephrased in terms of
the ability to approximate this topological quantity.

Another BQP-complete problem that received a lot of attention is
Hamiltonian simulation. This is the problem of simulating the time
evolution of a quantum system, for which quantum computers where
originally proposed \cite{feynman1982simulating}: it is known since
the early days that quantum computer can efficiently solve it
\cite{lloyd1996universal}, whereas all known classical algorithms
require exponential time. There are many papers on this topic; we
refer to the very recent \cite{haah2018quantum} for an entry point
with a good list of references.

Exponential speedups can also be obtained for easier-to-describe
problems. As an example, we mention the work of Harrow, Hassidim and
Loyd on the solution of linear systems \cite{harrow2009quantum} in
logarithmic time (although attaining logarithmic time requires several
assumptions on the input and the output of the algorithm, because
parsing the equations in the linear system trivially requires linear
time already), and several papers discussing quantum acceleration for
random walks on graph \cite{childs2003exponential}, which find many
applications in computer science, e.g., \cite{ambainis2007quantum}.

\subsection*{Acknowledgments}
We are extremely grateful to two anonymous referees, whose patience
and numerous detailed remarks on an earlier draft of this manuscript
helped significantly improve its quality, and to Sergey Bravyi for
many illuminating discussions.

\bibliographystyle{apalike}
\bibliography{quantum}

\end{document}